\newcommand{\jewel}{\text{\faDiamond}}
\newcommand{\cuboid}{\text{\faCube}}
\renewcommand{\epsilon}{\varepsilon}
\newtheorem*{rep@theorem}{\rep@title}
\newcommand{\newreptheorem}[2]{%
\newenvironment{rep#1}[1]{%
 \def\rep@title{#2 \ref{##1}}%
 \begin{rep@theorem}}%
 {\end{rep@theorem}}}
\newtheorem{thm}{Theorem}[section]
\newtheorem*{thm*}{Theorem}
\newtheorem{lem}[thm]{Lemma}
\newtheorem{cor}[thm]{Corollary}
\newtheorem{defi}[thm]{Definition}
\newtheorem{prop}[thm]{Proposition}
\newtheorem{remark}[thm]{Remark}
\begin{document}

\author{Andreas Bluhm}
\email{andreas.bluhm@ma.tum.de}
\address{Zentrum Mathematik, Technische Universit\"at M\"unchen, Boltzmannstrasse 3, 85748 Garching, Germany}
\address{QMATH, Department of Mathematical Sciences, University of Copenhagen, Universitetsparken 5, 2100 Copenhagen, Denmark}

\author{Ion Nechita}
\email{nechita@irsamc.ups-tlse.fr}
\address{Laboratoire de Physique Th\'eorique, Universit\'e de Toulouse, CNRS, UPS, France}
\title[Compatibility and the matrix jewel]{Compatibility of quantum measurements\\and inclusion constants for the matrix jewel}

\begin{abstract}
In this work, we establish the connection between the study of free spectrahedra and the compatibility of quantum measurements with an arbitrary number of outcomes. This generalizes previous results by the authors for measurements with two outcomes. Free spectrahedra arise from matricial relaxations of linear matrix inequalities. A particular free spectrahedron which we define in this work is the matrix jewel. We find that the compatibility of arbitrary measurements corresponds to the inclusion of the matrix jewel into a free spectrahedron defined by the effect operators of the measurements under study. We subsequently use this connection to bound the set of (asymmetric) inclusion constants for the matrix jewel using results from quantum information theory and symmetrization. The latter translate to new lower bounds on the compatibility of quantum measurements. Among the techniques we employ are approximate quantum cloning and mutually unbiased bases. 
\end{abstract}

\date{\today}

\maketitle

\tableofcontents

\section{Introduction}

Given the solution set of a linear matrix inequality, the question often arises whether the unit cube is contained in this set (see Section 1.5 of \cite{helton2019dilations} and references therein). However, this problem, which is known as the matrix cube problem, is known to be NP-hard \cite{ben-tal2002tractable}. Fortunately, there exists a tractable relaxation of this problem which checks inclusion of corresponding free spectrahedra, which are matricial relaxations of the original sets  \cite{ben-tal2002tractable, helton_matricial_2013}. To give error bounds for this relaxation, it is necessary to know the following: if inclusion for the original spectrahedra holds, how much do we have to shrink the smaller free spectrahedron such that inclusion also holds at the level of free spectrahedra? For the matrix cube, as well as for unit balls of $\ell_p$ spaces and other highly symmetric convex sets, these inclusion constants have been recently studied \cite{helton2019dilations,davidson2016dilations,passer2018minimal}. 

Recently, the authors have found that the inclusion constants for the free spectrahedral relaxation of the $\ell_1$-ball, the matrix diamond \cite{davidson2016dilations}, are relevant for the joint measurability of binary quantum measurements \cite{bluhm2018joint}. The fact that not all observables can be measured at the same time is one of the most remarkable properties of quantum mechanics, the observables of position and momentum providing the best-known example of this behavior \cite{Heisenberg1927, Bohr1928}. The notion of joint measurability (or compatibility) has been introduced to capture this property of non-classical theories (see \cite{Heinosaari2016} for a review). In this work, we model quantum measurements by Positive Operator Valued Measures (POVMs), see \cite[Section 3.1]{Heinosaari2011}. POVMs are jointly measurable
 if they arise as marginals from a common measurement. This property is of practical interest, since only POVMs which are not jointly measurable can violate Bell inequalities \cite{Fine1982} or can be used for some quantum information tasks \cite{Brunner2014}.

The present work continues the line of research started in \cite{bluhm2018joint}. While the previous work focused on measurements with only two outcomes, we establish here the connection between the joint measurability of POVMs with an arbitrary number of outcomes and the inclusion of the \emph{matrix jewel}. The matrix jewel is a free spectrahedron which generalizes the matrix diamond and is introduced in this work. We can subsequently use this connection to translate results on joint measurability into bounds on the inclusion constants for the matrix jewel. Some of the techniques used involve approximate cloning of quantum states and mutually unbiased bases. Moreover, we compare the matrix jewel to more symmetric free spectrahedra such as the matrix diamond to obtain lower bounds on the inclusion constants of the matrix jewel. These translate to new bounds on the compatibility of quantum measurements.

We also introduce the notion of \emph{incompatibility witnesses}, which are tuples of self-adjoint matrices that allow, in a simple way, to show that some POVMs are not compatible (the terminology is borrowed from entanglement theory). 

The paper is organized as follows. After presenting informally our main results in Section \ref{sec:main-results}, we recall in Section \ref{sec:preliminaries} some facts from (matricial) convexity theory and quantum information theory; we also introduce at that point a new operation on matrix convex sets, the direct sum. Sections \ref{sec:jewel} and \ref{sec:jewel-compatibility} are the core of the paper: we introduce the matrix jewel and we relate its inclusion properties to compatibility of POVMs. In Section \ref{sec:compatibility-from-QIT}, we use several results from quantum information theory and symmetrization to give lower and upper bounds on the inclusion sets of the matrix jewel. In Sections \ref{sec:incompatibility-witnesses-binary} and \ref{sec:incompatibility-witnesses-general} we develop the theory of incompatibility witnesses. The final section contains a review of our main contributions, as well as some open questions and future research directions.

\section{Main results}\label{sec:main-results}

In this section, we will review the main results of the present work. It is a follow-up paper on the work undertaken in \cite{bluhm2018joint}. We continue investigating the connection between free spectrahedral inclusion problems and joint measurability of quantum effects.

Quantum measurements are identified with \emph{positive operator valued measures} (POVMs). Those are $k$-tuples of positive semidefinite matrices of fixed dimension which sum to the identity. Here, $k$ is the number of measurement outcomes the quantum measurement has. Given a $g$-tuple of POVMs $E^{(1)}, \ldots, E^{(g)}$, where the $i$-th POVM has $k_i$ outcomes, we can ask the question whether these POVMs are \emph{jointly measurable}. Joint measurability means that there is a joint POVM $\Set{G_{j_1, \ldots, j_g}}$ with $j_i \in [k_i]$ from which the POVMs $E^{(i)}$ arise as marginals. Here, we write $[n]$ for the set $\{1, \ldots, n\}$, $n \in \mathbb N$. Although not all measurements in quantum theory are compatible, they can be made compatible if we add a sufficient amount of noise. In this work, we focus on balanced noise, i.e.\ the elements of the $i$-th POVM become
\begin{equation}\label{eq:tilde-Ej}
\tilde{E}_j^{(i)} = s_i E_j^{(i)} + (1-s_i)\frac{1}{k_i}I,
\end{equation}
where $s_i \in [0,1]$. This means, that with probability $s_i$ we measure the original POVM $E^{(i)}$ whereas with probability $1 - s_i$, we output a measurement outcome uniformly at random, independent of the system under study. The set of $g$-tuples $s$ with the property that, for any $g$-tuple of $d$-dimensional POVMs $E^{(i)}$ with $k_i$ outcomes, the noisy POVMs $\tilde E^{(i)}$ from \eqref{eq:tilde-Ej} are compatible, will be written as $\Gamma(g,d,(k_1,\ldots,k_g))$, and will be called the \emph{balanced compatibility region}.

A \emph{free spectrahedron} is a special type of matrix convex set which arises as matricial relaxation of an ordinary linear matrix inequality. The free spectrahedron $\mathcal D_A$ for the self-adjoint matrix $g$-tuple $A$ is the set of self-adjoint matrix $g$-tuples $X$ of arbitrary dimension which fulfill the   matrix inequality
\begin{equation*}
\sum_{i = 1}^g A_i \otimes X_i \leq I.
\end{equation*}
For scalar $X$, we recover the solution set $\mathcal D_A(1)$ of the linear matrix inequality defined by $A$. The free spectrahedral inclusion problem is to determine for which $s \in \mathbb R^{g}_+$ the implication
\begin{equation} \label{eq:implication}
\mathcal D_A(1) \subseteq \mathcal D_B(1) \implies s \cdot \mathcal D_A \subseteq D_B
\end{equation}
is true. Here, we denote by $s \cdot \mathcal D_A$ the set $\{(s_1 X_1, \ldots, s_g X_g): X \in \mathcal D_A\}$. We will be interested in the case where the object on the left hand side is the \emph{matrix jewel}. Consider the free spectrahedron given by the diagonal matrices $\mathrm{diag}[v_j]$, $j \in [k-1]$, where
\begin{equation*}
v_j(\epsilon) = -\frac{2}{k} + 2 \delta_{\epsilon, j} \qquad \forall \epsilon \in [k].
\end{equation*}
We call this spectrahedron the matrix jewel base $\mathcal D_{\jewel,k}$. The matrix jewel $\mathcal D_{\jewel,(k_1, \ldots,k_g)}$ is then the direct sum of the $\mathcal D_{\jewel,k_i}$. We define the direct sum of free spectrahedra arising from polytopes as the maximal spectrahedron which has the direct sum of these polytopes at the scalar level. The precise definition of the matrix jewel can be found in Definition \ref{def:jewel}.
The matrix jewel is a generalization of the matrix diamond introduced in \cite{davidson2016dilations} and considered in relation to quantum effect compatibility in  \cite{bluhm2018joint}. We are interested in the vectors of the form $s = (s_1^{\times (k_1-1)}, \ldots, s_g^{\times (k_g-1)})$ for which the implication in Equation \eqref{eq:implication} is true for $\mathcal D_A = \mathcal D_{\jewel, (k_1,\ldots, k_g)}$ and any self-adjoint tuple $B$ on the right hand side; we are using the notation
$$(s_1^{\times (k_1-1)}, \ldots, s_g^{\times (k_g-1)}) := (\underbrace{s_1, \ldots, s_1}_{k_1-1 \text{ times}}, \ldots, \underbrace{s_g, \ldots, s_g}_{k_g-1 \text{ times}}).$$ 
We call the set of these vectors the \emph{inclusion set for the matrix jewel} $\Delta(g,d,(k_1, \ldots, k_g))$

The main contribution of this work is then the connection of the free spectrahedral inclusion problem to the problem of joint measurability:

\begin{repthm}{thm:jewel-compatible-POVM}
For a fixed matrix dimension $d$, consider $g$ tuples of self-adjoint matrices $E^{(i)} \in (\mathcal M_{d}^{sa})^{k_i - 1}$, $k_i \in \mathbb N$, $i \in [g]$. Define $E^{(i)}_{k_i}:= I_d - E^{(i)}_1 \ldots - E^{(i)}_{k_i -1}$, set $\mathbf k = (k_1, \ldots, k_g)$, and write
\begin{align*}
\mathcal D_{E}&:=\mathcal D_{(2E^{(1)}-\frac{2}{k_1} I,\ldots,  2E^{(g)}-\frac{2}{k_g} I)}\\& = \bigsqcup_{n=1}^\infty \left\{ X \in (\mathcal M_n^{sa})^{\sum_{i=1}^g (k_i-1)} \, : \, \sum_{i=1}^g \sum_{j=1}^{k_i-1} \left(2E^{(i)}_j-\frac{2}{k_i}I\right)\otimes X_{i,j} \leq I_{dn}\right\}.
\end{align*}
Then
\begin{enumerate}
\item $\mathcal D_{\jewel, \mathbf k}(1)  \subseteq \mathcal D_E(1)$ if and only if $\Set{E^{(i)}_1, \ldots, E^{(i)}_{k_i}}$, $i \in [g]$, are POVMs.
\item $\mathcal D_{\jewel, \mathbf k}  \subseteq \mathcal D_E$ if and only if $\Set{E^{(i)}_1, \ldots, E^{(i)}_{k_i}}$, $i \in [g]$, are jointly measurable POVMs.
\item $\mathcal D_{\jewel, \mathbf k}(l)  \subseteq \mathcal D_E(l)$ for $l \in [d]$ if and only if for any isometry $V : \mathbb C^l \hookrightarrow \mathbb C^d$, the tuples $\Set{V^*E^{(i)}_1 V, \ldots, V^* E^{(i)}_{k_i}V}$, $i \in [g]$, are jointly measurable POVMs.
\end{enumerate}
\end{repthm}
This extends \cite[Theorem V.3]{bluhm2018joint} from binary measurements to measurements with $k_i$ outcomes each. We find that the different levels of spectrahedral inclusion correspond to different degrees of joint measurability. 
Furthermore, we show that the balanced compatibility region and the inclusion set for the matrix jewel can be identified; again, this is a generalization of \cite[Theorem V.7]{bluhm2018joint} for an arbitrary number of outcomes.
\begin{repthm}{thm:delta-is-gamma}
Let $d$, $g \in \mathbb N$ and $(k_1, \ldots, k_g) \in \mathbb N^{g}$. Then,
\begin{equation*}
\Gamma(g,d,(k_1, \ldots, k_g)) = \Delta(g,d,(k_1, \ldots, k_g)).
\end{equation*}
\end{repthm}

This identification allows to use results on one set to characterize the other. In \cite{bluhm2018joint}, we mostly adapted results from the study of free spectrahedral inclusion to characterize the balanced compatibility region in quantum information theory. This was possible, since the matrix diamond (the matrix jewel for $k_i = 2$ for all $i \in \mathbb N$) is a highly symmetric object and has already been studied in the literature. The matrix jewel does not have these symmetries and has not been studied in the algebraic convexity literature. Therefore, we adapt results from quantum information theory in Section \ref{sec:compatibility-from-QIT}, which we subsequently use in Section \ref{sec:discussion} to give upper and lower bounds on $\Delta(g,d,(k_1, \ldots, k_g))$. The lower bounds come from asymmetric approximate cloning of quantum states and from two different symmetrization procedures. The latter yield new lower bounds on the balanced compatibility region of quantum measurements. General upper bounds can be imported from the case of binary POVMs, since more outcomes shrink the compatibility regions and therefore also the corresponding inclusion sets. For the case of $k_i = d$ and $g$ not too large, we get better bounds from the study of measurements arising from mutually unbiased bases (MUBs).

We also introduce in this paper the notion of \emph{incompatibility witnesses}, both in the case of binary POVMs (Section \ref{sec:incompatibility-witnesses-binary}) and general POVMs (Section \ref{sec:incompatibility-witnesses-general}). As in the case of compatibility conditions, the theory in the binary case is simpler and the corresponding free spectrahedra have already been studied extensively in the mathematical literature. For these reasons, let us focus here on binary POVMs. 

A $g$-tuple of self-adjoint matrices $X \in (\mathcal M_n^{sa})^g$ is called an incompatibility witness if $X$ is an element of the matrix diamond $\mathcal D_{\diamondsuit,g}$, i.e.\ if $\sum_{i=1}^g \epsilon_i X_i \leq I_n$ for all sign vectors $\epsilon \in \{\pm 1\}^g$. An incompatibility witness $X$ can certify that $g$ given effects are incompatible: If the  matrix inequality
$$\sum_{i=1}^g (2E_i - I_d) \otimes X_i \leq I_{dn}$$
does not hold, the effects $E_1, \ldots, E_g$ are incompatible. There is a strong connection between incompatibility witnesses and the matrix cube (arguably the most studied class of free spectrahedra): $X$ is an incompatibility witness if and only if $\mathcal D_{\square, g}(1) \subseteq \mathcal D_{X}(1)$. Using the inclusion constants for the (complex) matrix cube, one can obtain tractable relaxations for the two equivalent conditions above (which otherwise require checking an exponential number of matrix inequalities).

\section{Preliminaries}\label{sec:preliminaries}

This section contains some facts from (algebraic) convexity and quantum information theory which will be needed in the following sections. The material here is for the most part well known, with the exception of Section \ref{sec:direct-sum}.

\subsection{Convex analysis}
Before we move on to the main topic of this section, let us fix some basic notation. We will often write $[n] := \Set{1, \ldots, n}$ for brevity, where $n \in \mathbb N$. Furthermore, we will use $\mathbb R^g_+:= \Set{x \in \mathbb R^g: x_i \geq 0~\forall i \in [g]}$, where $g \in \mathbb N$. Let $n$, $m \in \mathbb N$. Then, $\mathcal M_{n,m}$ is the set of complex $n \times m$ matrices and we will write just $\mathcal M_n$ if $m = n$. For the self-adjoint matrices, we will write $\mathcal M_n^{sa}$. By $\mathcal U(d)$ we will denote the unitary $d \times d$ matrices. Moreover, we will write $I_n \in \mathcal M_n$ for the identity matrix, where we will often omit the subscript if the dimension is clear from the context. The \emph{operator system} generated by the $g$-tuple $A \in (\mathcal M_d^{sa})^g$ is defined as
\begin{equation*}
\mathcal{OS}_A:= \mathrm{span}\Set{I_d,A_i: i \in [g]}.
\end{equation*}
Here, the span is taken over the complex numbers. Furthermore, we will often write for such $g$-tuples $2A -I := (2A_1 - I_d, \ldots, 2A_g - I_d)$ and $V^\ast A V:= (V^\ast A_1 V, \ldots, V^\ast A_g V)$ with $V \in \mathcal M_{d,k}$, $k \in \mathbb N$.

We start with two standard objects in convex analysis, polytopes and polyhedra (c.f.\ \cite[Definition I.2.2]{Barvinok2002}).
\begin{defi}\label{def:poly}
The convex hull of a finite set of points in $\mathbb R^d$, $d \in \mathbb N$, is called a \emph{polytope}. Let $c_1, \ldots, c_m$ be vectors in $\mathbb R^d$ and let $\alpha_1, \ldots, \alpha_m \in \mathbb R$. The set 
\begin{equation*}
\mathcal P := \Set{x \in \mathbb R^d: \langle c_i, x \rangle \leq \alpha_i \quad \forall i \in [m]}
\end{equation*}
is called a \emph{polyhedron}.
\end{defi}
By the Weyl-Minkowski theorem, a convex subset of $\mathbb R^d$ is a polytope if and only if it is a bounded polyhedron \cite[Corollary II.4.3]{Barvinok2002}. We will need the following lemma, which follows easily from convexity:
\begin{lem}[{\cite[Section IV.1]{Barvinok2002}}] \label{lem:polarextremepoints}
Let $\mathcal P = \mathrm{conv}(\Set{v_1, \ldots v_m}) \subset \mathbb R^d$, $m \in \mathbb N$. Then, its polar dual can be written as $\mathcal P^\circ = \Set{x \in \mathbb R^d: \langle v_i, x\rangle \leq 1 \quad \forall i \in [m]}$.
\end{lem}
There are several ways of constructing new convex sets from a collection of given ones. One way is the Cartesian product:
\begin{defi}
Let $\mathcal P_1 \subseteq \mathbb R^{k_1}$, $\mathcal P_2 \subseteq \mathbb R^{k_2}$ be two convex sets. Then, their \emph{Cartesian product} is
\begin{equation*}
\mathcal P_1 \times \mathcal P_2 := \Set{(x,y) \in \mathbb R^{k_1 + k_2}: x \in \mathcal P_1, y \in \mathcal P_2}.
\end{equation*} 
\end{defi}
Another one is the direct sum:
\begin{defi} \label{def:direct-sum-convex-sets}
Let $\mathcal P_1 \subseteq \mathbb R^{k_1}$, $\mathcal P_2 \subseteq \mathbb R^{k_2}$ be two convex sets. Then, their \emph{direct sum} is
\begin{equation*}
\mathcal P_1 \oplus \mathcal P_2 := \mathrm{conv}\left(\Set{(x,0) \in \mathbb R^{k_1 + k_2}: x \in \mathcal P_1} \cup \Set{(0,y) \in \mathbb R^{k_1 + k_2}: y \in \mathcal P_2}\right).
\end{equation*} 
\end{defi}
\begin{remark}
In particular, the above definition shows that the direct sum of two polytopes is again a polytope, because it is the convex hull of their respective extreme points embedded into a higher dimensional space.
\end{remark}
We can find a useful expression for the direct sum of two polytopes in terms of the Cartesian product and taking polars. We include a short proof for convenience.
\begin{lem}[{\cite[Lemma 2.4]{Bremner1997}}] \label{lem:bremner}
Let $\mathcal P_1 \subseteq \mathbb R^{k_1}$, $\mathcal P_2 \subseteq \mathbb R^{k_2}$ be two polytopes and such that $0 \in \mathcal P_1$, $\mathcal P_2$. Then, 
\begin{equation*}
\mathcal P_1 \oplus \mathcal P_2 = (\mathcal P_1^\circ \times \mathcal P_2^\circ)^\circ.
\end{equation*}
\end{lem}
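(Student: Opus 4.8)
The plan is to reduce the claim to the description of polar duals of polytopes furnished by Lemma~\ref{lem:polarextremepoints}, combined with the bipolar theorem. First I would write $\mathcal P_1 = \mathrm{conv}\{v_1, \ldots, v_m\}$ and $\mathcal P_2 = \mathrm{conv}\{w_1, \ldots, w_n\}$ as convex hulls of finite point sets. By the Remark following Definition~\ref{def:direct-sum-convex-sets}, we have $\mathcal P_1 \oplus \mathcal P_2 = \mathrm{conv}\left(\{(v_i,0) : i \in [m]\} \cup \{(0,w_j) : j \in [n]\}\right)$, which in particular exhibits $\mathcal P_1 \oplus \mathcal P_2$ as a polytope; hence it is compact and convex, and it contains the origin because $0 \in \mathcal P_1$.

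Next I would compute the polar $(\mathcal P_1 \oplus \mathcal P_2)^\circ$ directly from Lemma~\ref{lem:polarextremepoints}. Using $\langle (v_i,0),(x,y)\rangle = \langle v_i, x\rangle$ and $\langle (0,w_j),(x,y)\rangle = \langle w_j, y\rangle$, the lemma gives
\begin{equation*}
(\mathcal P_1 \oplus \mathcal P_2)^\circ = \{(x,y) \in \mathbb R^{k_1+k_2} : \langle v_i, x\rangle \le 1 \ \forall i \in [m], \ \langle w_j, y\rangle \le 1 \ \forall j \in [n]\}.
\end{equation*}
Applying Lemma~\ref{lem:polarextremepoints} once more, this time to $\mathcal P_1$ and $\mathcal P_2$ separately, identifies the right-hand side with $\{(x,y) : x \in \mathcal P_1^\circ,\ y \in \mathcal P_2^\circ\} = \mathcal P_1^\circ \times \mathcal P_2^\circ$. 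Thus $(\mathcal P_1 \oplus \mathcal P_2)^\circ = \mathcal P_1^\circ \times \mathcal P_2^\circ$.

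To finish, I would take the polar of both sides of this identity. Since $\mathcal P_1 \oplus \mathcal P_2$ is closed, convex, and contains $0$, the bipolar theorem yields $(\mathcal P_1 \oplus \mathcal P_2)^{\circ\circ} = \mathcal P_1 \oplus \mathcal P_2$, so that $\mathcal P_1 \oplus \mathcal P_2 = (\mathcal P_1^\circ \times \mathcal P_2^\circ)^\circ$, which is the assertion.

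All the manipulations are routine, and the only points that need attention are verifying the hypotheses of the bipolar theorem---closedness/compactness and $0 \in \mathcal P_1 \oplus \mathcal P_2$, both immediate from the polytope description---and applying Lemma~\ref{lem:polarextremepoints} to the correct point sets (the embedded vertices for the direct sum, the original vertices $v_i$ and $w_j$ for the two factors). If one prefers to avoid the bipolar theorem, one can instead argue by double inclusion: the inclusion $\mathcal P_1 \oplus \mathcal P_2 \subseteq (\mathcal P_1^\circ \times \mathcal P_2^\circ)^\circ$ follows because each embedded vertex pairs to at most $1$ with every point of $\mathcal P_1^\circ \times \mathcal P_2^\circ$ and the target set is convex, while the reverse inclusion is the more delicate one and is exactly where convexity and closedness enter---so routing through the bipolar theorem is the cleaner route.
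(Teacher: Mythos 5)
Your proof is correct and follows essentially the same route as the paper's: compute $(\mathcal P_1 \oplus \mathcal P_2)^\circ = \mathcal P_1^\circ \times \mathcal P_2^\circ$ via Lemma \ref{lem:polarextremepoints}, then invoke the Bipolar Theorem using compactness of the direct sum and the fact that it contains $0$. Your version is slightly more explicit about which finite point sets Lemma \ref{lem:polarextremepoints} is applied to, but the argument is the same.
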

\begin{proof}
Using Lemma \ref{lem:polarextremepoints}, we may write
\begin{equation*}
(\mathcal P_1 \oplus \mathcal P_2)^\circ = \Set{(x_1,x_2) \in \mathbb R^{k_1} \times  \mathbb R^{k_2} : \langle p_i, x_i \rangle \leq 1 \quad \forall p_i \in \mathcal P_i, ~i \in [2]}.
\end{equation*}
Comparing this with the definition of $\mathcal P_i^\circ$, we find that $(\mathcal P_1 \oplus \mathcal P_2)^\circ = \mathcal P_1^\circ \times \mathcal P_2^\circ$. As the $\mathcal P_i$ are polytopes, they are compact and thus also $\mathcal P_1 \oplus \mathcal P_2$ is compact. As this set furthermore contains $0$ by assumption, an application of the Bipolar Theorem \cite[Theorem IV.1.2]{Barvinok2002} yields the claim.
\end{proof}

Later, we shall need the following result on the faces of the Cartesian product.
\begin{lem}[{\cite[Lemma 2.3]{Bremner1997}}] \label{lem:cartesianfaces}
Let $\mathcal P_1 \subseteq \mathbb R^{k_1}$, $\mathcal P_2 \subseteq \mathbb R^{k_2}$ be two polytopes. Then, the $l$-dimensional faces of $\mathcal P_1 \times \mathcal P_2$, for $0 \leq l \leq k_1 + k_2 $ are are of the form $\mathcal F_1 \times \mathcal F_2$, where $\mathcal F_i$ is a $j_i$-dimensional face of $\mathcal P_i$ and $j_1 + j_2 = l$.
\end{lem}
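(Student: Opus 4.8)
The plan is to use the standard description of the faces of a polytope as the subsets on which a linear functional attains its maximum. Recall that for a polytope $\mathcal P \subseteq \mathbb R^{k}$ every face arises as an \emph{exposed} face, i.e.\ is of the form $\mathcal F_c := \Set{x \in \mathcal P : \langle c, x\rangle = h_{\mathcal P}(c)}$ for some $c \in \mathbb R^k$, where $h_{\mathcal P}(c) := \max_{y \in \mathcal P}\langle c, y\rangle$ is the support function (taking $c = 0$ recovers $\mathcal P$ itself); this is classical and can be found in \cite{Barvinok2002}. Thus I would first fix a direction $(c_1, c_2) \in \mathbb R^{k_1}\times \mathbb R^{k_2}$ and observe that, since the linear functional splits as $\langle c_1, x\rangle + \langle c_2, y\rangle$ over the product, its maximum over $\mathcal P_1 \times \mathcal P_2$ equals $h_{\mathcal P_1}(c_1) + h_{\mathcal P_2}(c_2)$ and is attained exactly on $\mathcal F_{c_1} \times \mathcal F_{c_2}$. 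Hence every face of $\mathcal P_1 \times \mathcal P_2$ is of the form $\mathcal F_1 \times \mathcal F_2$ with $\mathcal F_i$ a face of $\mathcal P_i$; conversely, given faces $\mathcal F_i$ of $\mathcal P_i$ exposed by $c_i$, the same computation shows that $(c_1, c_2)$ exposes $\mathcal F_1 \times \mathcal F_2$, so every such product is indeed a face of the product polytope.

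It remains to match dimensions. Here I would use that the affine hull of a Cartesian product is the Cartesian product of the affine hulls, $\mathrm{aff}(\mathcal F_1 \times \mathcal F_2) = \mathrm{aff}(\mathcal F_1) \times \mathrm{aff}(\mathcal F_2)$, whence $\dim(\mathcal F_1 \times \mathcal F_2) = \dim \mathcal F_1 + \dim \mathcal F_2$. Combining this with the previous paragraph, an $l$-dimensional face of $\mathcal P_1 \times \mathcal P_2$ is exactly a product $\mathcal F_1 \times \mathcal F_2$ of a $j_1$-dimensional face of $\mathcal P_1$ and a $j_2$-dimensional face of $\mathcal P_2$ with $j_1 + j_2 = l$, as claimed.

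I do not expect any step to be a serious obstacle; the only points needing a little care are (i) invoking that for polytopes every face is exposed, so that the support-function argument really captures all faces and not just the exposed ones, and (ii) being consistent about the degenerate cases — with the convention that $c = 0$ exposes the whole polytope the improper face $\mathcal P_1 \times \mathcal P_2$ is included, while the empty face may simply be excluded from the statement or regarded as the degenerate product $\emptyset \times \emptyset$.
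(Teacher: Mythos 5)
Your proof is correct. The paper does not prove this lemma at all --- it simply imports it as \cite[Lemma 2.3]{Bremner1997} --- so there is no internal argument to compare against; your support-function argument (every face of a polytope is exposed, a separable linear functional attains its maximum over a product exactly on the product of the individual maximizer sets, and dimensions add because affine hulls multiply) is the standard proof of this fact and is complete, including the converse direction and the degenerate cases you flag.
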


\subsection{Matrix convex sets and free spectrahedra}

In this section, we will review some basic results from the theory of matrix convex sets and free spectrahedra. The theory we will need for this work can be found in \cite{helton_matricial_2013, helton2019dilations, davidson2016dilations}. We shall write $\mathrm{UCP}(\mathcal B(\mathcal H), \mathcal B(\mathcal K))$ for the set of unital completely positive maps from the bounded operators on a Hilbert space $\mathcal H$ to bounded operators on a Hilbert space $\mathcal K$.

\begin{defi}
Let $g \in \mathbb N$. Moreover, let $\mathcal F_n \subseteq (\mathcal M_n^{sa})^g$ for all $n \in \mathbb N$. Then, we call $\mathcal F = \bigsqcup_{n \in \mathbb N} \mathcal F_n$ a \emph{free set}. Moreover, $\mathcal F$ is a \emph{matrix convex set} if it satisfies the following two properties for any $m$, $n \in \mathbb N$:
\begin{enumerate}
\item If $X = (X_1, \ldots, X_g) \in \mathcal F_m$, $X = (Y_1, \ldots, Y_g)\in \mathcal F_n$, then $X \oplus Y := (X_1 \oplus Y_1, \ldots, X_g \oplus Y_g) \in \mathcal F_{m+n}$
\item If $X = (X_1, \ldots, X_g) \in \mathcal F_m$ and $\Psi: \mathcal M_m \to \mathcal M_n$ is a unital completely positive (UCP) map, then $(\Psi(X_1), \ldots, \Psi(X_g)) \in \mathcal F_n$.
\end{enumerate}
That is, a matrix convex set is a free set closed under direct sums and UCP maps.
\end{defi}

This definition can seen to be equivalent to the one used in \cite{Wittstock1984} (see \cite[Section 2]{davidson2016dilations}). In particular, it follows from the definition that all sets $\mathcal F_n$ are convex. A matrix convex set $\mathcal F$ is open/closed/bounded if all $\mathcal F_n$ defining it have this property. There are two important examples of classes of matrix convex sets. The first are free spectrahedra:
\begin{defi}
Let $\mathcal H$ be a Hilbert space and $A \in (\mathcal B(\mathcal H)^{sa})^g$ be a $g$-tuple of self-adjoint bounded operators on this Hilbert space. The \emph{free spectrahedron at level $n$} defined by $A$ is the set
\begin{equation} \label{eq:lmi}
\mathcal D_A(n) := \Set{X \in (\mathcal M_n^{sa})^g: \sum_{i = 1}^g A_i \otimes X_i \leq I_{\mathcal H} \otimes I_n}.
\end{equation}
Here, $I_{\mathcal H}$ is the identity operator on $\mathcal H$. The \emph{free spectrahedron} corresponding to $A$ is then the union of all these levels, i.e.\
\begin{equation*}
\mathcal D_A:= \bigsqcup_{n \in \mathbb N} \mathcal D_A(n).
\end{equation*}
\end{defi}
It is easy to see that free spectrahedra are closed matrix convex sets \cite[Propostion 2.1]{davidson2016dilations}. Some authors consider free spectrahedra defined by tuples of matrices (i.e.~$\mathcal H$ is finite dimensional) \cite{helton_matricial_2013, helton2019dilations}, but the notion can be extended to bounded operators on arbitrary Hilbert spaces \cite{davidson2016dilations}. Most of this work will only consider the case where $A$ is a tuple of self-adjoint matrices.

The second class of examples comes from the matrix ranges introduced in \cite{Arveson1972} and generalized in \cite{davidson2016dilations}:
\begin{defi} Let $\mathcal H$ be a Hilbert space and $g \in \mathbb N$. Then, the \emph{matrix range} $\mathcal W(A)$ of $A = (A_1, \ldots, A_g) \in (\mathcal B(\mathcal H)^{sa})^g$ is defined as $\mathcal W = \bigsqcup_{n \in \mathbb N} \mathcal W_n$, where for any $n \in \mathcal N$
\begin{equation*}
\mathcal W_n(A):= \left\{(X_1, \ldots, X_g) \in (\mathcal M_n^{sa})^g: \exists \Psi \in \mathrm{UCP}(\mathcal B(\mathcal H),\mathcal M_n)\mathrm{~s.t.~}X_i = \Psi(A_i)~\forall i \in [g]\right\}.
\end{equation*}
\end{defi}
It is again easy to see that matrix ranges are closed bounded matrix convex sets \cite[Propositon 2.5]{davidson2016dilations}. Let us point out that the two examples of matrix convex sets discussed above are paradigmatic \cite[Proposition 3.5]{davidson2016dilations}: a closed matrix convex set is bounded if and only if it is a matrix range and it contains $0$ in its interior if and only if it is a free spectrahedron. 

As for usual convex sets, we can define the polar dual of a matrix convex set. We extend the definition given in \cite{Effros1997} to free sets instead of restricting to matrix convex sets.
\begin{defi}\label{def:mcs-polar}
Let $g \in \mathbb N$ and let $\mathcal F$ be a free set $\mathcal F = \bigsqcup_{n \in \mathbb N} \mathcal F_n$, $n \in \mathbb N$. Then, its \emph{polar dual} is defined as $\mathcal F^{\bullet} = \bigsqcup_{n \in \mathbb N} \mathcal F^{\bullet}_n$, where 
\begin{equation*}
\mathcal F^{\bullet}_n := \left\{ X \in (\mathcal M_n^{sa})^g: \sum_{i = 1}^g A_i \otimes X_i \leq I ~\forall A \in \mathcal F \right\}.
\end{equation*}
\end{defi}
It is easy to verify that $\mathcal F^{\bullet}$ is a closed matrix convex set containing $0$. It has been shown in \cite{davidson2016dilations} that matrix ranges and free spectrahedra are polar duals of each other.
\begin{prop}[{\cite[Proposition 3.1 and 3.3]{davidson2016dilations}}] \label{prop:polar-duals-of-FS}
Let $\mathcal H$ be a Hilbert space and  $A \in (\mathcal B(\mathcal H)^{sa})^g$, $g \in \mathbb N$. Then, $\mathcal W(A)^\bullet = \mathcal D_A$. Moreover, if $0 \in \mathcal W(A)$, then $(\mathcal D_{A})^\bullet = \mathcal W(A)$.
\end{prop}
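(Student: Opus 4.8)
The plan is to prove the two claimed identities separately, reducing the second to the first via the matricial bipolar theorem. For the first identity $\mathcal W(A)^\bullet = \mathcal D_A$ I would argue level by level, establishing $\mathcal W(A)^\bullet_n = \mathcal D_A(n)$ for each $n \in \mathbb N$ through a pair of inclusions.

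For the inclusion $\mathcal D_A(n) \subseteq \mathcal W(A)^\bullet_n$, I would fix $X \in \mathcal D_A(n)$ and an arbitrary element $Y \in \mathcal W_m(A)$, so that $Y_i = \Psi(A_i)$ for some $\Psi \in \mathrm{UCP}(\mathcal B(\mathcal H), \mathcal M_m)$. The key observation is that the ampliation $\Psi \otimes \mathrm{id}_n$ is a positive map (because $\Psi$ is completely positive) which sends $I_{\mathcal H} \otimes I_n$ to $I_m \otimes I_n$ (because $\Psi$ is unital). Applying it to the defining inequality $I_{\mathcal H}\otimes I_n - \sum_{i=1}^g A_i \otimes X_i \geq 0$ yields $I_m \otimes I_n - \sum_{i=1}^g \Psi(A_i)\otimes X_i \geq 0$, i.e.\ $\sum_i Y_i \otimes X_i \leq I$. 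Since $Y$ was arbitrary, $X \in \mathcal W(A)^\bullet_n$.

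For the reverse inclusion $\mathcal W(A)^\bullet_n \subseteq \mathcal D_A(n)$, the idea is to feed the tuple $A$ itself (or its compressions) back into the polar condition. When $\mathcal H$ is finite dimensional, the identity map is UCP, so $A \in \mathcal W_{\dim \mathcal H}(A)$, and testing the polar condition on $Y = A$ returns exactly the matrix inequality $\sum_i A_i \otimes X_i \leq I$ defining $\mathcal D_A(n)$. When $\mathcal H$ is infinite dimensional this shortcut fails, and this is the step I expect to be the main obstacle: the tuple $A$ no longer sits at a finite level of $\mathcal W(A)$. I would resolve it by using compressions $\Psi_V = V^\ast(\cdot)V$ by isometries $V: \mathbb C^m \hookrightarrow \mathcal H$, which are UCP so that $V^\ast A V \in \mathcal W(A)$; the polar condition then gives $(V \otimes I_n)^\ast\big(\sum_i A_i \otimes X_i\big)(V \otimes I_n) \leq I$, and letting the ranges of the $V$ exhaust $\mathcal H$ recovers $\sum_i A_i \otimes X_i \leq I$ by evaluating against an arbitrary vector, whose $\mathcal H$-component always lies in the range of some finite compression.

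Finally, for the second identity I would invoke that $\mathcal W(A)$ is a closed matrix convex set which, by hypothesis, contains $0$. Under exactly these conditions the matricial bipolar theorem \cite{Effros1997} guarantees $(\mathcal W(A)^\bullet)^\bullet = \mathcal W(A)$. Substituting the first identity $\mathcal W(A)^\bullet = \mathcal D_A$ then gives $(\mathcal D_A)^\bullet = \mathcal W(A)$, as claimed; the hypothesis $0 \in \mathcal W(A)$ enters precisely here, since it is what makes the bipolar theorem applicable.
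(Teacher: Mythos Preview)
The paper does not supply its own proof of this proposition; it is quoted verbatim as a result from \cite[Propositions 3.1 and 3.3]{davidson2016dilations}, so there is nothing in the present paper to compare against. Your proposal is correct and in fact reproduces the argument of the original reference: the inclusion $\mathcal D_A \subseteq \mathcal W(A)^\bullet$ via the UCP ampliation, the reverse inclusion by testing on compressions $V^\ast A V$ (with the finite-dimensional span trick to handle infinite-dimensional $\mathcal H$), and the second identity as a direct consequence of the first together with the Effros--Winkler bipolar theorem applied to the closed matrix convex set $\mathcal W(A)$ containing $0$.
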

Lemma 3.4 of \cite{davidson2016dilations} shows that  $0 \in \mathcal W(A)$ is equivalent to $\mathcal D_A(1)$ being bounded. 

Let $\mathcal C \subseteq \mathbb R^g$ be a convex set. In general, there are many free spectrahedra $\mathcal D_A$ with $\mathcal D_A(1) = \mathcal C$. If $\mathcal C$ is a polyhedron with 0 in its interior, we can find a maximal such free spectrahedron \cite[Definition 4.1]{davidson2016dilations}:
\begin{align}
\label{eq:Wmax}&\mathcal W_{max}(\mathcal C)(n) :=\\
\nonumber& \Set{X \in (\mathcal M_n^{sa})^g: \sum_{i = 1}^g c_i X_i \leq \alpha I, \quad \forall \, c \in \mathbb R^g, \forall \alpha \in \mathbb R~\mathrm{\ s.t.\ }~\mathcal C \subseteq \Set{x \in \mathbb R^g: \langle c, x\rangle \leq \alpha}}.
\end{align}
Note that $\mathcal W_{max}(\mathcal C)(1) = \mathcal C$, as claimed above.
\begin{remark}
It is clear that the above is indeed a free spectrahedron defined by matrices for a polyhedron $\mathcal C$, since polyhedra are defined as the intersection of \emph{finitely} many hyperplanes (see Definition \ref{def:poly}). The defining matrices can thus be chosen diagonal and of finite dimension. The fact that $0$ is an interior point guarantees that we can always choose $\alpha = 1$.
\end{remark}
\begin{remark}
The definition above can be used to define the largest matrix convex set $\mathcal F$ with $\mathcal F_1 = \mathcal C$ for any convex set $\mathcal C$. If $\mathcal C$ is not a polyhedron or $0$ not in the interior, however, the corresponding $\mathcal W_{max}(\mathcal C)$ is not necessarily a free spectrahedron defined by matrices. See \cite[Section 4]{davidson2016dilations} for details.
\end{remark}

In this work, we will be concerned with inclusion constants, i.e.\ constants for which the implication 
\begin{equation*}
 \mathcal  D_A(1) \subseteq \mathcal D_B(1) \implies s\cdot \mathcal D_A \subseteq \mathcal D_B
\end{equation*}
holds, where $A$, $B$ are both $g$-tuples of self-adjoint matrices. Here, the \emph{(asymmetrically) scaled} free spectrahedron is 
\begin{equation*}
s \cdot \mathcal D_A:= \Set{(s_1 X_1, \ldots, s_g X_g): X \in \mathcal D_A}.
\end{equation*}
\begin{defi}
Let $D \in \mathbb N$ and $\mathcal D_A$ be the free spectrahedron defined by $ A:= (A^{(1)},\ldots,  A^{(g)})$, where $A^{(j)} \in (\mathcal M_D^{sa})^{k_j-1}$, $k_j \in \mathbb N$, $j \in [g]$. Let $\mathbf k = (k_1, \ldots, k_g)$. The \emph{inclusion set} is defined as 
\begin{align*}
&\Delta_{\mathcal D_A}(g,d, \mathbf k):=\\ &\Set{s \in \mathbb R^g_+: \forall B \in (\mathcal M_d^{sa})^{\sum_{i = 1}^g (k_i -1)}, \,  \mathcal D_A(1) \subseteq \mathcal D_B(1) \implies (s_1^{\times (k_1-1)}, \ldots, s_g^{\times (k_g-1)}) \cdot \mathcal D_A \subseteq \mathcal D_B}.
\end{align*}
If $\mathcal D_A$ is the matrix jewel $\mathcal D_{\jewel, \mathbf k}$ in Definition \ref{def:jewel}, we will write $\Delta$ instead of $\Delta_{\mathcal D_A}$.
\end{defi}
This definition generalizes \cite[Definition IV.1]{bluhm2018joint}, which is recovered for $\mathbf k = (2, \ldots, 2)$. Note that the $(k_i -1)$-tuples in the inclusion sets are scaled in the same way inside each group, where the size of the groups are determined by the vector $\mathbf k$. By the same argument as in \cite[Proposition IV.3]{bluhm2018joint}, these sets are convex.

The inclusion of free spectrahedra can be related to positivity properties of the map between the matrices defining them. Let $A \in (\mathcal M_D^{sa})^g$ be such that the $A_i$, $i \in [g]$, are linearly independent and let $B \in (\mathcal M_d^{sa})^g$. Let $\Phi^{A \to B}: \mathcal{OS}_A \to \mathcal M_d$ be the unital map defined by
\begin{equation*}
\Phi^{A \to B}: A_i \mapsto B_i \qquad \forall i \in [g].
\end{equation*}
If there is no confusion, we will later drop the superscript for convenience. Note that the assumption on the linear independence of the $A_i$, $i \in [g]$, is in particular met if $D_A(1)$ is bounded. The following theorem has been proven in \cite[Theorem 3.5]{helton_matricial_2013} for real matrices. See \cite[Lemma IV.4]{bluhm2018joint} for a very similar proof in the complex case.
\begin{lem}\label{lem:phi}
Let $A \in (\mathcal M_D^{sa})^g$ and $B \in (\mathcal M_d^{sa})^g$. Furthermore, let $\mathcal D_A(1)$ be bounded. Then, $\mathcal D_A(n) \subseteq \mathcal D_B(n)$ holds if and only if $\Phi^{A \to B}$ as given above is $n$-positive. In particular, $\mathcal D_A \subseteq \mathcal D_B$ if and only if $\Phi^{A \to B}$ is completely positive.
\end{lem}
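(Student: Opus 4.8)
The plan is to prove Lemma \ref{lem:phi}, which asserts the equivalence between the inclusion $\mathcal D_A(n) \subseteq \mathcal D_B(n)$ and the $n$-positivity of the unital map $\Phi^{A \to B}$ sending $A_i \mapsto B_i$. I will first establish that this map is well defined: since $\mathcal D_A(1)$ is bounded, the matrices $A_1, \ldots, A_g$ together with $I_D$ are linearly independent (boundedness rules out a nontrivial linear relation among the $A_i$, and a relation involving $I_D$ would force $\mathcal D_A(1)$ to be unbounded in some direction or empty), so prescribing $I_D \mapsto I_d$ and $A_i \mapsto B_i$ extends uniquely by linearity to a unital linear map on $\mathcal{OS}_A$. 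The whole proof then reduces to a dual characterization of $n$-positivity in terms of separating hyperplanes, i.e.\ the matrix inequalities defining the two spectrahedra.

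The key observation is a direct algebraic dictionary. First I would spell out what $n$-positivity means for $\Phi^{A \to B}$. By definition $\Phi$ is $n$-positive iff its ampliation $\mathrm{id}_n \otimes \Phi : \mathcal M_n(\mathcal{OS}_A) \to \mathcal M_n(\mathcal M_d)$ maps positive elements to positive elements. A general self-adjoint element of $\mathcal M_n(\mathcal{OS}_A)$ has the form $I_D \otimes H_0 + \sum_{i=1}^g A_i \otimes H_i$ with $H_0, H_i \in \mathcal M_n^{sa}$, and applying $\mathrm{id}_n \otimes \Phi$ produces $I_d \otimes H_0 + \sum_{i=1}^g B_i \otimes H_i$. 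Thus $n$-positivity is precisely the statement that, for all self-adjoint $H_0, H_1, \ldots, H_g \in \mathcal M_n$,
\begin{equation*}
I_D \otimes H_0 + \sum_{i=1}^g A_i \otimes H_i \geq 0 \implies I_d \otimes H_0 + \sum_{i=1}^g B_i \otimes H_i \geq 0.
\end{equation*}
On the other hand, writing $X = (X_1, \ldots, X_g) \in (\mathcal M_n^{sa})^g$, membership $X \in \mathcal D_A(n)$ reads $\sum_i A_i \otimes X_i \leq I_D \otimes I_n$, i.e.\ $I_D \otimes I_n - \sum_i A_i \otimes X_i \geq 0$, which is exactly the hypothesis above with $H_0 = I_n$ and $H_i = -X_i$. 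Hence $\mathcal D_A(n) \subseteq \mathcal D_B(n)$ is the special case of the implication in which $H_0 = I_n$. The content of the lemma is therefore that the apparently weaker condition (only $H_0 = I_n$ allowed) in fact forces the full $n$-positivity condition (arbitrary $H_0$).

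The reverse direction ($n$-positive $\Rightarrow$ inclusion) is immediate from the paragraph above. The main obstacle, and the crux of the argument, is the forward direction: from $\mathcal D_A(n) \subseteq \mathcal D_B(n)$ I must recover the implication for arbitrary $H_0$, not just $H_0 = I_n$. The strategy is a reduction to the case $H_0 = I_n$ via a boundedness-and-perturbation argument. Suppose $I_D \otimes H_0 + \sum_i A_i \otimes H_i \geq 0$. Because $\mathcal D_A(1)$ is bounded and $0$ lies in its interior (so $I_D$ appears genuinely in $\mathcal{OS}_A$), one shows $H_0 \geq 0$: pinch with vectors in the kernel directions, or use that the pencil is bounded below only if the coefficient of $I_D$ is positive semidefinite. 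One then splits into the case $H_0 > 0$ and the boundary case. When $H_0 > 0$, conjugating the positive pencil by $I \otimes H_0^{-1/2}$ reduces it to the normalized form $I_D \otimes I_n + \sum_i A_i \otimes \tilde H_i \geq 0$ with $\tilde H_i = H_0^{-1/2} H_i H_0^{-1/2}$, i.e.\ $-\tilde H \in \mathcal D_A(n)$; since $\mathcal D_A(n)$ is symmetric about $0$ after accounting for signs (here I use the explicit $\pm$ bookkeeping, as $X\in\mathcal D_A$ gives the pencil $\geq 0$ with $H_i=-X_i$), the inclusion yields $I_d \otimes I_n + \sum_i B_i \otimes \tilde H_i \geq 0$, and conjugating back by $I \otimes H_0^{1/2}$ gives the desired conclusion. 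The degenerate case $H_0 \geq 0$ singular is handled by replacing $H_0$ with $H_0 + \epsilon I_n$, applying the strict case, and letting $\epsilon \to 0^+$ using that the positive semidefinite cone is closed. Finally, the complete-positivity statement follows since complete positivity is $n$-positivity for all $n$, matching $\mathcal D_A \subseteq \mathcal D_B = \bigsqcup_n \mathcal D_A(n) \subseteq \mathcal D_B(n)$. Throughout, I would lean on the cited proofs in \cite[Theorem 3.5]{helton_matricial_2013} and \cite[Lemma IV.4]{bluhm2018joint}, adapting the real-matrix argument of the former to the complex self-adjoint setting exactly as in the latter, so the novel labor is confined to verifying that the complex case introduces no new difficulties in the conjugation and limiting steps.
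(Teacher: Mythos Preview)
The paper does not give a proof of this lemma; it merely cites \cite[Theorem 3.5]{helton_matricial_2013} (real case) and \cite[Lemma IV.4]{bluhm2018joint} (complex adaptation). Your argument is exactly the standard proof from those references: reformulate $n$-positivity as the implication $I_D\otimes H_0+\sum_i A_i\otimes H_i\ge 0 \Rightarrow I_d\otimes H_0+\sum_i B_i\otimes H_i\ge 0$, observe that the spectrahedral inclusion is the special case $H_0=I_n$, use boundedness of $\mathcal D_A(1)$ to force $H_0\ge 0$, reduce the invertible case via conjugation by $I\otimes H_0^{-1/2}$, and pass to the boundary by the $\epsilon$-perturbation. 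This is correct and matches the cited approach.

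One cosmetic remark: your parenthetical ``since $\mathcal D_A(n)$ is symmetric about $0$ after accounting for signs'' is misleading as stated (the set need not be centrally symmetric); what you actually use, and correctly so, is simply that $I_D\otimes I_n+\sum_i A_i\otimes\tilde H_i\ge 0$ is literally the statement $(-\tilde H_1,\ldots,-\tilde H_g)\in\mathcal D_A(n)$. No symmetry is needed, and you should drop that phrase.
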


\subsection{The direct sum of matrix convex sets}\label{sec:direct-sum}
In this section, we introduce the \emph{direct sum} of matrix convex sets and compare it to other existing operations on matrix convex sets. Subsequently, we relate it to the direct sum of polytopes. We derive some simple properties of this construction which will be used later in the paper. Here, we will identify $\mathbb R^d$ with the diagonal $d \times d$ matrices with real entries. 

In order to construct new matrix convex sets, we can define a Cartesian product on them:
\begin{defi}
Let $\mathcal F$, $\mathcal G$ be two free sets. Their \emph{Cartesian product} is defined as $\mathcal F \hat\times \mathcal G := \bigsqcup_{n \in \mathbb N} (\mathcal F \hat\times \mathcal G)_n$, where
\begin{equation*}
(\mathcal F \hat\times \mathcal G)_n := \left\{ (X,Y): X \in \mathcal F_n, Y \in \mathcal G_n\right\}.
\end{equation*}
\end{defi}
It is easy to check that for matrix convex sets $\mathcal F$ and $\mathcal G$, the set $\mathcal F \hat \times \mathcal G$ is also matrix convex. Moreover, the Cartesian product of matrix convex sets at level $n=1$ is the ordinary Cartesian product of convex sets, i.e.\
$$(\mathcal F \hat\times \mathcal G)_1 = \mathcal F_1 \times \mathcal G_1.$$ The Cartesian product of matrix convex sets has been used previously, see e.g.\ \cite{Passer2018a}. We use the same definition as the recent paper \cite{Passer2019} (see Definition 4.1 in said paper), but allow for arbitrary free sets. In the case where both $\mathcal F$ and $\mathcal G$ are free spectrahedra, their Cartesian product is again a free spectrahedron for which we can give an explicit form: 
\begin{prop}
Let $A \in (\mathcal B(\mathcal H_1)^{sa})^{k_1}$, $B \in (\mathcal B(\mathcal H_2)^{sa})^{k_2}$ be tuples of self-adjoint bounded operators, where $k_1$, $k_2 \in \mathbb N$. Then $\mathcal D_A \hat\times \mathcal D_B$ is the free spectrahedron defined as
\begin{align}
&(\mathcal D_A \hat\times \mathcal D_B)(n) = \label{eq:def-Cartesian-product-free-spectrahedra} \\&\Set{X \in (\mathcal M_n^{sa})^{k_1 + k_2}: \sum_{i = 1}^{k_1} (A_i \oplus 0_{\mathcal H_2}) \otimes X_i + \sum_{j = 1}^{k_2} (0_{\mathcal H_1} \oplus B_j) \otimes X_{k_1 + j} \leq I_{\mathcal H_1\oplus \mathcal H_2} \otimes I_{n}}. \nonumber
\end{align} 
\end{prop}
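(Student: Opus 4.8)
The plan is to show that both sides, evaluated at an arbitrary level $n$, are cut out by the same linear matrix inequality, so that equality holds level by level. Fix $n \in \mathbb N$ and set
\begin{equation*}
C := (A_1 \oplus 0_{\mathcal H_2}, \ldots, A_{k_1} \oplus 0_{\mathcal H_2}, 0_{\mathcal H_1} \oplus B_1, \ldots, 0_{\mathcal H_1} \oplus B_{k_2}) \in (\mathcal B(\mathcal H_1 \oplus \mathcal H_2)^{sa})^{k_1 + k_2},
\end{equation*}
so that the right-hand side of \eqref{eq:def-Cartesian-product-free-spectrahedra} is, by definition, $\mathcal D_C(n)$. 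A tuple $(X,Y)$ with $X = (X_1, \ldots, X_{k_1})$ and $Y = (X_{k_1+1}, \ldots, X_{k_1+k_2})$ lies in $(\mathcal D_A \hat\times \mathcal D_B)(n)$ if and only if $X \in \mathcal D_A(n)$ and $Y \in \mathcal D_B(n)$, that is,
\begin{equation*}
\sum_{i=1}^{k_1} A_i \otimes X_i \leq I_{\mathcal H_1} \otimes I_n \qquad \text{and} \qquad \sum_{j=1}^{k_2} B_j \otimes X_{k_1+j} \leq I_{\mathcal H_2} \otimes I_n .
\end{equation*}

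Next I would invoke the canonical unitary identification $(\mathcal H_1 \oplus \mathcal H_2) \otimes \mathbb C^n \cong (\mathcal H_1 \otimes \mathbb C^n) \oplus (\mathcal H_2 \otimes \mathbb C^n)$. Under it, $(A_i \oplus 0_{\mathcal H_2}) \otimes X_i = (A_i \otimes X_i) \oplus 0$ and $(0_{\mathcal H_1} \oplus B_j) \otimes X_{k_1+j} = 0 \oplus (B_j \otimes X_{k_1+j})$, so that the operator defining $\mathcal D_C(n)$ is block diagonal:
\begin{equation*}
\sum_{i=1}^{k_1} (A_i \oplus 0) \otimes X_i + \sum_{j=1}^{k_2} (0 \oplus B_j) \otimes X_{k_1+j} = \left(\sum_{i=1}^{k_1} A_i \otimes X_i\right) \oplus \left(\sum_{j=1}^{k_2} B_j \otimes X_{k_1+j}\right),
\end{equation*}
and likewise $I_{\mathcal H_1 \oplus \mathcal H_2} \otimes I_n = (I_{\mathcal H_1} \otimes I_n) \oplus (I_{\mathcal H_2} \otimes I_n)$.

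The final ingredient is the elementary fact that, for self-adjoint block-diagonal operators with respect to the same decomposition, $P_1 \oplus P_2 \leq Q_1 \oplus Q_2$ holds if and only if $P_1 \leq Q_1$ and $P_2 \leq Q_2$ — the forward implication by restricting the quadratic form to each summand, the converse by adding. Applying this with the blocks displayed above against the blocks of the identity shows that the LMI defining $\mathcal D_C(n)$ is equivalent to the conjunction of the two LMIs defining $\mathcal D_A(n)$ and $\mathcal D_B(n)$, which is exactly the membership condition for $(\mathcal D_A \hat\times \mathcal D_B)(n)$. Since $n$ was arbitrary, $\mathcal D_A \hat\times \mathcal D_B = \mathcal D_C$, and the latter is a free spectrahedron by construction. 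I do not expect a genuine obstacle here; the only point requiring a little care is making the identification of $(\mathcal H_1 \oplus \mathcal H_2) \otimes \mathbb C^n$ with $(\mathcal H_1 \otimes \mathbb C^n) \oplus (\mathcal H_2 \otimes \mathbb C^n)$ precise and tracking the resulting block structure correctly.
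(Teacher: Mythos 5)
Your proof is correct and follows essentially the same route as the paper's: both rest on the identification $(\mathcal H_1 \oplus \mathcal H_2) \otimes \mathbb C^n \cong (\mathcal H_1 \otimes \mathbb C^n) \oplus (\mathcal H_2 \otimes \mathbb C^n)$, the resulting block-diagonal form of the defining operator, and the equivalence of a block-diagonal operator inequality with the inequalities on each block. You simply spell out a few steps (naming the tuple $C$, stating the block-wise positivity fact explicitly) that the paper leaves implicit.
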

\begin{proof}
The assertion follows since $(\mathcal H_1 \oplus \mathcal H_2) \otimes \mathbb C^n \simeq (\mathcal H_1 \otimes \mathbb C^n) \oplus (\mathcal H_2 \otimes \mathbb C^n)$. Thus, 
\begin{equation*}
 \sum_{i = 1}^{k_1} (A_i \oplus 0_{\mathcal H_2}) \otimes X_i + \sum_{j = 1}^{k_2} (0_{\mathcal H_1} \oplus B_j) \otimes X_{k_1 + j} \leq I_{\mathcal H_1\oplus \mathcal H_2} \otimes I_{n}
\end{equation*}
if and only if
\begin{equation*}
\left( \sum_{i = 1}^{k_1} A_i \otimes X_i \right) \oplus \left(\sum_{j = 1}^{k_2} B_j \otimes X_{k_1 + j} \right) \leq (I_{\mathcal H_1} \otimes I_n) \oplus (I_{\mathcal H_2} \otimes I_n).
\end{equation*}
The above holds if and only if both $(X_1, \ldots, X_{k_1}) \in \mathcal D_A(n)$ and $(X_{k_1+1}, \ldots, X_{k_1+k_2}) \in \mathcal D_B(n)$.
\end{proof}

In the same way as $\mathcal F \hat \times \mathcal G$ generalizes the Cartesian product of convex sets, we now define a direct sum of matrix convex sets which generalizes the direct sum of convex sets, using the duality notion introduced in Definition \ref{def:mcs-polar}.

\begin{defi} \label{defi:mc_direct_sum}
Let $\mathcal F$, $\mathcal G$ be two matrix convex sets defined by $\mathcal F_n \in (\mathcal M_n^{sa})^{g_1}$ and $\mathcal G_n \in (\mathcal M_n^{sa})^{g_2}$ for all $n \in \mathbb N$, respectively. Their direct sum is defined as
\begin{equation*}
(\mathcal F \hat \oplus \mathcal G) := ( (\mathcal F^\bullet \otimes I) \hat \times (I \otimes \mathcal G^{\bullet}))^\bullet
\end{equation*}
Here, $(\mathcal F^\bullet \otimes I)_{n^2} := \{(X_1 \otimes I_n, \ldots, X_{g_1} \otimes I_n): X \in F^\bullet_n \}$ for all $n \in \mathbb N$ and $(\mathcal F^\bullet \otimes I)_{m} = \emptyset$ for all other $m \in \mathbb N$. The free set $I \otimes \mathcal G^{\bullet}$ is defined analogously.
\end{defi}
Since $(\mathcal F \hat \oplus \mathcal G)$ is the polar of a free set, it is a closed matrix convex set containing $0$. Again, we find that for $\mathcal F$ and $\mathcal G$ two free spectrahedra, their direct sum is a free spectrahedron as well and we can give an explicit description of it. Before we can do that, we need to prove a lemma.

\begin{lem} \label{lem:positive-on-tensor-subspaces}
Let $\mathcal H_1$, $\mathcal H_2$ be two Hilbert spaces, $A \in \mathcal B(\mathcal H_1 \otimes \mathcal H_2 \otimes \mathbb C^d)^{sa}$, where $d \in \mathbb N$. If for all $n \in \mathbb N$
\begin{equation*}
(P \otimes Q \otimes I_d) A (P \otimes Q \otimes I_d) \geq 0 
\end{equation*}
for all  $P, Q$ orthogonal projections onto $n$-dimensional subspaces of $\mathcal H_1$ and $\mathcal H_2$, then $A \geq 0$.
\end{lem}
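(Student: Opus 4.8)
The plan is to verify $A \ge 0$ by checking $\langle \psi, A\psi\rangle \ge 0$ on a dense set of vectors and reducing each such check to the finite-dimensional hypothesis. Since $A$ is bounded, it suffices to treat $\psi$ in the algebraic tensor product $\mathcal H_1 \odot \mathcal H_2 \odot \mathbb C^d$, which is dense in $\mathcal H_1 \otimes \mathcal H_2 \otimes \mathbb C^d$. Such a $\psi$ is a \emph{finite} sum $\sum_k x_k \otimes y_k \otimes z_k$, so it lies in $V_1 \otimes V_2 \otimes \mathbb C^d$, where $V_1 := \mathrm{span}\{x_k\} \subseteq \mathcal H_1$ and $V_2 := \mathrm{span}\{y_k\} \subseteq \mathcal H_2$ are finite-dimensional.

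Next I would choose an $n$ with $n \ge \dim V_1, \dim V_2$ and, enlarging $V_1, V_2$ inside $\mathcal H_1, \mathcal H_2$ if necessary (which the ambient spaces accommodate in the cases of interest), take $n$-dimensional subspaces $W_1 \supseteq V_1$ and $W_2 \supseteq V_2$; let $P \in \mathcal B(\mathcal H_1)$ and $Q \in \mathcal B(\mathcal H_2)$ be the orthogonal projections onto $W_1, W_2$. Because $\psi \in W_1 \otimes W_2 \otimes \mathbb C^d$, we have $(P \otimes Q \otimes I_d)\psi = \psi$, and therefore
\[
\langle \psi, A\psi\rangle = \big\langle (P \otimes Q \otimes I_d)\psi,\ A\,(P \otimes Q \otimes I_d)\psi \big\rangle = \big\langle \psi,\ (P \otimes Q \otimes I_d)\,A\,(P \otimes Q \otimes I_d)\,\psi \big\rangle \ge 0,
\]
using the hypothesis for this $n$ and this pair $(P,Q)$. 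Letting $\psi$ range over the dense set and using continuity of $\langle \cdot, A\,\cdot\rangle$ gives $A \ge 0$.

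There is no serious obstacle here: the argument is essentially the observation that every vector of a (finite, algebraic) tensor product is supported on a product of finite-dimensional subspaces, so the finite-level hypothesis already "sees" all of $A$. The only points needing a word of care are the reduction to the dense algebraic tensor product (where boundedness of $A$ is used, so that positivity on a dense set propagates) and the bookkeeping of matching the dimensions of $V_1$ and $V_2$ before invoking the hypothesis.
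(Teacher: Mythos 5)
Your proof is correct and is essentially the paper's argument: both reduce the positivity check to vectors supported in finite-dimensional subspaces of each tensor factor and then use boundedness of $A$ to pass to the whole space, the only difference being that you argue directly on the dense algebraic tensor product while the paper argues contrapositively by truncating a hypothetical negative vector in fixed orthonormal bases. The caveat you flag about enlarging $V_1,V_2$ to a common dimension $n$ is the same implicit assumption (ambient spaces large enough, e.g.\ infinite-dimensional) made in the paper's proof, so it is not a gap relative to the paper.
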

\begin{proof}
Let $\{e_\alpha\}_{\alpha \in A_1}$, $\{f_\beta\}_{\beta \in A_2}$ be orthonormal bases of $\mathcal H_1$ and $\mathcal H_2$, respectively. Here, $A_1$, $A_2$ are not necessarily countable sets. Moreover, let $\{g_k\}_{k \in [d]}$ be an orthonormal basis of $\mathbb C^d$. Then, $\{e_\alpha \otimes f_\beta \otimes g_k: \alpha \in A_1, \beta \in A_2, k \in [d]\}$ is an orthonormal basis of $H_1 \otimes \mathcal H_2 \otimes \mathbb C^d$ \cite[Proposition 2 in Section II.4]{Reed1980}. Let us assume that $A$ is not positive. Then, there exists a $\psi \in \mathcal H_1 \otimes \mathcal H_2 \otimes \mathbb C^d$ with $\norm{\psi} =1$ such that $\langle \psi, A \psi \rangle < 0$. We can write $\psi$ in a basis as
\begin{equation*}
\psi = \sum_{i,j = 1}^\infty \sum_{k = 1}^d \psi_{ijk} e_i \otimes f_j \otimes g_k
\end{equation*}
where $\psi_{ijk} \in \mathbb C$ for all $i$, $j \in \mathbb N$, $k \in [d]$ and the series converges in norm \cite[Theorem II.6]{Reed1980}. Let us define 
\begin{equation*}
\psi_N = \sum_{i,j = 1}^N \sum_{k = 1}^d \psi_{ijk} e_i \otimes f_j \otimes g_k
\end{equation*}
for $N \in \mathbb N$. Since the series for $\psi$ converges in norm, for every $\epsilon > 0$ $\exists N \in \mathbb N$ such that $\norm{\psi - \psi_N} \leq \epsilon$ and thus
\begin{equation*}
|\langle \psi_N, A \psi_N \rangle - \langle \psi, A \psi \rangle| \leq 2\epsilon \norm{A}_\infty  
\end{equation*}
can be seen from Bessel's inequality \cite[Corollary to Theorem II.1]{Reed1980} and the Cauchy-Schwarz-inequality. Therefore, we find that for $N$ large enough, $\langle \psi_N, A \psi_N\rangle <0$. We choose $P$ and $Q$ to be the orthogonal projections onto the space spanned by $\{e_i\}_{i \in [N]}$ and $\{f_j\}_{j \in [N]}$, respectively. Then, 
\begin{equation*}
\langle \psi_N, (P \otimes Q \otimes I_d) A (P \otimes Q \otimes I_d) \psi_N \rangle  = \langle \psi_N, A \psi_N \rangle < 0,
\end{equation*}
which contradicts the assumption.
\end{proof}

\begin{prop} \label{prop:direct-sum-FS}
Let $A \in (\mathcal B(\mathcal H_1)^{sa})^{k_1}$ and $B \in (\mathcal B(\mathcal H_2)^{sa})^{k_2}$, where $k_1$, $k_2 \in \mathbb N$. Moreover, let us assume that $\mathcal D_A(1)$ and $\mathcal D_B(1)$ are bounded. Then, their direct sum $\mathcal D_A \hat\oplus \mathcal D_B$ is the free spectrahedron defined as
\begin{align} 
&\mathcal D_A \hat\oplus \mathcal D_B (n) =\label{eq:direct-sum-FS}  \\ &\Set{X \in (\mathcal M_n^{sa})^{k_1 + k_2}: \sum_{i = 1}^{k_1} (A_i \otimes I_{\mathcal H_2}) \otimes X_i + \sum_{j = 1}^{k_2} (I_{\mathcal H_1} \otimes B_j) \otimes X_{k_1 + j} \leq I_{\mathcal H_1\otimes\mathcal H_2}\otimes I_n}.\nonumber
\end{align}
\end{prop}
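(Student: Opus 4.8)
The plan is to compute the polar dual $(\mathcal D_A \hat\oplus \mathcal D_B)^\bullet$ explicitly from Definition \ref{defi:mc_direct_sum}, then dualize back using Proposition \ref{prop:polar-duals-of-FS}. By definition, $\mathcal D_A \hat\oplus \mathcal D_B = ((\mathcal D_A^\bullet \otimes I)\hat\times(I\otimes \mathcal D_B^\bullet))^\bullet$. Since $\mathcal D_A(1)$ and $\mathcal D_B(1)$ are bounded, Proposition \ref{prop:polar-duals-of-FS} together with Lemma 3.4 of \cite{davidson2016dilations} gives $\mathcal D_A^\bullet = \mathcal W(A)$ and $\mathcal D_B^\bullet = \mathcal W(B)$, and taking polars is an involution on these sets. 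So it suffices to show that the right-hand side of \eqref{eq:direct-sum-FS} has polar dual equal to $(\mathcal W(A)\otimes I)\hat\times(I\otimes \mathcal W(B))$; equivalently, that the free spectrahedron $\mathcal D_C$ with $C = (A_1\otimes I_{\mathcal H_2}, \ldots, I_{\mathcal H_1}\otimes B_{k_2})$ satisfies $\mathcal D_C^\bullet = (\mathcal W(A)\otimes I)\hat\times(I\otimes \mathcal W(B))$. Since both sides are matrix convex sets containing $0$ (for the right-hand side, being a Cartesian product of free sets of the stated form), and $\mathcal D_C$ is a free spectrahedron with bounded scalar level, a second application of Proposition \ref{prop:polar-duals-of-FS} reduces everything to the single identity $\mathcal W(C) = (\mathcal W(A)\otimes I)\hat\times(I\otimes \mathcal W(B))$ — or, dually and more directly, to computing $\mathcal D_C^\bullet$ by hand.

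Concretely, I would argue at the level of the defining inequalities. Fix $n$ and a tuple $X\in(\mathcal M_n^{sa})^{k_1+k_2}$. Write $Y = (X_1,\ldots,X_{k_1})$ and $Z = (X_{k_1+1},\ldots,X_{k_1+k_2})$. One direction of the polar computation: $X\in\mathcal D_C^\bullet(n)$ means $\sum_i (A_i\otimes I_{\mathcal H_2})\otimes X_i + \sum_j (I_{\mathcal H_1}\otimes B_j)\otimes X_{k_1+j}\le I$ evaluated against all elements of $\mathcal D_C$; but by the duality $\mathcal W(C)^\bullet = \mathcal D_C$ this is the same as requiring $\sum_i A_i\otimes (\text{something}) + \cdots$, and the key structural fact is that an inequality of the form $\sum_i (A_i\otimes I)\otimes X_i + \sum_j (I\otimes B_j)\otimes X_{k_1+j}\le I_{\mathcal H_1\otimes\mathcal H_2\otimes\mathbb C^n}$ must be tested on product-type compressions. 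This is exactly where Lemma \ref{lem:positive-on-tensor-subspaces} enters: positivity of $I - \sum_i (A_i\otimes I_{\mathcal H_2})\otimes X_i - \sum_j(I_{\mathcal H_1}\otimes B_j)\otimes X_{k_1+j}$ on $\mathcal H_1\otimes\mathcal H_2\otimes\mathbb C^n$ is equivalent, by that lemma, to its positivity after compression by $P\otimes Q\otimes I_n$ for all finite-rank projections $P$ on $\mathcal H_1$ and $Q$ on $\mathcal H_2$. Compressing $A_i\otimes I_{\mathcal H_2}$ by $P\otimes Q$ gives $(PA_iP)\otimes Q$, and since $Q$ acts as the identity up to rescaling on its range, one can normalize and recognize the compressed operators as living in the matrix ranges $\mathcal W(A)$ and $\mathcal W(B)$ respectively; conversely every pair $(\Psi_1(A), \Psi_2(B))$ with $\Psi_1,\Psi_2$ UCP is reached this way via Stinespring. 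Running this equivalence through Definition \ref{def:mcs-polar} of the polar identifies $\mathcal D_C^\bullet(n)$ with those $X$ such that $Y$ is annihilated against $\mathcal W(A)\otimes I$ and $Z$ against $I\otimes\mathcal W(B)$ — that is, $X\in((\mathcal W(A)\otimes I)\hat\times(I\otimes\mathcal W(B)))$, up to the bookkeeping that $(\mathcal F\hat\times\mathcal G)^\bullet$ decouples as a "sum" of the two polars because the two blocks of $X$ pair independently against the two tensor factors.

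The cleanest route to the decoupling step is actually to bypass matrix ranges and compute $\mathcal D_C^\bullet$ versus the right-hand side of \eqref{eq:direct-sum-FS} via a double-polar: show directly that the right-hand side of \eqref{eq:direct-sum-FS}, call it $\mathcal D_C$, has $\mathcal D_C^\bullet = (\mathcal D_A^\bullet\otimes I)\hat\times(I\otimes\mathcal D_B^\bullet)$, and then apply $\bullet$ to both sides using that $\mathcal D_C$ is a free spectrahedron with $0$ in the interior of $\mathcal D_C(1)$ (which follows from boundedness of $\mathcal D_A(1),\mathcal D_B(1)$, guaranteeing linear independence and the interior-point condition for the Bipolar Theorem at the matricial level, Proposition \ref{prop:polar-duals-of-FS}), so $\mathcal D_C = \mathcal D_C^{\bullet\bullet} = ((\mathcal D_A^\bullet\otimes I)\hat\times(I\otimes\mathcal D_B^\bullet))^\bullet = \mathcal D_A\hat\oplus\mathcal D_B$. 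The main obstacle is the first inclusion $(\mathcal D_A^\bullet\otimes I)\hat\times(I\otimes\mathcal D_B^\bullet)\subseteq \mathcal D_C^\bullet$ in the harder direction and its converse: the "$\subseteq$" direction is a direct substitution (if $Y = Y'\otimes I$ with $Y'\in\mathcal D_A^\bullet$ and similarly for $Z$, plug in and use $\sum_i A_i\otimes Y_i'\le I$, $\sum_j B_j\otimes Z_j'\le I$, then tensor with identities and add — the cross structure makes the two pieces sit on orthogonal-ish tensor legs so the sum of two operators each $\le I$ on the full space, after the appropriate ampliation, is $\le I$; this needs a small argument that $(P\le I)\Rightarrow(P\otimes I\le I)$ and that the two ampliated pieces, living on $\mathcal H_1\otimes\mathcal H_2\otimes\mathbb C^n$, do not exceed $I$ when summed, which is where one must be slightly careful and may in fact need the $n\mapsto n^2$-type ampliation built into Definition \ref{defi:mc_direct_sum}). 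The reverse inclusion is precisely Lemma \ref{lem:positive-on-tensor-subspaces} applied to $A = I - \sum (A_i\otimes I)\otimes X_i - \sum(I\otimes B_j)\otimes X_{k_1+j}$: compressing by $P\otimes Q\otimes I_n$ and letting $P,Q$ range over all finite-rank projections shows that membership in $\mathcal D_C^\bullet$ forces, for each such compression, an inequality that after identification says $(\text{compressed }Y)\in\mathcal D_A^\bullet$ and $(\text{compressed }Z)\in\mathcal D_B^\bullet$ — and since $\mathcal D_A^\bullet, \mathcal D_B^\bullet$ are the matrix ranges $\mathcal W(A),\mathcal W(B)$, which are closed under exactly the UCP maps realized by these compressions, this is the definition of the Cartesian product of the ampliated polars. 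I expect the bookkeeping around the ampliations $\mathcal F^\bullet\otimes I$ in Definition \ref{defi:mc_direct_sum} — making sure the levels match up and that "compression by $P\otimes Q$" corresponds on the nose to an element of $(\mathcal D_A^\bullet\otimes I)\hat\times(I\otimes\mathcal D_B^\bullet)$ — to be the fiddliest part, while the analytic core is entirely contained in Lemma \ref{lem:positive-on-tensor-subspaces}.
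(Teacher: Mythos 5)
Your overall architecture --- unfold Definition \ref{defi:mc_direct_sum}, identify $\mathcal D_A^\bullet = \mathcal W(A)$ and $\mathcal D_B^\bullet = \mathcal W(B)$ via Proposition \ref{prop:polar-duals-of-FS}, and invoke Lemma \ref{lem:positive-on-tensor-subspaces} for the hard containment --- is the same as the paper's, but the pivot of your argument is an identity that is false. You repeatedly reduce the proposition to the equality $\mathcal D_C^\bullet = (\mathcal D_A^\bullet\otimes I)\hat\times(I\otimes\mathcal D_B^\bullet)$, equivalently $\mathcal W(C) = (\mathcal W(A)\otimes I)\hat\times(I\otimes\mathcal W(B))$ for $C=(A\otimes I_{\mathcal H_2}, I_{\mathcal H_1}\otimes B)$, intending then to ``apply $\bullet$ to both sides.'' Only the inclusion of the product set \emph{into} $\mathcal W(C)$ holds: $\mathcal W(C)$ contains $\Psi(C)$ for \emph{every} UCP map $\Psi$ on $\mathcal B(\mathcal H_1\otimes\mathcal H_2)$, not just product maps $\Psi_1\otimes\Psi_2$, and it is nonempty at every level $n$, whereas $(\mathcal D_A^\bullet\otimes I)\hat\times(I\otimes\mathcal D_B^\bullet)$ is by definition empty at non-square levels. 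So dualizing an equality is not available; polarity gives you only $\mathcal D_C\subseteq\mathcal D_A\hat\oplus\mathcal D_B$ from the one true inclusion, and the reverse containment $\mathcal D_A\hat\oplus\mathcal D_B\subseteq\mathcal D_C$ must be proved directly on the primal side. That is where Lemma \ref{lem:positive-on-tensor-subspaces} actually enters: take $X$ in the polar of the product set, specialize the pairing to the compression UCP maps $\Psi_1 = P\,\cdot\,P$ and $\Psi_2 = Q\,\cdot\,Q$, and conclude that $I-\sum_i(A_i\otimes I_{\mathcal H_2})\otimes X_i-\sum_j(I_{\mathcal H_1}\otimes B_j)\otimes X_{k_1+j}$ is positive because all of its $P\otimes Q\otimes I_n$ compressions are. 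Your description of this step runs the implication backwards: you speak of membership in $\mathcal D_C^\bullet$ forcing ``compressed'' pieces of $X$ into $\mathcal D_A^\bullet$, which is not meaningful, since $X$ lives on $\mathbb C^n$ and is never compressed.

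A second, smaller problem is the ``easy'' direction. Your direct substitution adds two ampliated operator inequalities each bounded by $I$, which only yields a bound by $2I$ (the worry you flag there is real, and no fix is indicated); moreover the starting inequality $\sum_i A_i\otimes Y_i'\leq I$ for $Y'\in\mathcal D_A^\bullet=\mathcal W(A)$ is itself false in general (take $Y'=A$ with $A=(\sigma_X,\sigma_Z)$). The correct easy direction involves no addition at all: $(\Psi_1(A)\otimes I_m,\, I_m\otimes\Psi_2(B)) = (\Psi_1\otimes\Psi_2)(C)$ with $\Psi_1\otimes\Psi_2$ UCP, so every element of the product set lies in $\mathcal W(C)=\mathcal D_C^\bullet$ by the definition of the matrix range; taking polars reverses this single inclusion. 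With these two repairs your outline becomes the paper's proof.
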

\begin{proof}
We note that the set on the right hand side of Equation \eqref{eq:direct-sum-FS} is $\mathcal D_{(A \otimes I_{\mathcal H_2}, I_{\mathcal H_1} \otimes B)}$. The boundedness of $\mathcal D_A(1)$ and $\mathcal D_B(1)$ implies by \cite[Lemma 3.4]{davidson2016dilations} that $0 \in \mathcal W(A)$ and $0 \in \mathcal W(B)$. From Proposition \ref{prop:polar-duals-of-FS}, we infer that 
\begin{equation*}
\mathcal D_A^\bullet = \mathcal W(A), \qquad \mathcal D_B^\bullet = \mathcal W(B).
\end{equation*}
Thus, 
\begin{align*}
((\mathcal D_A^\bullet \otimes I) \hat \times (I \otimes \mathcal D_B^\bullet))_{n^2} =& \{ (X \otimes I_n, I_n \otimes Y):~\forall i \in [2]~\exists \Psi_i \in \mathrm{UCP}(\mathcal B(\mathcal H_i), \mathcal M_n)\mathrm{~s.t.~}\Psi := \Psi_1 \otimes \Psi_2\\& (X \otimes I_n, I_n \otimes Y) = (\Psi(A_1 \otimes I_{\mathcal H_2}),\Psi(A_2 \otimes I_{\mathcal H_2}), \ldots, \Psi(I_{\mathcal H_1} \otimes B_{k_2})) \}.
\end{align*}
and all other $((\mathcal D_A^\bullet \otimes I) \hat \times (I \otimes \mathcal D_B^\bullet))_{m}$, $m \in \mathbb N$, are empty. Hence, we have the inclusion
\begin{equation*}
(\mathcal D_A^\bullet \otimes I) \hat \times (I \otimes \mathcal D_B^\bullet) \subseteq \mathcal W(A \otimes I_{\mathcal H_2}, I_{\mathcal H_1} \otimes B),
\end{equation*}
which implies 
\begin{equation*}
\mathcal D_A \hat\oplus \mathcal D_B \supseteq \mathcal D_{(A \otimes I_{\mathcal H_2}, I_{\mathcal H_1} \otimes B)}
\end{equation*}
by Proposition \ref{prop:polar-duals-of-FS}. For the reverse inclusion, let $X \in (\mathcal D_A \hat\oplus \mathcal D_B)_n$, $n \in \mathbb N$. Then, for all UCP maps $\Psi_i: \mathcal B(\mathcal H_i) \to \mathcal M_m$, $m \in \mathbb N$, $i \in [2]$,
\begin{equation*}
\sum_{i = 1}^{k_1} [(\Psi_1\otimes \Psi_2)(A_i \otimes I_{\mathcal H_2})] \otimes X_i + \sum_{j = 1}^{k_2} [(\Psi_1\otimes \Psi_2)(I_{\mathcal H_1} \otimes B_j)] \otimes X_{k_1 + j} \leq I_{m^2n}.
\end{equation*}
Consider now orthogonal projections $P, Q$ onto $m$-dimensional subspaces of $\mathcal H_{1,2}$, respectively. In particular $\Psi_i:X \mapsto PXP$ and $\Psi_i:Y \mapsto QYQ$ are valid UCP maps. Then, Lemma \ref{lem:positive-on-tensor-subspaces} implies that 
\begin{equation*}
\sum_{i = 1}^{k_1} (A_i \otimes I_{\mathcal H_2}) \otimes X_i + \sum_{j = 1}^{k_2} (I_{\mathcal H_1} \otimes B_j) \otimes X_{k_1 + j} \leq I_{\mathcal H_1\otimes\mathcal H_2}\otimes I_n
\end{equation*}
and hence $\mathcal D_A \hat \oplus \mathcal D_B \subseteq \mathcal D_{(A \otimes I_{\mathcal H_2}, I_{\mathcal H_1} \otimes B)}$.
\end{proof}

Let us now justify why we have named the object in Definition \ref{defi:mc_direct_sum} a direct sum. Before we start, we need a lemma.
\begin{lem} \label{lem:polar-level-one}
Let $\mathcal H$ be a Hilbert space and let $A \in (\mathcal B(\mathcal H)^{sa})^g$, $g \in \mathbb N$. Then, 
\begin{equation*}
(\mathcal D_A(1))^\circ =\mathcal W_1(A).
\end{equation*}
\end{lem}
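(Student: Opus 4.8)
The plan is to unfold the definitions on both sides and check the two inclusions directly; the statement is essentially a ``level-one'' shadow of Proposition \ref{prop:polar-duals-of-FS}, so no deep input is needed. First recall that by Definition \ref{def:mcs-polar}, $\mathcal D_A^{\bullet}$ is a matrix convex set, and in particular
\begin{equation*}
(\mathcal D_A)^{\bullet}_1 = \Set{x \in \mathbb R^g : \sum_{i=1}^g A_i\, x_i \leq I_{\mathcal H} \quad \forall A \in \mathcal D_A}.
\end{equation*}
Here I am using the standing identification of $\mathbb R^g$ with the diagonal $g$-tuples, and the fact that a scalar tuple $x$ lies in $(\mathcal D_A)^\bullet_1$ iff $\sum_i x_i X_i \le I$ for every $X \in \mathcal D_A$ at every level. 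On the other side, by Lemma \ref{lem:polarextremepoints} (or just the definition of the classical polar dual for a general convex set), $(\mathcal D_A(1))^\circ = \Set{x \in \mathbb R^g : \langle y, x\rangle \le 1 \ \forall y \in \mathcal D_A(1)}$, and $\mathcal W_1(A)$ is the scalar matrix range, i.e.\ the joint numerical range $\Set{(\langle\xi, A_1\xi\rangle, \ldots, \langle\xi, A_g\xi\rangle) : \|\xi\|=1}$ together with its convex hull (equivalently, the image of $A$ under states on $\mathcal B(\mathcal H)$).

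For the inclusion $\mathcal W_1(A) \subseteq (\mathcal D_A(1))^\circ$: let $w \in \mathcal W_1(A)$, so $w_i = \phi(A_i)$ for some state $\phi$ on $\mathcal B(\mathcal H)$, and let $y \in \mathcal D_A(1)$, i.e.\ $\sum_i A_i y_i \le I_{\mathcal H}$. Applying the positive unital functional $\phi$ to this operator inequality gives $\langle w, y\rangle = \sum_i \phi(A_i) y_i \le \phi(I_{\mathcal H}) = 1$, so $w \in (\mathcal D_A(1))^\circ$. For the reverse inclusion $(\mathcal D_A(1))^\circ \subseteq \mathcal W_1(A)$: both sets are closed convex subsets of $\mathbb R^g$, so it suffices to compare support functions, i.e.\ to show that for every $c \in \mathbb R^g$,
\begin{equation*}
\sup_{x \in (\mathcal D_A(1))^\circ} \langle c, x\rangle \leq \sup_{w \in \mathcal W_1(A)} \langle c, w\rangle.
\end{equation*}
The right-hand side equals $\lambda_{\max}\!\left(\sum_i c_i A_i\right)$ since $\mathcal W_1(A)$ is (the convex hull of) the joint numerical range. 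Call this value $\lambda$; if $\lambda > 0$ then $\sum_i A_i (c_i/\lambda) \le I_{\mathcal H}$, so $c/\lambda \in \mathcal D_A(1)$, hence by the definition of the classical polar every $x \in (\mathcal D_A(1))^\circ$ satisfies $\langle c/\lambda, x\rangle \le 1$, i.e.\ $\langle c, x\rangle \le \lambda$, giving the desired inequality. The case $\lambda \le 0$ is handled by a routine limiting/perturbation argument (replace $\sum_i c_i A_i$ by $\sum_i c_i A_i - (\lambda - \epsilon) I$, i.e.\ shift $c$ by a small positive multiple of a direction making the max positive, or note that then $tc \in \mathcal D_A(1)$ for all $t>0$ and hence $\langle c,x\rangle \le 0$ for all $x$ in the polar as well). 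Combining the two inclusions yields $(\mathcal D_A(1))^\circ = \mathcal W_1(A)$.

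The only point requiring a little care — and the one I expect to be the main obstacle — is the reverse inclusion when $\mathcal H$ is infinite-dimensional: there one must be slightly careful that $\sup_{w \in \mathcal W_1(A)}\langle c, w\rangle = \sup_{\|\xi\|=1}\langle \xi, (\sum_i c_i A_i)\xi\rangle$ is indeed the operator-norm upper end of the spectrum (this is standard for self-adjoint bounded operators) and that states, rather than just vector states, suffice — but since $\mathcal W_1(A)$ is defined via UCP maps into $\mathcal M_1 = \mathbb C$, i.e.\ via states, and the closure is already built in, this matches up cleanly. Alternatively, one can avoid support functions entirely and argue: $(\mathcal D_A(1))^\circ = (\mathcal D_A^\bullet)_1^{\circ\circ}$ is superfluous — instead simply invoke Proposition \ref{prop:polar-duals-of-FS} together with the observation that taking level one commutes with the classical polar, which is precisely the content of the computation above. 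I would present the direct two-inclusion argument as it is the most self-contained.
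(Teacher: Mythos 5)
Your argument is correct where the lemma itself is correct, and it is essentially the paper's proof with the abstract step unpacked: the paper shows $\mathcal D_A(1)=\mathcal W_1(A)^\circ$ (your first inclusion is exactly the ``states applied to the LMI'' direction) and then invokes the Bipolar Theorem, whereas you re-derive the bipolar step by hand via the support-function comparison, using that $\sup_{w\in\mathcal W_1(A)}\langle c,w\rangle=\lambda_{\max}\bigl(\sum_i c_iA_i\bigr)$ to produce the point $c/\lambda\in\mathcal D_A(1)$. The one place to be careful is the case $\lambda<0$: there your fallback only yields $\langle c,x\rangle\le 0$, not $\le\lambda$, and no perturbation can fix this, because $0\in(\mathcal D_A(1))^\circ$ always while $0\notin\mathcal W_1(A)$ when some $\lambda<0$ --- so what the argument really proves in general is $(\mathcal D_A(1))^\circ=\overline{\mathrm{conv}}(\mathcal W_1(A)\cup\{0\})$. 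This is not a defect relative to the paper: its own proof implicitly needs $0\in\mathcal W_1(A)$ for the bipolar identity $\mathcal W_1(A)^{\circ\circ}=\mathcal W_1(A)$, and in every application $\mathcal D_A(1)$ is bounded, whence $0\in\operatorname{int}\mathcal W_1(A)$ and $\lambda>0$ for all $c\neq 0$, so your case distinction closes cleanly there.
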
 
\begin{proof}
By definition, $x \in \mathcal D_A(1)$ if and only if
\begin{equation*}
\sum_{i = 1}^g x_i \Psi(A_i) \leq 1 \qquad \forall \Psi \in \mathrm{UCP}(\mathcal B(\mathcal H), \mathbb C),
\end{equation*}
since in particular the maps $\Psi:\mathcal B(\mathcal H) \to \mathbb C$, $\Psi: X \mapsto \langle \psi, X \psi \rangle$ with $\psi \in \mathcal H$, $\norm{\psi} = 1$, are UCP.
Hence, $\mathcal D_A(1)=\mathcal W_1(A)^\circ$. The assertion follows from the Bipolar Theorem \cite[Theorem IV.1.2]{Barvinok2002} since $\mathcal W_1(A)$ is a closed convex set which contains $0$ in its interior if and only if $\mathcal D_A(1)$ is bounded \cite[Lemma 3.4]{davidson2016dilations}.
\end{proof}
\begin{remark}
In finite dimensions, $\mathcal W_1(A)$ is just the convex hull $\mathcal C$ of $\{(\langle \psi, A_1 \psi \rangle, \ldots, \langle \psi, A_g \psi \rangle): \psi \in \mathcal H, \norm{\psi} = 1\}$. In infinite dimensions, $\mathcal C$ might not be closed and we have to consider $\mathcal W_1(A)$ instead.
\end{remark}
\begin{prop}
Let $\mathcal H_1$, $\mathcal H_2$ be two Hilbert spaces and let $A \in (\mathcal B(\mathcal H_1)^{sa})^{k_1}$, $B \in (\mathcal B(\mathcal H_2)^{sa})^{k_2}$, where $k_1$, $k_2 \in \mathbb N$. Furthermore, let $\mathcal D_A(1)$ and $\mathcal D_B(1)$ be polytopes. Then, 
\begin{equation*}
(\mathcal D_A \hat \oplus \mathcal D_B)(1) = \mathcal D_A(1) \oplus \mathcal D_B(1).
\end{equation*}
\end{prop}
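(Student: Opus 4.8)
The plan is to reduce the claim to the classical polytope identity of Lemma~\ref{lem:bremner}, using the explicit description of the direct sum of free spectrahedra and the level‑one polarity of Lemma~\ref{lem:polar-level-one}. Since $\mathcal D_A(1)$ and $\mathcal D_B(1)$ are polytopes they are bounded, and they automatically contain $0$ (every free spectrahedron does, as $\sum_i A_i\otimes 0=0\leq I$). Hence Proposition~\ref{prop:direct-sum-FS} applies and identifies $\mathcal D_A\hat\oplus\mathcal D_B$ with the free spectrahedron $\mathcal D_C$, where $C:=(A\otimes I_{\mathcal H_2},\,I_{\mathcal H_1}\otimes B)$. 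Thus $(\mathcal D_A\hat\oplus\mathcal D_B)(1)$ equals the set $S$ of $x\in\mathbb R^{k_1+k_2}$ for which $M_x:=\sum_{i=1}^{k_1}x_i(A_i\otimes I_{\mathcal H_2})+\sum_{j=1}^{k_2}x_{k_1+j}(I_{\mathcal H_1}\otimes B_j)\leq I_{\mathcal H_1\otimes\mathcal H_2}$, and it remains to show $S=\mathcal D_A(1)\oplus\mathcal D_B(1)$.

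The core step is the identity $S=(\mathcal W_1(A)\times\mathcal W_1(B))^\circ$. For "$\subseteq$": given states $\Psi_1\in\mathrm{UCP}(\mathcal B(\mathcal H_1),\mathbb C)$ and $\Psi_2\in\mathrm{UCP}(\mathcal B(\mathcal H_2),\mathbb C)$, the product $\Psi_1\otimes\Psi_2$ is a state on the minimal tensor product and extends (Hahn–Banach) to a state $\omega$ on $\mathcal B(\mathcal H_1\otimes\mathcal H_2)$ with $\omega(A_i\otimes I)=\Psi_1(A_i)$ and $\omega(I\otimes B_j)=\Psi_2(B_j)$; applying $\omega$ to $I-M_x\geq 0$ gives $\langle(\Psi_1(A),\Psi_2(B)),x\rangle\leq 1$, and since $(\Psi_1(A_i))_i$ and $(\Psi_2(B_j))_j$ sweep out all of $\mathcal W_1(A)$ and $\mathcal W_1(B)$ independently, we get $x\in(\mathcal W_1(A)\times\mathcal W_1(B))^\circ$. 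For "$\supseteq$": take $x$ in that polar and a unit vector $\psi\in\mathcal H_1\otimes\mathcal H_2$; its reduced density operators $\rho_1,\rho_2$ give states $X\mapsto\operatorname{tr}(\rho_iX)$, which are UCP maps into $\mathbb C$, so $(\operatorname{tr}(\rho_1A_i))_i\in\mathcal W_1(A)$ and $(\operatorname{tr}(\rho_2B_j))_j\in\mathcal W_1(B)$; then $\langle\psi,M_x\psi\rangle=\sum_i x_i\operatorname{tr}(\rho_1A_i)+\sum_j x_{k_1+j}\operatorname{tr}(\rho_2B_j)$ is the pairing of $x$ with a point of $\mathcal W_1(A)\times\mathcal W_1(B)$, hence $\leq 1$, and as $\psi$ is arbitrary, $M_x\leq I$, i.e. $x\in S$. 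This partial‑trace computation (and the product‑state extension in the other direction) is where the genuine content lies — it is the only place the tensor structure of $\mathcal H_1\otimes\mathcal H_2$, rather than a mere direct sum of Hilbert spaces, enters — and it is the step I expect to need the most care, in particular if one wants to be scrupulous about the infinite‑dimensional case.

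Finally, Lemma~\ref{lem:polar-level-one} gives $\mathcal W_1(A)=(\mathcal D_A(1))^\circ$ and $\mathcal W_1(B)=(\mathcal D_B(1))^\circ$ (applicable since $\mathcal D_A(1),\mathcal D_B(1)$ are bounded), so $S=((\mathcal D_A(1))^\circ\times(\mathcal D_B(1))^\circ)^\circ$. Because $\mathcal D_A(1)$ and $\mathcal D_B(1)$ are polytopes containing $0$, Lemma~\ref{lem:bremner} rewrites this as $\mathcal D_A(1)\oplus\mathcal D_B(1)$. Chaining the equalities yields $(\mathcal D_A\hat\oplus\mathcal D_B)(1)=S=\mathcal D_A(1)\oplus\mathcal D_B(1)$, as claimed.
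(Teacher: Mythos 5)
Your proof is correct and follows essentially the same route as the paper's: identify $(\mathcal D_A\hat\oplus\mathcal D_B)(1)$ with $(\mathcal W_1(A)\times\mathcal W_1(B))^\circ$ via product states, then apply Lemma~\ref{lem:polar-level-one} and Lemma~\ref{lem:bremner}. The only (harmless) deviation is in the reverse inclusion, where you verify $M_x\leq I$ directly through reduced density operators instead of checking the generators $(x,0),(0,y)$ against the LMI and invoking the Bipolar Theorem as the paper does.
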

\begin{proof}
It is easy to see from Proposition \ref{prop:direct-sum-FS} that $(x,0)$ and $(0,y)$ are in $(\mathcal D_A \hat \oplus \mathcal D_B)(1)$ for all $x \in \mathcal D_A(1)$ and all $y \in \mathcal D_B(1)$. Thus, $\mathcal D_A(1) \oplus \mathcal D_B(1) \subseteq \mathcal (D_A \hat \oplus \mathcal D_B)(1)$. For the converse, consider $(x,y) \in (\mathcal D_A \hat \oplus \mathcal D_B)(1)$. Then, in particular
\begin{equation*}
\sum_{i = 1}^{k_1} x_i \Psi_1(A_i) + \sum_{j = 1}^{k_2} y_j \Psi_2(B_j) \leq 1 \qquad \forall \Psi_i \in \mathrm{UCP}(\mathcal B(\mathcal H_i), \mathbb C), i \in [2].
\end{equation*}
This can be seen from an application of $\Psi_1 \otimes \Psi_2 \otimes \mathrm{Id}$ to Equation \eqref{eq:direct-sum-FS}. Thus, $(\mathcal D_A \hat \oplus \mathcal D_B)(1)^{\circ} \supseteq \mathcal W_1(A) \times \mathcal W_1(B)$. Taking the polar dual and applying the Bipolar Theorem \cite[Theorem IV.1.2]{Barvinok2002}, we obtain 
\begin{equation*}
(\mathcal D_A \hat \oplus \mathcal D_B)(1) \subseteq (\mathcal W_1(A) \times \mathcal W_1(B))^\circ = \mathcal D_A(1) \oplus \mathcal D_B(1).
\end{equation*}
The equality on the right hand side follows from Lemmas \ref{lem:bremner} and \ref{lem:polar-level-one}.
\end{proof}
\begin{cor}
Let $\mathcal F$ and $\mathcal G$ be closed matrix convex sets  with $0$ in their interior and such that $\mathcal F_1$ and $\mathcal G_1$ are polytopes. Then,
\begin{equation*}
(\mathcal F \hat \oplus \mathcal G)_1 = \mathcal F_1 \oplus \mathcal G_1.
\end{equation*}
\end{cor}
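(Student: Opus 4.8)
The plan is to reduce the statement to the two preceding results by invoking the structural characterization of free spectrahedra inside the class of matrix convex sets. Recall that a closed matrix convex set containing $0$ in its interior is a free spectrahedron (\cite[Proposition 3.5]{davidson2016dilations}). So first I would produce tuples of self-adjoint bounded operators $A \in (\mathcal B(\mathcal H_1)^{sa})^{g_1}$ and $B \in (\mathcal B(\mathcal H_2)^{sa})^{g_2}$, on Hilbert spaces $\mathcal H_1, \mathcal H_2$ (a priori infinite-dimensional), with $\mathcal F = \mathcal D_A$ and $\mathcal G = \mathcal D_B$.

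Next I would note that the hypotheses of the two previous propositions are automatically satisfied: $\mathcal F_1 = \mathcal D_A(1)$ and $\mathcal G_1 = \mathcal D_B(1)$ are polytopes by assumption, hence in particular compact and therefore bounded. Thus Proposition \ref{prop:direct-sum-FS} applies and identifies $\mathcal F \hat\oplus \mathcal G = \mathcal D_A \hat\oplus \mathcal D_B$ with the explicit free spectrahedron $\mathcal D_{(A \otimes I_{\mathcal H_2},\, I_{\mathcal H_1} \otimes B)}$, and the proposition immediately preceding this corollary applies (it requires exactly that $\mathcal D_A(1)$ and $\mathcal D_B(1)$ be polytopes). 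Applying that proposition then gives
\[
(\mathcal F \hat\oplus \mathcal G)_1 = (\mathcal D_A \hat\oplus \mathcal D_B)(1) = \mathcal D_A(1) \oplus \mathcal D_B(1) = \mathcal F_1 \oplus \mathcal G_1,
\]
which is the claim. In other words, once the objects are recognized as free spectrahedra, the corollary is just the level-one statement of the previous proposition.

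The only genuine subtlety — the point I would be most careful about — is the interface between the hypothesis "$0$ in the interior" as phrased for the corollary and the precise hypothesis under which \cite{davidson2016dilations} yields the free-spectrahedron representation, together with the fact that one may be forced into the operator-valued (infinite-dimensional) setting rather than the matrix-tuple setting. This is, however, exactly why Proposition \ref{prop:direct-sum-FS} and the preceding proposition were formulated for bounded operators on arbitrary Hilbert spaces (with Lemma \ref{lem:positive-on-tensor-subspaces} doing the work of reducing positivity on the full tensor product to positivity on finite-dimensional compressions), so no extra argument is needed; everything beyond this identification is bookkeeping.
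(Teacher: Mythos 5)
Your proposal is correct and follows essentially the same route as the paper: both invoke \cite[Proposition 3.5]{davidson2016dilations} to realize $\mathcal F$ and $\mathcal G$ as free spectrahedra $\mathcal D_A$, $\mathcal D_B$ for tuples of bounded self-adjoint operators, and then reduce to the level-one identity $(\mathcal D_A \hat\oplus \mathcal D_B)(1) = \mathcal D_A(1) \oplus \mathcal D_B(1)$ established just before the corollary. Your explicit remark that polytopes are compact (hence the boundedness hypotheses of Proposition \ref{prop:direct-sum-FS} are met) is a small but accurate piece of bookkeeping that the paper leaves implicit.
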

\begin{proof}
Proposition 3.5 of \cite{davidson2016dilations} shows that there are $g_1$, $g_2 \in \mathbb N$, Hilbert spaces $\mathcal H_1$, $\mathcal H_2$ and $A \in (\mathcal B(\mathcal H_1)^{sa})^{g_1}$, $B \in (\mathcal B(\mathcal H_2)^{sa})^{g_2}$ such that
\begin{equation*}
\mathcal F = \mathcal D_A \qquad \mathrm{and} \qquad \mathcal G = \mathcal D_B.
\end{equation*}
The assertion follows from Proposition \ref{prop:direct-sum-FS}.
\end{proof}

The direct sum we have defined behaves nicely with respect to the maximal spectrahedra for polytopes, as the next lemma shows.

\begin{lem}\label{lem:wmaxpolytopes}
	Let $\mathcal P_1$, $\mathcal P_2$ be two polytopes such that $0 \in \mathrm{int}(\mathcal P_i)$, $i \in [2]$. Then $\mathcal W_{max}(\mathcal P_1 \oplus P_2) = \mathcal W_{max}(\mathcal P_1) \hat \oplus \mathcal W_{max}(\mathcal P_2)$.
\end{lem}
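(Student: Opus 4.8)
The plan is to reduce both sides of the claimed identity to explicit free spectrahedral descriptions and then match them. Since each $\mathcal P_i$ is a polytope with $0$ in its interior, I first fix a half-space description $\mathcal P_i = \{x \in \mathbb R^{g_i} : \langle c^{(i)}_j, x\rangle \le 1,\ j \in [m_i]\}$, where all right-hand sides may be taken equal to $1$ precisely because $0 \in \mathrm{int}(\mathcal P_i)$. By the remark following \eqref{eq:Wmax}, $\mathcal W_{max}(\mathcal P_i)$ is then the free spectrahedron $\mathcal D_{A^{(i)}}$, where $A^{(i)} = (A^{(i)}_1, \ldots, A^{(i)}_{g_i})$ with $A^{(i)}_l$ the $m_i \times m_i$ diagonal matrix with entries $(A^{(i)}_l)_{jj} = c^{(i)}_j(l)$: it suffices to impose the facet inequalities at the matrix level, since every pair $(c,\alpha)$ with $\mathcal P_i \subseteq \{x : \langle c,x\rangle \le \alpha\}$ is a nonnegative combination of the pairs $(c^{(i)}_j, 1)$ together with the trivial pair $(0,1)$ (this is LP duality for the full-dimensional bounded polyhedron $\mathcal P_i$).

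Next I compute the right-hand side. Since $\mathcal D_{A^{(i)}}(1) = \mathcal P_i$ is bounded, Proposition \ref{prop:direct-sum-FS} applies and gives
\[
\mathcal W_{max}(\mathcal P_1) \hat\oplus \mathcal W_{max}(\mathcal P_2) \;=\; \mathcal D_{A^{(1)}} \hat\oplus \mathcal D_{A^{(2)}} \;=\; \mathcal D_{(A^{(1)} \otimes I_{m_2},\, I_{m_1} \otimes A^{(2)})}.
\]
Because $A^{(1)}_l \otimes I_{m_2}$ and $I_{m_1} \otimes A^{(2)}_l$ are diagonal with rows indexed by $(j_1,j_2) \in [m_1]\times[m_2]$, the tensor-with-identity structure makes $\sum_l (A^{(1)}_l \otimes I_{m_2})\otimes X_l + \sum_l (I_{m_1}\otimes A^{(2)}_l)\otimes Y_l$ block-diagonal over $(j_1,j_2)$, with $(j_1,j_2)$-block $\sum_l c^{(1)}_{j_1}(l)X_l + \sum_l c^{(2)}_{j_2}(l)Y_l$. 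Hence, at level $n$, $\mathcal D_{(A^{(1)} \otimes I_{m_2},\, I_{m_1} \otimes A^{(2)})}(n)$ is exactly the set of $(X,Y)$ satisfying $\sum_l c^{(1)}_{j_1}(l)X_l + \sum_l c^{(2)}_{j_2}(l)Y_l \le I_n$ for all $(j_1,j_2)$.

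Finally I identify the left-hand side. By Lemma \ref{lem:bremner}, $\mathcal P_1 \oplus \mathcal P_2 = (\mathcal P_1^\circ \times \mathcal P_2^\circ)^\circ$. Combining Lemma \ref{lem:polarextremepoints} with the bipolar theorem, $\mathcal P_i^\circ = \mathrm{conv}\{c^{(i)}_j : j \in [m_i]\}$, so its extreme points lie among the $c^{(i)}_j$; by Lemma \ref{lem:cartesianfaces} the extreme points of $\mathcal P_1^\circ \times \mathcal P_2^\circ$ lie among the pairs $(c^{(1)}_{j_1}, c^{(2)}_{j_2})$. A second application of Lemma \ref{lem:polarextremepoints} then yields the half-space description
\[
\mathcal P_1 \oplus \mathcal P_2 \;=\; \Big\{ (x,y) \;:\; \textstyle\sum_l c^{(1)}_{j_1}(l)\,x_l + \sum_l c^{(2)}_{j_2}(l)\,y_l \le 1 \ \ \forall (j_1,j_2) \in [m_1]\times[m_2] \Big\},
\]
possibly with redundant inequalities (harmless), and $0$ lies in its interior since $0 \in \mathrm{int}(\mathcal P_i)$. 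Applying the first step once more, $\mathcal W_{max}(\mathcal P_1 \oplus \mathcal P_2)$ is the free spectrahedron obtained by imposing exactly these inequalities at the matrix level, which is precisely the spectrahedron $\mathcal D_{(A^{(1)} \otimes I_{m_2},\, I_{m_1} \otimes A^{(2)})}$ from the previous paragraph; hence the two sides coincide.

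I expect the main difficulty to be purely organizational rather than conceptual: one must invoke the reduction ``$\mathcal W_{max}$ of a polytope equals the spectrahedron cut out by any half-space list of that polytope'' consistently on both sides, and verify that the diagonal/tensor pattern of $(A^{(1)}\otimes I_{m_2}, I_{m_1}\otimes A^{(2)})$ reproduces exactly the vertex-pairs of $\mathcal P_1^\circ \times \mathcal P_2^\circ$ supplied by Lemma \ref{lem:cartesianfaces}. A minor bookkeeping point is that the defining normals $c^{(i)}_j$ need not all be vertices of $\mathcal P_i^\circ$ (nor the pairs all vertices of the product), but this only adds redundant defining inequalities and leaves the resulting free spectrahedron unchanged.
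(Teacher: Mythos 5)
Your proof is correct and follows essentially the same route as the paper's: normalize the half-space descriptions of the $\mathcal P_i$ using $0 \in \mathrm{int}(\mathcal P_i)$, combine Lemmas \ref{lem:bremner}, \ref{lem:polarextremepoints} and \ref{lem:cartesianfaces} to describe $\mathcal P_1 \oplus \mathcal P_2$ by the inequalities indexed by pairs $(j_1,j_2)$, and match the resulting diagonal tensor structure against Proposition \ref{prop:direct-sum-FS}. Your extra care about redundant inequalities and the LP-duality justification that $\mathcal W_{max}$ of a polytope is cut out by any half-space list is a welcome elaboration of what the paper delegates to the remark following \eqref{eq:Wmax}, but it does not change the argument.
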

\begin{proof}
	By a refined version of the Weyl-Minkowski theorem, \cite[Lemma VI.1.5]{Barvinok2002}, there exist $c_s^{(i)} \in \mathbb R^{k_i}$, $\alpha^{(i)}_s \in \mathbb R$ such that 
	\begin{equation*}
	\mathcal P_i = \Set{x \in \mathbb R^{k_i}: \langle c_s^{(i)}, x\rangle \leq \alpha_s^{(i)} \quad \forall s \in [m_i]},
	\end{equation*}
	where $m_i \in \mathbb N$. Furthermore, $\mathcal F_s^{(i)} = \Set{p_i \in \mathcal P_i : \langle c_s^{(i)}, p_i\rangle = \alpha_s^{(i)}}$ are the facets of $\mathcal P_i$. By assumption, $0 \in \mathrm{int}(\mathcal P_i)$, and thus $\alpha_s^{(i)} > 0$. Therefore, we can write
	\begin{equation*}
	\mathcal P_i = \Set{x \in \mathbb R^{k_i}: \sum_{j = 1}^{k_i} x_j P_j^{(i)} \leq I_{m_i}} = \Set{x \in \mathbb R^{k_i}: \langle h_s^{(i)}, x\rangle \leq 1 \quad \forall s \in [m_i]},
	\end{equation*}
	where $h_s^{(i)} = c_s^{(i)} / \alpha_s^{(i)}$ and $P_j^{(i)} \in \mathbb R^{m_i}$ such that $P_j^{(i)}(s) = h_s^{(i)}(j)$. As indicated in the beginning of Section \ref{sec:direct-sum}, we identify here vectors in $\mathbb R^{m_i}$ with diagonal $m_i \times m_i$-matrices. Combining Lemma \ref{lem:cartesianfaces} and the fact that facets of a polytope correspond to extreme points of its polar \cite[Theorem VI.1.3]{Barvinok2002}, we find that the extreme points of $\mathcal P_1^\circ \times \mathcal P_2^\circ$ are $(h_{s_1}^{(1)},h_{s_2}^{(2)})$, $s_i \in [m_i]$, $i \in [2]$. Using Lemma \ref{lem:bremner} and Lemma \ref{lem:polarextremepoints}, we obtain
	\begin{equation*}
	\mathcal P_1 \oplus \mathcal P_2 = \Set{(x_1,x_2) \in \mathbb R^{k_1} \times \mathbb R^{k_2}: \langle (h_{s_1}^{(1)},h_{s_2}^{(2)}), (x_1,x_2) \rangle \leq 1 \quad \forall s_i \in [m_i],\, i=1,2}.
	\end{equation*}
	Thus, we find that the $(h_{s_1}^{(1)},h_{s_2}^{(2)})$ are the hyperplanes defining $\mathcal P_1 \oplus \mathcal P_2$. Moreover, we can again write this in spectrahedral form, 
	\begin{equation*}
	\mathcal P_1 \oplus \mathcal P_2 = \Set{x \in \mathbb R^{k_1 + k_2}: \sum_{j = 1}^{k_1 + k_2} x_j Q_j \leq I_{m_1 m_2}}.
	\end{equation*}
	Here, $Q_j \in \mathbb R^{m_1 m_2}$, where $Q_j(s_1,s_2) := (h_{s_1}^{(1)},h_{s_2}^{(2)})_j$. Hence, by the definition of the maximal spectrahedron,
	\begin{equation*}
	\mathcal W_{max}(\mathcal P_1 \oplus \mathcal P_2)(n) = \Set{X \in (\mathcal M_n^{sa})^{k_1 + k_2}: \sum_{j = 1}^{k_1 + k_2} Q_j \otimes X_j \leq I_{n m_1 m_2}}.
	\end{equation*}
	
	Evaluating the expression for the $Q_j$ further, we infer
	\begin{align*}
	Q_j(s_1,s_2) &= \begin{cases} h_{s_1}^{(1)}(j)= P_j^{(1)}(s_1) & 1 \leq j \leq k_1 \\h_{s_2}^{(2)}(j-k_1)= P_{j-k_1}^{(2)}(s_2) & k_1 + 1 \leq j \leq k_1+k_2\end{cases}\\ & = \begin{cases} (P_j^{(1)} \otimes I_{k_2})(s_1,s_2) & 1 \leq j \leq k_1 \\ (I_{k_1} \otimes P^{(2)}_{j-k_1})(s_1,s_2) & k_1 + 1 \leq j \leq k_1+k_2 \end{cases}.
	\end{align*}
	This proves the assertion.
\end{proof}
\begin{remark}
	The assumption $0 \in \mathrm{int}(\mathcal{P})$ is needed to ensure that the polytope $\mathcal P$ can be written as a linear matrix inequality with the identity matrix on the right hand side as in Equation \eqref{eq:lmi}. 
\end{remark}

The next result shows that level-$1$ inclusion of the direct sum of two polytopes \emph{into} a spectrahedron amounts to individual inclusion of each polytope into the corresponding part of the spectrahedron. 

\begin{lem} \label{lem:sumfallsapart}
	Let $A^{(i)} \in (\mathcal M_d^{sa})^{k_i}$, $k_i \in \mathbb N$, $i=1,2$ be two tuples of matrices and $\mathcal P_j \subset \mathbb R^{k_j}$, $j=1,2$ two polytopes. Then,
	\begin{equation*}
	\mathcal P_1 \oplus \mathcal P_2 \subseteq \mathcal D_{(A^{(1)},A^{(2)})}(1) \iff \mathcal P_i \subseteq \mathcal D_{A^{(i)}}(1) \quad i=1,2. 
	\end{equation*}
\end{lem}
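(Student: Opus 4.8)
The plan is to read off both implications directly from the definition of the direct sum of polytopes (Definition \ref{def:direct-sum-convex-sets}) and from the explicit level-one description of the free spectrahedron. Writing $C := (A^{(1)}_1, \ldots, A^{(1)}_{k_1}, A^{(2)}_1, \ldots, A^{(2)}_{k_2})$ for the concatenated $(k_1+k_2)$-tuple, we have
$$\mathcal D_{(A^{(1)},A^{(2)})}(1) = \Set{(x,y) \in \mathbb R^{k_1} \times \mathbb R^{k_2}: \sum_{i=1}^{k_1} x_i A^{(1)}_i + \sum_{j=1}^{k_2} y_j A^{(2)}_j \leq I_d},$$
and the point is simply that substituting $y = 0$ (resp.\ $x = 0$) makes the second (resp.\ first) sum vanish, so that the slices of $\mathcal D_{(A^{(1)},A^{(2)})}(1)$ by the coordinate subspaces $\mathbb R^{k_1} \times \{0\}$ and $\{0\} \times \mathbb R^{k_2}$ are exactly $\mathcal D_{A^{(1)}}(1) \times \{0\}$ and $\{0\} \times \mathcal D_{A^{(2)}}(1)$, respectively. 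Note also that $\mathcal D_{(A^{(1)},A^{(2)})}(1)$ is convex, being the preimage of $\{M \in \mathcal M_d^{sa}: M \leq I_d\}$ under a linear map.

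For the direction ``$\Leftarrow$'', assume $\mathcal P_i \subseteq \mathcal D_{A^{(i)}}(1)$ for $i = 1,2$. By the slice description above, $(x,0) \in \mathcal D_{(A^{(1)},A^{(2)})}(1)$ for all $x \in \mathcal P_1$ and $(0,y) \in \mathcal D_{(A^{(1)},A^{(2)})}(1)$ for all $y \in \mathcal P_2$. Hence $\mathcal D_{(A^{(1)},A^{(2)})}(1)$ contains the union $\{(x,0): x \in \mathcal P_1\} \cup \{(0,y): y \in \mathcal P_2\}$, and since it is convex it contains the convex hull of this union, which by Definition \ref{def:direct-sum-convex-sets} is $\mathcal P_1 \oplus \mathcal P_2$.

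For the direction ``$\Rightarrow$'', assume $\mathcal P_1 \oplus \mathcal P_2 \subseteq \mathcal D_{(A^{(1)},A^{(2)})}(1)$. For $x \in \mathcal P_1$ we have $(x,0) \in \mathcal P_1 \oplus \mathcal P_2$ directly from Definition \ref{def:direct-sum-convex-sets}, so $(x,0) \in \mathcal D_{(A^{(1)},A^{(2)})}(1)$, which by the slice description means $x \in \mathcal D_{A^{(1)}}(1)$; thus $\mathcal P_1 \subseteq \mathcal D_{A^{(1)}}(1)$, and symmetrically $\mathcal P_2 \subseteq \mathcal D_{A^{(2)}}(1)$. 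I do not expect any genuine obstacle: the statement is a bookkeeping identity matching the block structure of the defining tuple $(A^{(1)},A^{(2)})$ to the coordinate-axis structure of $\mathcal P_1 \oplus \mathcal P_2$. The only things to be careful about are that the embeddings $\{(x,0)\}$ and $\{(0,y)\}$ lie in $\mathcal P_1 \oplus \mathcal P_2$ by construction, with no hypothesis such as $0 \in \mathcal P_i$ needed (in contrast to Lemmas \ref{lem:bremner} and \ref{lem:wmaxpolytopes}), and that one records the convexity of the spectrahedral slice used in the ``$\Leftarrow$'' direction.
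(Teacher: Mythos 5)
Your proof is correct and follows essentially the same route as the paper: both arguments reduce to the fact that $\mathcal P_1 \oplus \mathcal P_2$ is the convex hull of the embedded copies of $\mathcal P_1$ and $\mathcal P_2$, so inclusion into the convex set $\mathcal D_{(A^{(1)},A^{(2)})}(1)$ can be checked on those generators, whose membership is read off block by block. The paper phrases this via the extreme points of $\mathcal P_1 \oplus \mathcal P_2$ while you use the full embedded copies together with convexity of the spectrahedron, which is a purely cosmetic difference (and your explicit remark that no hypothesis $0 \in \mathcal P_i$ is needed is accurate).
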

\begin{proof}
	Let $\Set{w_j^{(i)}}_{j = 1}^{m_i} \subset \mathbb R^{k_i}$ be the set of extreme points of $\mathcal P_i$ with $m_i \in \mathbb N$. Then, the set of extreme points of $\mathcal P_1 \oplus \mathcal P_2$ is $\Set{(w_{j_1}^{(1)},0), (0,w^{(2)}_{j_2}): j_i \in [m_i],\, i=1,2}$. This can easily be seen from the definition. Since the inclusion of polytopes can be checked at the extreme points, the assertion follows. 
\end{proof}

To finish this Section, let us compare the Cartesian product and direct sum of matrix convex sets we have defined to each other and other constructions in the literature.
\begin{remark}\label{rem:different-direct-sums}
As pointed out earlier, our definition of Cartesian product coincides with Definition 4.1 of \cite{Passer2019}. Proposition 4.5 of \cite{Passer2019} shows that for closed and bounded convex sets $\mathcal C$, $\mathcal D$ in $\mathbb R^{g_1}$ and $\mathbb R^{g_2}$ respectively, 
\begin{equation*}
\mathcal W_{max}(\mathcal C \times \mathcal D) = \mathcal W_{max}(\mathcal C)\hat \times \mathcal W_{max}(\mathcal D). 
\end{equation*}
Thus,
\begin{equation*}
\mathcal W_{max}([-1,1])\hat \times \mathcal W_{max}([-1,1]) = \mathcal W_{max}([-1,1]^2) = \mathcal D_{\square, 2},
\end{equation*}
where $\mathcal D_{\square, 2}$ is the complex matrix cube \cite[Example 2.3]{davidson2016dilations} (see also Equation \eqref{eq:matrix-cube}). For the direct sum, Lemma \ref{lem:wmaxpolytopes} implies
\begin{equation*}
\mathcal W_{max}([-1,1])\hat \oplus \mathcal W_{max}([-1,1]) = \mathcal W_{max}(\mathcal B_1(\mathbb R^2))=\mathcal D_{\diamond, 2}
\end{equation*}
Here, $\mathcal B_1(\mathbb R^2)$ is the $\ell_1$-ball in $\mathbb R^2$ and $\mathcal D_{\diamond, 2}$ the matrix diamond \cite[Section 10.3]{davidson2016dilations}. We see that the matrix diamond and the matrix cube differ only with respect to the operation used to construct a new matrix convex set from copies of $\mathcal W_{max}([-1,1])$. The paper \cite{Passer2019} considers yet another operation on matrix convex sets in Definition 4.1 which the authors call $\times_1$. For matrix convex sets $\mathcal F$ and $\mathcal G$, also $(\mathcal F \times_1 \mathcal G)_1 = \mathcal F_1 \oplus \mathcal G_1$ holds, such that it generalizes the direct sum of convex sets. However, \cite[Proposition 4.5]{Passer2019} shows that for closed and bounded matrix convex sets,
\begin{equation*}
\mathcal W_{min}(\mathcal F_1 \oplus \mathcal G_1) = \mathcal W_{min}(\mathcal F_1) \times_1 \mathcal W_{min}(\mathcal G_1),
\end{equation*}
where $\mathcal W_{min}(\mathcal F_1)$ is the minimal matrix convex set with $\mathcal F_1$ at the first level (see \cite[Section 4]{davidson2016dilations} for details). By \cite[Section 4]{passer2018minimal}, $\mathcal W_{min}(\mathcal C) = \mathcal W_{max}(\mathcal C)$ for a compact convex set $\mathcal C \subset \mathbb R^g$ if and only if $\mathcal C$ is a simplex. Thus,
\begin{equation*}
\mathcal W_{max}([-1,1])\hat \times_1 \mathcal W_{max}([-1,1]) = \mathcal W_{min}(\mathcal B_1(\mathbb R^2)),
\end{equation*}
because the matrix convex set with $[-1,1]$ at the first level is unique. Since $\mathcal B_1(\mathbb R^2)$ is not a simplex, $\mathcal W_{min}(\mathcal B_1(\mathbb R^2)) \neq \mathcal D_{\diamond, 2}$ and we find that $\hat \oplus$ and $\times_1$ are different operations in general. We remark that $\times$ and $\oplus$ are dual for usual convex sets are dual operations, but $\hat \times$ and $\hat \oplus$ do not give rise to dual matrix convex sets. 
\end{remark}

\subsection{Quantum information theory}

We will conclude this section with a short review of some concepts from quantum information theory which we will use. For an introduction to the mathematics of quantum mechanics, see e.g.\ \cite{Heinosaari2011} or \cite{watrous2018theory}.
A quantum mechanical system is given as a \emph{state} $\rho \in \mathcal S(\mathcal H)$. Here, $\mathcal H$ is the Hilbert space of the system and
\begin{equation*}
\mathcal S(\mathcal H):= \Set{\rho \in \mathcal B(\mathcal H): \rho \geq 0, \tr[\rho] =1}.
\end{equation*}
In the present work, we will only deal with finite-dimensional Hilbert spaces. A state is \emph{pure} if it has rank one. Valid transformations between quantum systems are given in terms of completely positive maps. Let $\mathcal H$, $\mathcal K$ be two Hilbert spaces and $\mathcal T: \mathcal B(\mathcal H) \to \mathcal B(\mathcal K)$ be a linear map. This map is $k$-positive if the map $\mathcal T \otimes \mathrm{Id}_k: \mathcal B(\mathcal H) \otimes \mathcal M_k \to \mathcal B(\mathcal K) \otimes \mathcal M_k$ is positive for $k \in \mathbb N$. It is completely positive if $\mathcal T$ is $k$-positive for all $k \in \mathbb N$. For $\mathcal T$ to be a \emph{quantum channel}, we require additionally that the map is trace preserving. In finite dimensions where $d:= \dim(\mathcal K) < \infty$, $d$-positivity of $\mathcal T$ is equivalent to complete positivity \cite[Theorem 6.1]{Paulsen2002}.

Quantum mechanical measurements are described using effect operators, i.e.\
\begin{equation*}
\mathrm{Eff}_d := \Set{E \in \mathcal M_d^{sa}: 0 \leq E \leq I}.
\end{equation*} 
A measurement then corresponds to a \emph{positive operator valued measure} (POVM). Let $\Sigma$ be the set of measurement outcomes, which we assume to be finite for simplicity. The corresponding POVM is then a set of effects $\Set{E_j}_{j \in \Sigma}$, $E_j \in \mathrm{Eff}_d$ for all $j \in \Sigma$, such that
\begin{equation*}
\sum_{j \in \Sigma} E_j = I_d.
\end{equation*}
Since the actual measurement outcomes are not important for us, we will write $\Sigma = [k]$ for some $k \in \mathbb N$.

The main concept for the rest of this work is the notion of joint measurability. A collection of POVMs is jointly measurable if they arise as marginals from a joint POVM (see \cite{Heinosaari2016} for an introduction). 
\begin{defi}[Jointly measurable POVMs] \label{def:jointPOVM}
Let $\Set{E_j^{(i)}}_{j \in [k_i]}$ be a collection of $d$-dimensional POVMs, where $k_i \in \mathbb N$ for all $i \in [g]$, $g \in \mathbb N$. The POVMs are \emph{jointly measurable} (often also called \emph{compatible}) if there is a $d$-dimensional joint POVM $\Set{R_{j_1, \ldots, j_g}}$ with $j_i \in [k_i]$ such that for all $u \in [g]$ and $v \in [k_u]$,
\begin{equation*}
E_v^{(u)} = \sum_{\substack{j_i \in [k_i] \\ i \in [g] \setminus \Set{u}}} R_{j_1, \ldots, j_{u-1},v,j_{u+1}, \ldots j_g}.
\end{equation*}
\end{defi}

There is an equivalent definition of joint measurability \cite[Equation 16]{Heinosaari2016}, formulated in terms of post-processing, which will sometimes be useful. Measurements are compatible if and only if they arise through post-processing from a common measurement. 
\begin{lem} \label{lem:post-processing}
Let $E^{(i)} \in (\mathcal M_d^{sa})^{k_i}$, $i \in [g]$, be a collection of POVMs. These POVMs are jointly measurable if and only if there is some $m \in \mathbb N$ and a POVM $M \in (\mathcal M_d^{sa})^{m}$ such that
\begin{equation*}
E^{(i)}_{j} = \sum_{x=1}^m p_i(j|x)M_x
\end{equation*}
for all $j \in [k_i]$, $i \in [g]$ and some conditional probabilities $p_i(j|x)$.
\end{lem}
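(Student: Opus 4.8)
The plan is to prove the equivalence of Definition~\ref{def:jointPOVM} (existence of a joint POVM with the correct marginals) and the post-processing characterization by constructing the relevant objects explicitly in both directions.

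\medskip

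For the ``only if'' direction, suppose $E^{(i)}$, $i \in [g]$, admit a joint POVM $\Set{R_{j_1, \ldots, j_g}}$ with $j_i \in [k_i]$ as in Definition~\ref{def:jointPOVM}. I would take the common measurement $M$ to be $R$ itself, re-indexed: set $m := \prod_{i=1}^g k_i$, enumerate the multi-indices $x = (j_1, \ldots, j_g) \in [k_1] \times \cdots \times [k_g]$, and put $M_x := R_{j_1, \ldots, j_g}$. This is a POVM since the $R$'s are positive and sum to $I_d$. The post-processing is then simply the deterministic marginalization: for $x = (j_1, \ldots, j_g)$ define $p_i(j|x) := \delta_{j, j_i}$. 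Then $\sum_{x} p_i(j|x) M_x = \sum_{x : j_i = j} R_{j_1, \ldots, j_g} = E^{(i)}_j$ by the marginal condition. The $p_i(\cdot|x)$ are manifestly conditional probabilities.

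\medskip

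For the ``if'' direction, suppose there is a POVM $M \in (\mathcal M_d^{sa})^m$ and conditional probabilities $p_i(j|x)$ with $E^{(i)}_j = \sum_{x=1}^m p_i(j|x) M_x$. I would define the candidate joint POVM by
\begin{equation*}
R_{j_1, \ldots, j_g} := \sum_{x=1}^m \left( \prod_{i=1}^g p_i(j_i|x) \right) M_x.
\end{equation*}
Each $R_{j_1,\ldots,j_g}$ is positive semidefinite, being a nonnegative combination of the positive operators $M_x$. Summing over all multi-indices and using $\sum_{j_i \in [k_i]} p_i(j_i|x) = 1$ for each $i$ and each $x$, the product factorizes and telescopes: $\sum_{j_1, \ldots, j_g} \prod_{i=1}^g p_i(j_i|x) = \prod_{i=1}^g \big( \sum_{j_i} p_i(j_i|x) \big) = 1$, so $\sum_{j_1,\ldots,j_g} R_{j_1,\ldots,j_g} = \sum_x M_x = I_d$; hence $R$ is a POVM. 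Finally, fixing $u \in [g]$ and $v \in [k_u]$ and summing out the remaining indices, the same factorization shows $\sum_{j_i, i \ne u} R_{j_1,\ldots,j_{u-1},v,j_{u+1},\ldots,j_g} = \sum_x p_u(v|x) M_x \prod_{i \ne u}\big(\sum_{j_i} p_i(j_i|x)\big) = \sum_x p_u(v|x) M_x = E^{(u)}_v$, which is exactly the marginal condition. So $R$ witnesses joint measurability.

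\medskip

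There is no substantial obstacle here; the content is entirely bookkeeping with tensor/product indices, and both constructions are standard (this is essentially \cite[Equation 16]{Heinosaari2016}). The only mild subtlety to flag is that in the ``if'' direction the number of outcomes $m$ of the common measurement and the one in the resulting joint POVM need not coincide — one genuinely uses the product-of-probabilities construction rather than a mere relabelling — and that one should note the $p_i(\cdot|x)$ may be taken to be arbitrary stochastic matrices (no normalization issue arises because each is a probability distribution in its first argument for every fixed $x$). I would present the two directions in the order above, since the ``only if'' direction motivates the shape of the ansatz used in the ``if'' direction.
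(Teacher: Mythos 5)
Your proof is correct and complete: the ``only if'' direction via deterministic post-processing $p_i(j|x)=\delta_{j,j_i}$ of the joint POVM itself, and the ``if'' direction via the product construction $R_{j_1,\ldots,j_g}=\sum_x \bigl(\prod_i p_i(j_i|x)\bigr)M_x$, are exactly the standard argument. The paper does not actually prove this lemma --- it only cites \cite[Equation 16]{Heinosaari2016} --- so there is no in-paper proof to diverge from; your write-up supplies precisely the argument that reference contains, and the index bookkeeping (positivity, normalization, marginals) all checks out.
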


Not all measurements in quantum mechanics are compatible, but they can be made compatible if we add enough noise. By adding noise we mean taking the convex combination of a POVM and a trivial measurement, i.e\ a POVM in which all effects are proportional to the identity. These are called trivial, because they do not depend on the state of the system. With this idea, we can define several compatibility regions, i.e.\ sets of noise parameters for which any collection of a fixed number of measurements in fixed dimension and with a fixed number outcomes is compatible. For the first such set, we restrict to balanced noise.

\begin{defi}
Let $\mathbf k \in \mathbb N^g$, $d$, $g \in \mathbb N$. Then, we call
\begin{equation*}
\Gamma(g,d, \mathbf k) := \Set{s \in [0,1]^g: s_{i} E^{({i})} + (1-s_{i})I/k_{i} \mathrm{~compatible~}\forall \mathrm{~POVMs~}E^{({i})} \in (\mathcal M_d^{sa})^{k_{i}}}
\end{equation*}
the \emph{balanced compatibility region} for $g$ POVMs in $d$ dimensions with $k_i$ outcomes, $i \in [g]$.
\end{defi}

Sometimes it is desirable that the noise is linear in the effect operators. Such noise arises in the framework of quantum steering \cite{Uola2014, Heinosaari2015}.
\begin{defi}
Let $\mathbf k \in \mathbb N^g$, $d$, $g \in \mathbb N$. Then, we call
\begin{equation*}
\Gamma^{lin}(g,d, \mathbf k) := \Set{s \in [0,1]^g: \left[ s_i E^{(i)}_j + (1-s_i)\frac{\tr[E^{(i)}_j]}{d}I \right]_{j \in [k_i]} \!\!\!\!\!\!\!\!\!\!\!\!\!\!\!\mathrm{~compatible~}\forall \mathrm{~POVMs~}E^{(i)} \in (\mathcal M_d^{sa})^{k_i}}
\end{equation*}
the \emph{linear compatibility region} for $g$ POVMs in $d$ dimensions with $k_i$ outcomes, $i \in [g]$.
\end{defi}

Let us prove a lemma which shows that coarse graining, i.e.\ grouping several outcomes together, does not destroy joint measurability. 
\begin{lem}\label{lem:bunch-together}
Let $E^{(i)} \in (\mathcal M_d^{sa})^{k_i^\prime}$, $k_i^\prime \in \mathbb N$, $i \in [g]$, be a collection of jointly measurable POVMs. Then, also $E^{(i)}$, $i \in [g] \setminus \Set{l}$ and $\tilde E^{(l)}$ are jointly measurable, where
\begin{equation*}
\tilde E^{(l)} = (E^{(l)}_1, \ldots, E^{(l)}_{k_l}, E^{(l)}_{k_l +1} + \ldots + E^{(l)}_{k_l^\prime})
\end{equation*}
and $l \in [g]$, $k_l \in \mathbb N$, $k_l \leq k_l^\prime$.
\end{lem}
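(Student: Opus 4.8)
The plan is to use the post-processing characterization of joint measurability from Lemma~\ref{lem:post-processing} rather than Definition~\ref{def:jointPOVM} directly, since the coarse-graining operation is most naturally phrased as composing with a deterministic post-processing map. First I would invoke Lemma~\ref{lem:post-processing} on the original collection $E^{(1)}, \ldots, E^{(g)}$: there is an $m \in \mathbb N$ and a single POVM $M \in (\mathcal M_d^{sa})^m$ together with conditional probabilities $p_i(j \mid x)$, $x \in [m]$, $j \in [k_i']$, such that $E^{(i)}_j = \sum_{x=1}^m p_i(j\mid x) M_x$ for every $i \in [g]$.

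Next I would simply define new conditional probabilities $\tilde p_l(\,\cdot \mid x)$ on the coarse-grained outcome set $[k_l+1]$ of $\tilde E^{(l)}$ by leaving the first $k_l$ entries unchanged, $\tilde p_l(j \mid x) = p_l(j \mid x)$ for $j \in [k_l]$, and setting $\tilde p_l(k_l+1 \mid x) = \sum_{j = k_l+1}^{k_l'} p_l(j \mid x)$. These are still legitimate conditional probabilities: each is nonnegative and for every fixed $x$ they sum over $j \in [k_l+1]$ to $\sum_{j=1}^{k_l'} p_l(j\mid x) = 1$. Then a short computation shows that the effects of $\tilde E^{(l)}$ are exactly the post-processings of the \emph{same} POVM $M$ via $\tilde p_l$: for $j \le k_l$ this is immediate, and for the merged outcome $\tilde E^{(l)}_{k_l+1} = \sum_{j=k_l+1}^{k_l'} E^{(l)}_j = \sum_{j=k_l+1}^{k_l'}\sum_{x} p_l(j\mid x) M_x = \sum_x \tilde p_l(k_l+1\mid x) M_x$, where I exchange the two finite sums. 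Leaving $p_i$ for $i \neq l$ untouched, the collection $\{E^{(i)}\}_{i \neq l} \cup \{\tilde E^{(l)}\}$ is realized as a post-processing of the common POVM $M$, so Lemma~\ref{lem:post-processing} (in the other direction) gives joint measurability.

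There is essentially no obstacle here; the only point requiring a little care is bookkeeping the index ranges so that the merged outcome is genuinely $k_l+1$ and the relabelled probabilities still form a stochastic matrix, and making sure one may freely interchange the finite sums $\sum_j \sum_x$. If one preferred to argue directly from Definition~\ref{def:jointPOVM}, the alternative would be to take a joint POVM $\{R_{j_1,\ldots,j_g}\}$ for the original family and sum the $R$'s over the block of $l$-indices $j_l \in \{k_l+1, \ldots, k_l'\}$, collapsing them to a single new index value; the marginal conditions are preserved by this partial summation. Either route is routine, so the statement follows.
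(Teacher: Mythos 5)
Your proof is correct, but your primary route differs from the paper's. The paper argues directly from Definition~\ref{def:jointPOVM}: it takes a joint POVM $G_{j_1,\ldots,j_g}$ for the original collection and defines a new joint POVM $\tilde G$ by summing over the collapsed block of $l$-indices $j_l \in \{k_l+1,\ldots,k_l'\}$ whenever the new $l$-index equals $k_l+1$ --- i.e.\ exactly the ``alternative'' you sketch in your last paragraph. Your main argument instead passes through the post-processing characterization of Lemma~\ref{lem:post-processing}, realizing the coarse-grained POVM as a post-processing of the same parent measurement $M$ via the merged stochastic matrix $\tilde p_l$. Both routes are sound and essentially equally short; the paper's direct construction has the minor advantage of not invoking the equivalence of Lemma~\ref{lem:post-processing} (which the paper states without proof, citing the literature), while your version makes the ``coarse-graining is a post-processing'' intuition explicit and would generalize immediately to arbitrary relabelings of outcomes, not just the merging of a tail block. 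Your bookkeeping of the index ranges and the interchange of finite sums are both fine.
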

\begin{proof}
Let $G_{j_1, \ldots, j_g}$, $j_i \in k_i^\prime$, $i \in [g]$ be a joint POVM for the $E^{(i)}$. Then, we can define a new POVM as
\begin{equation*}
\tilde G_{j_1, \ldots, j_g} = \begin{cases} G_{j_1, \ldots, j_g} &j_l \leq k_l \\
\sum_{r = k_l+1}^{k_l^\prime} G_{j_1, \ldots, j_{l-1}, r , j_{l+1},\ldots, j_g} & j_l = k_l+1 \end{cases}.
\end{equation*}
Note that on the left hand side, $j_i \in k^\prime_i$ for $i \in [g]\setminus \Set{l}$ and $j_l \in [k_l + 1]$. It can easily be verified that this POVM is a joint POVM for the $E^{(i)}$  (with $i \neq l$) and $\tilde E^{(l)}$.
\end{proof}

\begin{prop} \label{prop:different-k-gamma-inclusion}
Consider two $g$-tuples of positive integers $\mathbf k, \mathbf{k'}$ such that $\mathbf k^\prime \geq \mathbf k$ (coordinate-wise, i.e.~$k'_i \geq k_i, \, \forall i \in [g]$). Let $\# \in \Set{\emptyset, lin}$. Then,
\begin{equation*}
\Gamma^{\#}(g,d,\mathbf k^\prime) \subseteq \Gamma^{\#}(g,d,\mathbf k).
\end{equation*}
\end{prop}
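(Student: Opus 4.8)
The plan is to reduce the claim to the coarse-graining statement of Lemma \ref{lem:bunch-together}, handling one coordinate at a time. First I would observe that it suffices to treat the case where $\mathbf k^\prime$ and $\mathbf k$ differ in exactly one coordinate, say $\mathbf k^\prime = (k_1, \ldots, k_{l-1}, k_l^\prime, k_{l+1}, \ldots, k_g)$ with $k_l^\prime \geq k_l$; the general case then follows by iterating this over the (finitely many) coordinates in which the two tuples differ. Within this reduction, fix $s \in \Gamma^{\#}(g,d,\mathbf k^\prime)$ and an arbitrary $g$-tuple of POVMs $E^{(i)} \in (\mathcal M_d^{sa})^{k_i}$ for $i \neq l$ and $E^{(l)} \in (\mathcal M_d^{sa})^{k_l}$; the goal is to show the correspondingly noisy POVMs are jointly measurable.

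The key step is to \emph{inflate} $E^{(l)}$ to a $k_l^\prime$-outcome POVM whose coarse-graining recovers the noisy version of $E^{(l)}$. Concretely, I would split the last effect $E^{(l)}_{k_l}$ into $k_l^\prime - k_l + 1$ pieces, for instance $E^{(l)}_{k_l}/(k_l^\prime - k_l + 1)$ repeated; call the resulting $k_l^\prime$-outcome POVM $F^{(l)}$. Since $s \in \Gamma^{\#}(g,d,\mathbf k^\prime)$, the noisy POVMs $\tilde E^{(i)}$ (for $i \neq l$) together with $\tilde F^{(l)}$ are jointly measurable, where the noise is added with parameter $s_i$ and the appropriate trivial POVM — balanced noise $I/k_i$ in the $\# = \emptyset$ case, or the linear noise $\tr[\cdot]I/d$ in the $\# = lin$ case. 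Now apply Lemma \ref{lem:bunch-together} with this value of $l$, grouping the last $k_l^\prime - k_l + 1$ outcomes of the $l$-th POVM back together: the resulting coarse-grained collection is still jointly measurable. The final point is that this coarse-graining of $\tilde F^{(l)}$ equals $\tilde E^{(l)}$, the $s_l$-noisy version of the original $k_l$-outcome POVM $E^{(l)}$. This is where one must check that both noise models are compatible with the splitting-and-regrouping: for balanced noise the trivial part contributes $(1-s_l)\tfrac{1}{k_l^\prime}I$ to each of the $k_l^\prime$ outcomes, and summing the last $k_l^\prime - k_l + 1$ of them gives $(1-s_l)\tfrac{k_l^\prime - k_l + 1}{k_l^\prime}I$ — which is \emph{not} immediately $(1-s_l)\tfrac{1}{k_l}I$, so a small correction to the construction is needed (see below). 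For linear noise the check is cleaner because the noise on outcome $j$ is $\tr[F^{(l)}_j]I/d$ and traces are additive under grouping, so regrouping automatically produces $\tr[E^{(l)}_{k_l}]I/d$ on the merged outcome.

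The main obstacle is precisely this bookkeeping in the balanced-noise case: naively splitting $E^{(l)}_{k_l}$ into equal pieces does not make the coarse-grained noisy POVM coincide with $\tilde E^{(l)}$, because the uniform-noise term does not partition correctly. The fix is to instead embed $E^{(l)}$ into the larger outcome set by \emph{padding with zero effects}: set $F^{(l)} = (E^{(l)}_1, \ldots, E^{(l)}_{k_l-1}, E^{(l)}_{k_l}, 0, \ldots, 0)$ with $k_l^\prime - k_l$ trailing zeros. Then $\tilde F^{(l)}_j = s_l F^{(l)}_j + (1-s_l)\tfrac{1}{k_l^\prime}I$, and regrouping the last $k_l^\prime - k_l + 1$ outcomes gives $s_l E^{(l)}_{k_l} + (1-s_l)\tfrac{k_l^\prime - k_l + 1}{k_l^\prime}I$, which still overshoots. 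So even padding with zeros does not fix it directly — the honest conclusion is that the clean statement holds for $\# = lin$ via the argument above, while for $\# = \emptyset$ one should argue slightly differently: use the post-processing characterization (Lemma \ref{lem:post-processing}) to note that a joint POVM for the $k_l^\prime$-outcome noisy family post-processes (by deterministically merging outcomes $k_l, \ldots, k_l^\prime$) to a joint POVM for the merged family, and then check directly that merging the uniform-noise trivial $k_l^\prime$-outcome POVM yields a trivial POVM that \emph{dominates} $I/k_l$ in the convex-combination sense, so that the merged noisy family can be rewritten as an $s_l$-noisy version of $E^{(l)}$ with noise parameter at most $s_l$; since $\Gamma^{\#}$ is a downward-closed set in each $s_i$ (more noise preserves compatibility), this suffices. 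I would carry out the $\# = lin$ case in full as the clean model and then indicate this monotonicity argument for the balanced case.
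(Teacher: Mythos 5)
Your overall strategy (pad with zero effects, coarse-grain, post-process) is the same as the paper's, and your treatment of the linear case is correct: with zero padding the extra outcomes of the noisy POVM are exactly $0$, so merging them into outcome $k_l$ reproduces $s_l E^{(l)}_{k_l} + (1-s_l)\tr[E^{(l)}_{k_l}]I/d$ on the nose. (The paper argues the linear case slightly differently, by noting that the joint POVM must vanish on the padded indices, but your route is equally valid.) The gap is in the balanced case, and your proposed patch does not close it. After merging outcomes $k_l,\ldots,k_l'$ you obtain $s_l E^{(l)} + (1-s_l)N$ with the non-uniform trivial POVM $N = \bigl(\tfrac{1}{k_l'}I,\ldots,\tfrac{1}{k_l'}I,\tfrac{k_l'-k_l+1}{k_l'}I\bigr)$. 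This is \emph{not} expressible as $t E^{(l)} + (1-t)\tfrac{1}{k_l}I$ for any $t$: matching the coefficient of $E^{(l)}_j$ for $j<k_l$ forces $t=s_l$, and then the noise terms disagree. Downward-closedness of $\Gamma$ concerns shrinking the signal weight $s_l$, not exchanging one trivial noise for another; and writing $N = \tfrac{k_l}{k_l'}\,U + \tfrac{k_l'-k_l}{k_l'}\,\delta_{k_l}$ only exhibits the compatible merged POVM as a convex combination containing the one you want, which gives nothing, since compatibility of a mixture does not imply compatibility of its components.

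The repair is to choose a different post-processing, which is exactly what the paper does: merge the trailing pure-noise outcomes into a \emph{separate} $(k_l+1)$-th ``garbage'' outcome with effect $(1-s_l)\tfrac{k_l'-k_l}{k_l'}I$ (Lemma \ref{lem:bunch-together}), keeping it distinct from outcome $k_l$, and then apply Lemma \ref{lem:post-processing} with the stochastic map that fixes outcomes $1,\ldots,k_l$ and redistributes the garbage outcome uniformly over $[k_l]$ with weight $1/k_l$. Each outcome then receives
\begin{equation*}
s_l E^{(l)}_j + (1-s_l)\frac{1}{k_l'}I + \frac{1}{k_l}(1-s_l)\frac{k_l'-k_l}{k_l'}I = s_l E^{(l)}_j + (1-s_l)\frac{1}{k_l}I,
\end{equation*}
which is precisely the uniformly noised POVM. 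The lesson is that the deterministic merge you chose destroys the ability to spread the residual noise evenly; the redistribution must happen \emph{before} the garbage is absorbed into outcome $k_l$.
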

\begin{proof}
Fix $s \in \Gamma(g,d,\mathbf k^\prime)$. Let furthermore $E^{(i)} \in (\mathcal M_d^{sa})^{k_i}$, $i \in [g]$ be a collection of POVMs. Let $\tilde E^{(i)} \in (\mathcal M_d^{sa})^{k^\prime_i}$ be the POVM which is equal to $E^{(i)}$ in the first $k_i$ entries and $0$ for the rest. Then, since $s \in \Gamma(g,d,\mathbf k^\prime)$, the POVMs $s_i \tilde E^{(i)} + (1-s_i) I/k_i^\prime$ are jointly measurable. Let
\begin{align*}
F^{(i)} &= \left(s_i \tilde E_{1}^{(i)} + (1-s_i) I/k_{i}^\prime, \ldots, s_i \tilde E_{k_{i}}^{(i)} + (1-s_i) I/k_{i}^\prime, (1-s_i)\frac{k^\prime_{i} -k_{i}}{{k_i}^\prime}I\right)\\
&=\left(s_i  E_{1}^{(i)} + (1-s_i) I/k_{i}^\prime, \ldots, s_i  E_{k_{i}}^{(i)} + (1-s_i) I/k_{i}^\prime, (1-s_i)\frac{k^\prime_{i} -k_{i}}{{k_i}^\prime}I\right).
\end{align*}
An iterative application of Lemma \ref{lem:bunch-together} shows that also the $F^{(i)}$ are jointly measurable with joint POVM $G$. Let $\mathbf{j} = (j_1, \ldots, j_g) \in [k_1+1] \times \ldots \times [k_g+1]$. Define, for $i \in [g]$ and $l \in [k_i]$,
\begin{equation*}
p_i(l|\mathbf{j}) = \begin{cases} 
1 & j_i = l \\
\frac{1}{k_i} & j_i=k_i+1\\ 
0 & \mathrm{else} 
\end{cases}.
\end{equation*}
These are conditional probabilities and it holds that
\begin{align*}
\sum_{\mathbf{j} \in \times_{i=1}^g[k_i+1]} p_i(l|\mathbf{j}) G_{\mathbf{j}} &= \sum_{\substack{\mathbf{j} \in \times_{i=1}^g[k_i+1]\\{j_i} = l}} G_{\mathbf{i}} + \frac{1}{k_i} \sum_{\substack{\mathbf{j} \in \times_{i=1}^g[k_i+1]\\{j_i} = k_i+1}} G_{\mathbf{j}} \\
&= F^{(i)}_l + \frac{1}{k_i} F^{(i)}_{k_i+1}\\
&= s_i E^{(i)}_{l} + (1-s_i) \frac{I}{k_i^\prime} + \frac{1}{k_i} \frac{k_i^\prime - k_i}{k_i^\prime} (1-s_i)I \\
& = s_i E^{(i)}_l + (1-s_i) \frac{I}{k_i}.
\end{align*}
From Lemma \ref{lem:post-processing}, it follows that $s \in \Gamma(g,d,\mathbf k)$. The assertion for $\Gamma^{lin}$ follows directly from extending the POVMs by zeroes.

For the second assertion, choose $s \in \Gamma^{lin}(g,d,\mathbf{k}^\prime)$ and a collection of POVMs $E^{(i)} \in (\mathcal M_d^{sa})^{k_i}$, $i \in [g]$. Let again $\tilde E^{(i)} \in (\mathcal M_d^{sa})^{k^\prime_i}$ be the POVM which is equal to $E^{(i)}$ in the first $k_i$ entries and $0$ for the rest. From the choice of $s$, it follows that the POVMs 
\begin{align*}
&\left[s_i \tilde E^{(i)}_j + (1-s_i) \frac{\tr[\tilde E^{(i)}_j]}{d} I\right ]_{j \in [k_i^\prime]} \\
=& \left(s_i  E^{(i)}_1 + (1-s_i) \frac{\tr[ E^{(i)}_1]}{d} I, \ldots, s_i  E^{(i)}_{k_i} + (1-s_i) \frac{\tr[ E^{(i)}_{k_i}]}{d} I, 0, \ldots, 0 \right)
\end{align*}
are compatible with joint POVM $G_{\mathbf j}$, where $\mathbf j \in [k^\prime_1] \times \ldots \times [k^\prime_g]$. As $G_{\mathbf j} \geq 0$ for all $\mathbf j$, it follows that $G_{\mathbf j} = 0$ if $j_i \in [k_i^\prime]\setminus [k_i]$ for some $i \in [g]$, since these elements have to sum up to $0$ by Definition \ref{def:jointPOVM}. Therefore, $\left[ G_{\mathbf{j}} \right]_{\mathbf j \in [k_1] \times \ldots \times [k_g]}$ is still a POVM and moreover a joint POVM for the $E^{(i)}$.  This shows that $s \in \Gamma^{lin}(g,d,\mathbf{k})$.
\end{proof}

The following proposition generalizes \cite[Proposition III.4(6)]{bluhm2018joint}.

\begin{prop} \label{prop:lin-gamma-inclusion}
Let $\mathbf k \in \mathbb N^g$. Furthermore, let $k_{max}=\max_{i \in [g]} k_i$. Then,
\begin{equation*}
\Gamma^{lin}(g,k_{max}d, k_{max}^{\times g}) \subseteq \Gamma(g,d,\mathbf k).
\end{equation*}
\end{prop}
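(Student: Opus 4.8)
The plan is as follows. Fix $s \in \Gamma^{lin}(g, k_{max}d, k_{max}^{\times g})$ and an arbitrary collection of POVMs $E^{(i)} \in (\mathcal M_d^{sa})^{k_i}$, $i \in [g]$; I must produce a joint POVM for the balanced-noise versions $s_i E^{(i)}_j + (1-s_i)\tfrac{1}{k_i}I_d$. The key idea is to \emph{dilate} each $E^{(i)}$ to a POVM $\hat E^{(i)}$ acting on $\mathbb C^{k_{max}d} \simeq \mathbb C^d \oplus \mathbb C^{(k_{max}-1)d}$ whose effects all have the \emph{same} trace: for such a POVM, linear noise and balanced noise coincide, so the hypothesis on $s$ applies directly, and a compression back to $\mathbb C^d$ finishes the argument. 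One may assume $k_{max} \geq 2$, the case $k_{max} = 1$ being vacuous (all POVMs are then trivial).

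Concretely, I would set, for $i \in [g]$ and $j \in [k_i]$,
\begin{equation*}
\hat E^{(i)}_j := E^{(i)}_j \oplus \lambda^{(i)}_j I_{(k_{max}-1)d}, \qquad \lambda^{(i)}_j := \frac{1}{(k_{max}-1)d}\left(\frac{k_{max}d}{k_i} - \tr[E^{(i)}_j]\right),
\end{equation*}
and then verify three routine points: (i) $\lambda^{(i)}_j \geq 0$, because $\tr[E^{(i)}_j] \leq \tr[I_d] = d \leq k_{max}d/k_i$ — this last inequality is exactly where $k_{max} \geq k_i$ enters; (ii) $\sum_{j \in [k_i]} \hat E^{(i)}_j = I_{k_{max}d}$, which reduces to the identity $\sum_j \lambda^{(i)}_j = 1$; and (iii) $\tr[\hat E^{(i)}_j] = k_{max}d/k_i$ for every $j$. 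Hence each $\hat E^{(i)}$ is a POVM in dimension $k_{max}d$ with $k_i$ outcomes, all of whose effects have trace $k_{max}d/k_i$.

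Next, Proposition \ref{prop:different-k-gamma-inclusion} applied with $\mathbf k' = k_{max}^{\times g} \geq \mathbf k$ gives $s \in \Gamma^{lin}(g, k_{max}d, k_{max}^{\times g}) \subseteq \Gamma^{lin}(g, k_{max}d, \mathbf k)$; feeding the POVMs $\hat E^{(i)}$ into the definition of $\Gamma^{lin}$ and using $\tr[\hat E^{(i)}_j] / (k_{max}d) = 1/k_i$, the tuples $\big[\, s_i \hat E^{(i)}_j + (1-s_i)\tfrac{1}{k_i} I_{k_{max}d}\,\big]_{j \in [k_i]}$, $i \in [g]$, are jointly measurable; let $\{R_{\mathbf j}\}$ be a joint POVM in $\mathcal M_{k_{max}d}^{sa}$. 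Let $V : \mathbb C^d \hookrightarrow \mathbb C^{k_{max}d}$ be the isometry onto the first summand, so that $V^* \hat E^{(i)}_j V = E^{(i)}_j$ and $V^* I_{k_{max}d} V = I_d$. Then $\{V^* R_{\mathbf j} V\}$ is a POVM on $\mathbb C^d$, and for each $u \in [g]$, $v \in [k_u]$,
\begin{equation*}
\sum_{\mathbf j\,:\, j_u = v} V^* R_{\mathbf j} V = V^*\!\left( s_u \hat E^{(u)}_v + (1-s_u)\tfrac{1}{k_u} I_{k_{max}d}\right)\! V = s_u E^{(u)}_v + (1-s_u)\tfrac{1}{k_u} I_d,
\end{equation*}
so $\{V^* R_{\mathbf j} V\}$ is a joint POVM for the balanced-noise versions of the $E^{(i)}$. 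Since the $E^{(i)}$ were arbitrary, $s \in \Gamma(g,d,\mathbf k)$.

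I do not expect a genuine obstacle: the content is finding the \emph{right} dilation, and the only delicate points are the positivity $\lambda^{(i)}_j \geq 0$ (which dictates both the factor $k_{max}$ and the ambient dimension $k_{max}d$) and the elementary fact that conjugating a joint POVM by an isometry yields a joint POVM of the compressed marginals. One could also avoid citing Proposition \ref{prop:different-k-gamma-inclusion} by padding each $\hat E^{(i)}$ with $k_{max}-k_i$ zero effects and observing that any joint POVM necessarily vanishes on the corresponding indices, but invoking Proposition \ref{prop:different-k-gamma-inclusion} keeps the argument shorter.
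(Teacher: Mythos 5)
Your proof is correct. It follows the same overall strategy as the paper's proof --- dilate each POVM to dimension $k_{max}d$ so that all effects acquire equal trace, which makes the linear noise of $\Gamma^{lin}$ coincide with balanced noise, and then compress the resulting joint POVM back to $\mathbb C^d$ by an isometry onto the first block --- but the dilation itself is genuinely different. The paper first invokes Proposition \ref{prop:different-k-gamma-inclusion} on the \emph{balanced} side to reduce to POVMs with $k_{max}$ outcomes each, and then uses the cyclic direct sum $F_j^{(i)} = E_j^{(i)} \oplus E_{j+1}^{(i)} \oplus \cdots \oplus E_{j+(k_{max}-1)}^{(i)}$ (indices mod $k_{max}$), for which the traces are automatically all equal to $d$ with no positivity check needed. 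You instead pad each effect with a scalar block $\lambda_j^{(i)} I_{(k_{max}-1)d}$ tuned so that every effect has trace $k_{max}d/k_i$, and invoke Proposition \ref{prop:different-k-gamma-inclusion} on the \emph{linear} side in dimension $k_{max}d$. Your route has the small advantage of working directly with the original $k_i$-outcome POVMs and producing the $1/k_i$-balanced noise in one step, at the modest cost of the positivity verification $\lambda_j^{(i)} \geq 0$ (which is exactly where $k_{max} \geq k_i$ enters, as you note) and the harmless caveat $k_{max} \geq 2$; the paper's cyclic-sum dilation avoids both but needs the preliminary coarse-graining reduction to the uniform-outcome case. All the steps you flag as routine (the three properties of $\hat E^{(i)}$, and the fact that isometric compression of a joint POVM yields a joint POVM for the compressed marginals) do check out.
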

\begin{proof}
From Proposition \ref{prop:different-k-gamma-inclusion}, it follows that
\begin{equation*}
\Gamma(g,d,k_{max}^{ \times g}) \subseteq \Gamma(g,d,\mathbf k),
\end{equation*}
so it is enough to prove 
\begin{equation*}
\Gamma^{lin}(g,k_{max}d, k_{max}^{\times g}) \subseteq \Gamma(g,d,k_{max}^{\times g}).
\end{equation*}
Pick $g$ POVMs $E^{(i)}$ of dimension $d$ and with $k_{max}$ outcomes each, $i \in [g]$. Let
\begin{equation*}
F_j^{(i)} = E_j^{(i)} \oplus E_{j+1}^{(i)} \oplus \ldots \oplus E_{j+(k_{max}-1)}^{(i)} \qquad \forall j \in [k_{max}], \forall i \in [g].
\end{equation*}
Above, we are considering the addition operation modulo $k_{max}$, i.e.~ we identify $i + k_{max}$ with $i$ for $i \in k_{max}$. Thus, $F_j^{(i)} \in \mathcal M_{d k_{max}}$ for all $j \in [k_{max}]$, $\forall i \in [g]$. Clearly, $F_j^{(i)} \geq 0$ and $\sum_{j = 1}^{k_{max}} F_j^{(i)} = I_{d k_{max}}$ for any $i \in [g]$, so the $F^{(i)}$ again are POVMs. Let $s \in \Gamma^{lin}(g,k_{max}d, k_{max}^{\times g})$. Then, the $s_i F^{(i)} + (1-s_i) I_{ k_{max}d}/k_{max}$ are jointly measurable POVMs, because $\tr[F_j^{(i)}]/(k_{max}d) = 1/k_{max}$.
Applying an isometry onto the first block of the direct sum ascertains that the $s_i E^{(i)} + (1-s_i)I_d/k_{max}$ are jointly measurable as well. Since the POVMs we picked were arbitrary, the assertion follows.  
\end{proof}

\section{The matrix jewel}\label{sec:jewel}

In the following, we identify the subalgebra of $d \times d$ diagonal matrices with $\mathbb C^d$.
\begin{defi}[Matrix jewel] \label{def:jewel} 
Consider the vectors $v^{(k)}_1, \ldots, v^{(k)}_{k-1} \in \mathbb C^{k}$ defined as
\begin{equation*}
v^{(k)}_j(\epsilon) := -\frac{2}{k} + 2\delta_{\epsilon, j}, \qquad \forall j \in [k-1],\, \forall \epsilon \in [k].
\end{equation*}
The free spectrahedron $\mathcal D_{\jewel,k}$ defined by
\begin{equation*}
\mathcal D_{\jewel,k}(n):= \Set{X \in (\mathcal M_n^{sa})^{k-1}: \sum_{j = 1}^{k-1} v^{(k)}_j \otimes X_j \leq I_{kn}}. 
\end{equation*}
is called the \emph{matrix jewel base}. For a $g$-tuple of positive integers $\mathbf k = (k_1, \ldots, k_g)$, we define the \emph{matrix jewel} $\mathcal D_{\jewel, \mathbf k}$ to be the free spectrahedron
	$$\mathcal D_{\jewel, \mathbf k} := \mathcal D_{\jewel, k_1} \hat \oplus \mathcal D_{\jewel, k_2} \hat \oplus \cdots \hat \oplus \mathcal D_{\jewel, k_g},$$
where the direct sum operation $\hat \oplus$ for free spectrahedra was introduced in Section \ref{sec:direct-sum}. In other words, we have
\begin{equation}\label{eq:jewel-explicit}
\mathcal D_{\jewel, \mathbf k}(n) = \left\{X \in (\mathcal M_n^{sa})^{\sum_{i=1}^g (k_i-1)}: \sum_{i=1}^g \sum_{j = 1}^{k_i-1} \left[I^{\otimes (i-1)} \otimes v^{(k_i)}_j \otimes I^{\otimes (g-i)}\right] \otimes  X_{i,j} \leq I_{(\prod_{s=1}^gk_i)n} \right\}.
\end{equation}
\end{defi}

\begin{remark}
It follows immediately from Lemma \ref{lem:wmaxpolytopes} that the matrix jewel is the maximal matrix convex set (in the sense of \cite[Section 4]{davidson2016dilations}, see also Equation \eqref{eq:Wmax}) built on top of the direct sum of simplices
$$\mathcal D_{\jewel, k_1}(1)  \oplus \mathcal D_{\jewel, k_2}(1) \oplus \cdots  \oplus \mathcal D_{\jewel, k_g}(1).$$
\end{remark}

At level one, the matrix jewel base is isomorphic to a simplex, for which we can identify the extremal points. 

\begin{lem}\label{lem:extremal-points-jewel-base}
	The extremal points of the jewel base $\mathcal D_{\jewel, k}(1)\subseteq \mathbb R^{k-1}$ are
	\begin{align*}
	x_i^{(k)} &:= -\frac{k}{2} e_i, \qquad\qquad \text{ for } i \in [k-1]\\
	x_k^{(k)} &:= \frac{k}{2} (\underbrace{1, \ldots, 1}_{k-1 \text{ times}}),
	\end{align*} 
	where $e_i$ are the elements of the standard orthonormal basis in $\mathbb R^{k-1}$.
\end{lem}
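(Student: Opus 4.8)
The plan is to identify $\mathcal D_{\jewel, k}(1)$ with a polyhedron in $\mathbb R^{k-1}$ and recognize it as (an affine image of) a simplex, then locate its $k$ vertices. Recall from the definition that
$$\mathcal D_{\jewel, k}(1) = \Set{x \in \mathbb R^{k-1}: \sum_{j=1}^{k-1} v^{(k)}_j(\epsilon)\, x_j \leq 1 \quad \forall \epsilon \in [k]},$$
since the matrices $\mathrm{diag}[v^{(k)}_j]$ acting on $\mathbb C^k$ give, at the scalar level, one linear inequality for each coordinate $\epsilon \in [k]$. Writing out $v^{(k)}_j(\epsilon) = -\tfrac{2}{k} + 2\delta_{\epsilon,j}$, the constraint for $\epsilon = k$ reads $-\tfrac{2}{k}\sum_{j=1}^{k-1} x_j \leq 1$, while for $\epsilon = i \in [k-1]$ it reads $2x_i - \tfrac{2}{k}\sum_{j=1}^{k-1} x_j \leq 1$. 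So $\mathcal D_{\jewel,k}(1)$ is cut out by exactly $k$ affine half-spaces in $\mathbb R^{k-1}$; generically such a set is a simplex, and one checks it is bounded (e.g.\ by noting $0$ is interior and the normals positively span $\mathbb R^{k-1}$), hence a genuine $(k-1)$-simplex with exactly $k$ vertices.

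Next I would verify that the $k$ claimed points $x^{(k)}_1, \dots, x^{(k)}_k$ are precisely these vertices. The clean way: a vertex of a polyhedron in $\mathbb R^{k-1}$ defined by $k$ inequalities is a point where $k-1$ of the inequalities are tight (active) and which satisfies the remaining one. I would index the candidate vertices by which single constraint is \emph{slack}. For the point obtained by making all constraints tight except the one for $\epsilon = i$ (with $i \in [k-1]$): the tight constraints are $-\tfrac{2}{k}\sum_j x_j = 1$ together with $2x_\ell - \tfrac{2}{k}\sum_j x_j = 1$ for all $\ell \neq i$, forcing $x_\ell = 0$ for $\ell \neq i$ and then $-\tfrac{2}{k} x_i = 1$, i.e.\ $x_i = -\tfrac{k}{2}$; this is exactly $x^{(k)}_i = -\tfrac{k}{2} e_i$, and one must still check the dropped inequality $2x_i - \tfrac{2}{k}\sum_j x_j \leq 1$ holds, which it does since $2(-\tfrac{k}{2}) + 1 = 1-k \leq 1$. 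For the point where the $\epsilon = k$ constraint is the slack one: all $k-1$ constraints $2x_\ell - \tfrac{2}{k}\sum_j x_j = 1$ hold, so all $x_\ell$ are equal, say $x_\ell = t$; then $2t - \tfrac{2}{k}(k-1)t = 1$ gives $\tfrac{2t}{k} = 1$, i.e.\ $t = \tfrac{k}{2}$, yielding $x^{(k)}_k = \tfrac{k}{2}(1,\dots,1)$, and the dropped constraint $-\tfrac{2}{k}(k-1)\tfrac{k}{2} = -(k-1) \leq 1$ holds. This produces exactly $k$ vertices, matching the count for a $(k-1)$-simplex, so the list is complete.

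An alternative (perhaps cleaner to write) is to exhibit the simplex structure directly via the change of variables $y_\epsilon := \tfrac{1}{k}\sum_j x_j - x_\epsilon$ for $\epsilon \in [k-1]$ and $y_k := \tfrac{1}{k}\sum_j x_j$; then the $k$ defining inequalities become $2 y_\epsilon \geq -1$, i.e.\ $y_\epsilon \geq -\tfrac12$ for all $\epsilon \in [k]$, subject to the one linear relation $\sum_{\epsilon=1}^k y_\epsilon = \tfrac{1}{k}\sum_\epsilon (\text{stuff})$ — one checks $\sum_{\epsilon=1}^{k-1} y_\epsilon + y_k = \sum_\epsilon y_\epsilon$ telescopes so that $\sum_{\epsilon=1}^k y_\epsilon = 0$ is forced? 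Actually $\sum_{\epsilon=1}^{k-1} y_\epsilon = \tfrac{k-1}{k}\sum_j x_j - \sum_j x_j = -\tfrac1k \sum_j x_j = -y_k$, so indeed $\sum_{\epsilon=1}^k y_\epsilon = 0$ automatically. Thus in $y$-coordinates the base is $\{y : y_\epsilon \geq -\tfrac12,\ \sum_\epsilon y_\epsilon = 0\}$, manifestly a regular $(k-1)$-simplex, whose vertices are the points with $k-1$ of the $y_\epsilon$ equal to $-\tfrac12$ (and the last equal to $\tfrac{k-1}{2}$). Translating back through the invertible linear map gives the list above.

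The main obstacle is not conceptual but bookkeeping: making sure the correspondence between the $k$ coordinates $\epsilon$ of $\mathbb C^k$ and the $k$ vertices is set up correctly, and confirming that the configuration of $k$ half-spaces in $\mathbb R^{k-1}$ really is a (bounded, full-dimensional) simplex rather than something degenerate — this is where I would be most careful, and the $y$-coordinate substitution is the surest way to make it transparent. Once that is in place, checking the $k$ explicit points satisfy all constraints with the right ones tight is a routine substitution.
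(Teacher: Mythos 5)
Your proposal is correct and follows essentially the same route as the paper: both identify the $k$ scalar constraints $\sum_{j=1}^{k-1} v_j^{(k)}(\epsilon)\,x_j \leq 1$, $\epsilon \in [k]$, and characterize the vertices as the points where $k-1$ of these are tight (the paper invokes \cite[Theorem II.4.2]{Barvinok2002} together with the observation that any $k-1$ of the normal vectors span $\mathbb R^{k-1}$, then verifies $\sum_j v_j(\epsilon)\,x^{(k)}_i(j) = 1 - k\delta_{\epsilon,i}$ directly). Your additional change of variables to the coordinates $y_\epsilon \geq -\tfrac12$ with $\sum_\epsilon y_\epsilon = 0$ is a clean way to make the simplex structure, and hence the completeness of the vertex list, fully transparent.
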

\begin{proof}
	Since $\mathcal D_{\jewel, k}(1)$ is a polyhedron and since the hyperplanes $(v_1(\epsilon), \ldots v_{k-1}(\epsilon))_{\epsilon=1}^k$ are such that each $k-1$ of them linearly span $\mathbb R^{k-1}$, \cite[Theorem II.4.2]{Barvinok2002} implies that it is enough to check whether each point as above fulfills $k-1$ of the above constraints with equality (there is no point which fulfills all constraints with equality). We verify for fixed $\epsilon \in [k]$:
	\begin{equation*}
	\sum_{j = 1}^{k-1} v_j(\epsilon)(-\frac{k}{2} e_i)_j = 1 - k \delta_{\epsilon,i}, \qquad i \in [k-1],
	\end{equation*}
	and 
	\begin{equation*}
	\sum_{j = 1}^{k-1} v_j(\epsilon)\frac{k}{2}(1, \ldots, 1)_j = 1-k \delta_{\epsilon,k},
	\end{equation*}
	which proves the claim. 
\end{proof}

At level 1, the matrix jewel base is, for $k=2$, the segment $[-1,1] \subseteq \mathbb R$. We display in Figure \ref{fig:matrix-jewel-base} the sets $\mathcal D_{\jewel,k}(1)$, for $k=3,4$. 
\begin{figure}[htbp]
	\begin{center}
		\includegraphics[scale=.53]{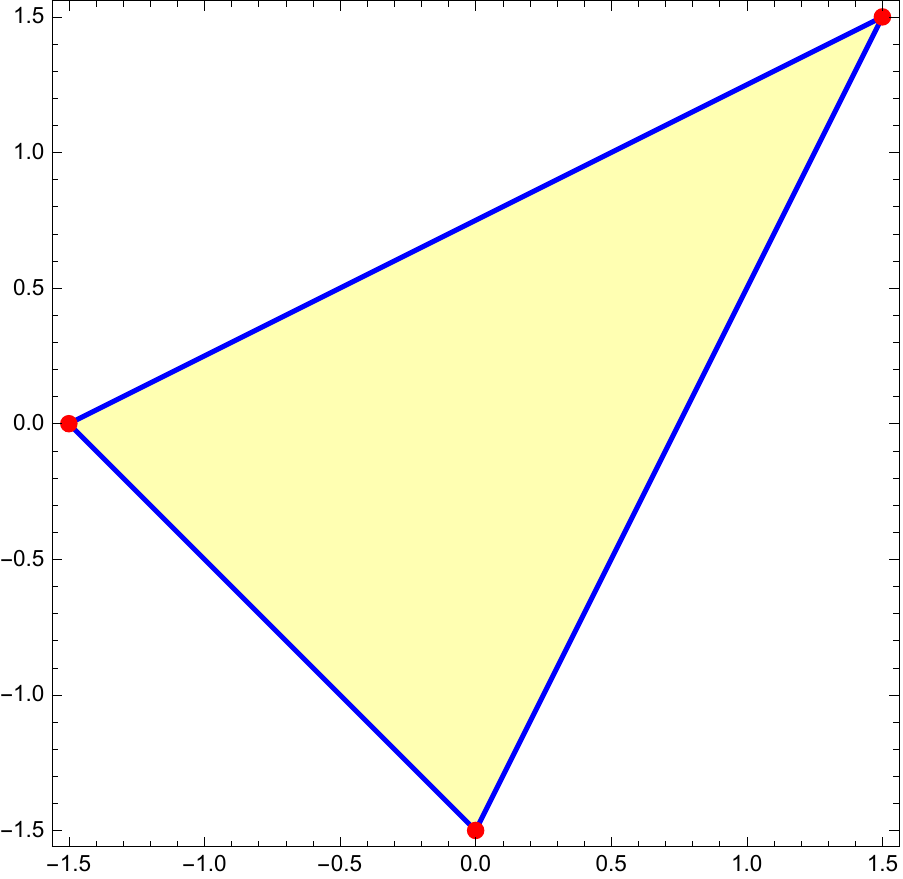}\qquad\qquad\qquad  \includegraphics[scale=.53]{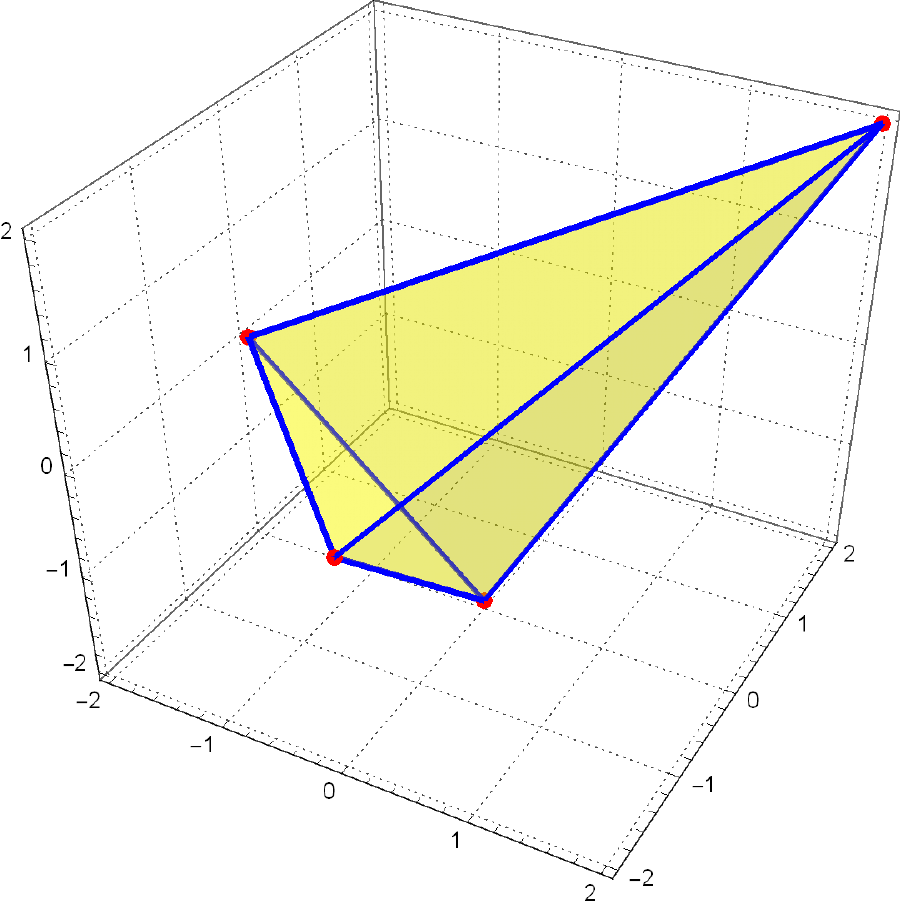}
		\caption{The spectrahedron level of the matrix jewel base $\mathcal D_{\jewel,k}(1)$, for $k=3,4$.}
		\label{fig:matrix-jewel-base}
	\end{center}
\end{figure}
The notion of matrix jewel generalizes the matrix diamond introduced in \cite{davidson2016dilations}; indeed, with the notation of \cite{bluhm2018joint}, the matrix diamond of size $g$ is given by
	$$\mathcal D_{\diamondsuit,g} = \mathcal D_{\jewel, (\underbrace{2,\ldots, 2}_{g \text{ times}})} = \widehat \bigoplus_{i=1}^g \mathcal D_{\jewel,2}.$$
	In Figure \ref{fig:matrix-jewel}, we print the first level of the matrix jewel, for vectors $\mathbf k$ equal to, respectively, $(2,2)$, $(2,2,2)$, and $(2,3)$.

\begin{figure}[htbp]
	\begin{center}
		\includegraphics[scale=.53]{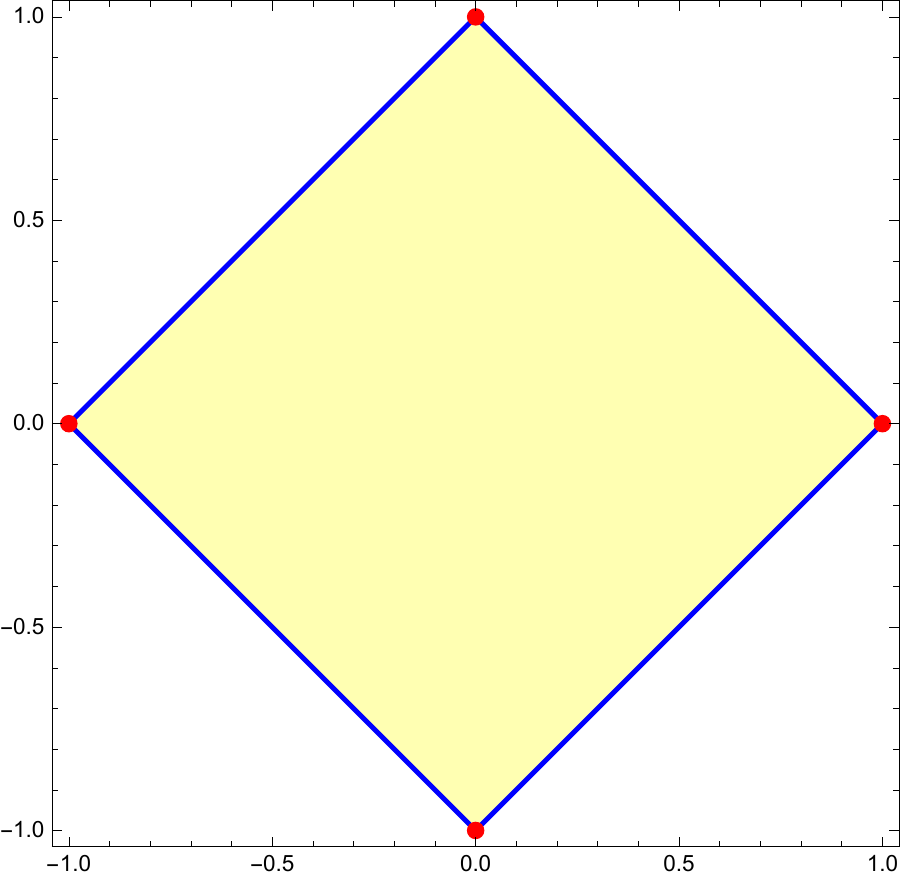}\qquad  \includegraphics[scale=.53]{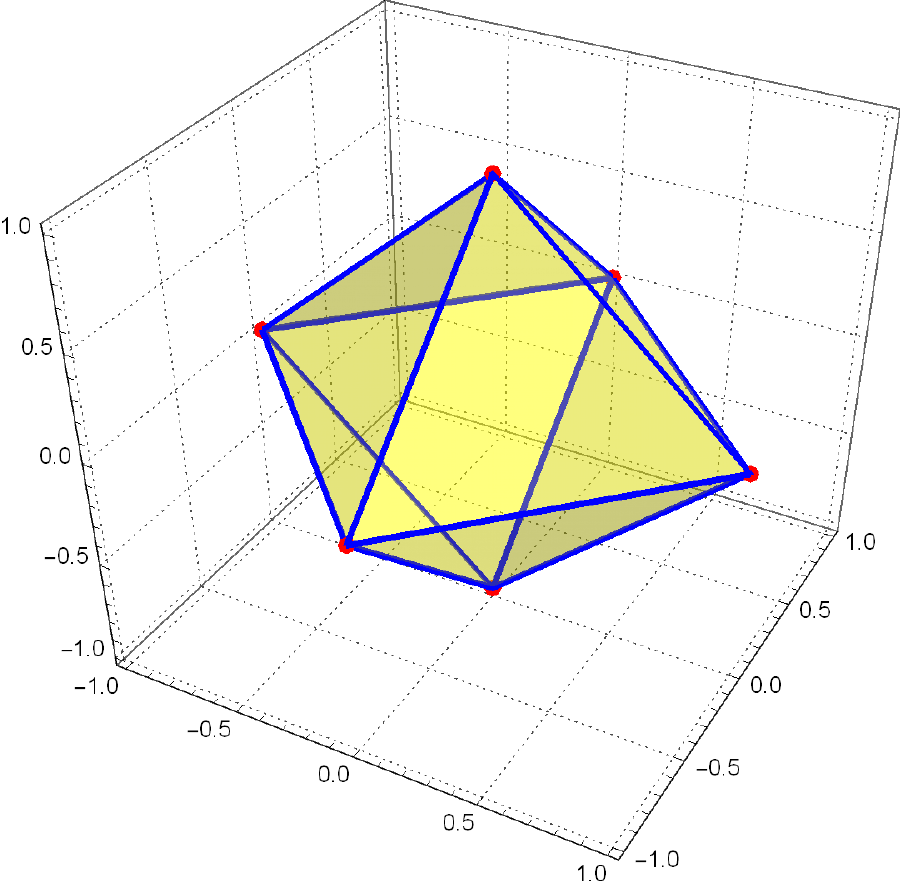}\qquad  \includegraphics[scale=.53]{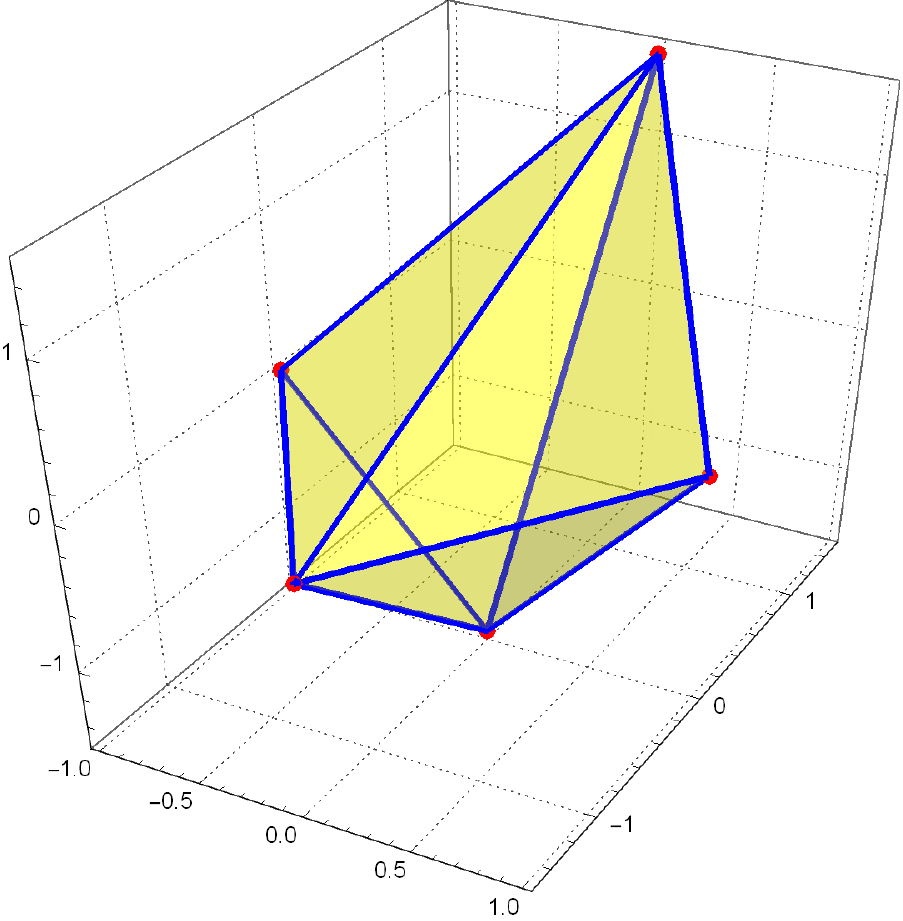}		\caption{The spectrahedron level of the matrix jewels $\mathcal D_{\jewel,(2,2)}(1)$, $\mathcal D_{\jewel,(2,2,2)}(1)$, and $\mathcal D_{\jewel,(2,3)}(1)$. The first two are in fact the matrix diamonds $\mathcal D_{\diamondsuit, 2}(1)$ and $\mathcal D_{\diamondsuit, 3}(1)$ from \cite{bluhm2018joint} (a square and an octahedron), while the last polyhedron is new.}
		\label{fig:matrix-jewel}
	\end{center}
\end{figure}

\section{The matrix jewel and joint measurability of POVMs} \label{sec:jewel-compatibility}

In this section, we establish an equivalence between the inclusion of the matrix jewel in a spectrahedron defined by a tuple of POVMs and the joint measurability of the POVMs. The inclusion at different levels will correspond to different notions of joint measurability. Our first result relates the inclusion of the matrix jewel base, at level 1, to the definition of a POVM.

\begin{prop} \label{prop:spectrahedronPOVM}
Let $E \in (\mathcal M_d^{sa})^{k-1}$. Then, $\Set{E_1, \ldots, E_{k-1}, I - E_1 - \ldots - E_{k-1}}$ is a POVM if and only if 
\begin{equation*}
\mathcal D_{\jewel, k}(1) \subseteq \mathcal D_{2E - \frac{2}{k}I}(1).
\end{equation*}
\end{prop}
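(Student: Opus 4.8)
The strategy is to translate both sides of the claimed equivalence into explicit inequalities on $\mathbb{R}^{k-1}$ and observe they coincide. The key fact is Lemma~\ref{lem:extremal-points-jewel-base}, which gives the extremal points of the simplex $\mathcal D_{\jewel,k}(1)$. Since $\mathcal D_{2E-\frac{2}{k}I}(1)$ is defined by a linear matrix inequality, the inclusion $\mathcal D_{\jewel,k}(1) \subseteq \mathcal D_{2E-\frac{2}{k}I}(1)$ holds if and only if each extremal point $x^{(k)}_i$, $i \in [k]$, lies in $\mathcal D_{2E-\frac{2}{k}I}(1)$, i.e.\ satisfies $\sum_{j=1}^{k-1}(2E_j - \frac{2}{k}I)\,(x^{(k)}_i)_j \leq I_d$.

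First I would plug in the extremal points from Lemma~\ref{lem:extremal-points-jewel-base}. For $x^{(k)}_i = -\frac{k}{2}e_i$ with $i \in [k-1]$, the sum collapses to a single term $-\frac{k}{2}(2E_i - \frac{2}{k}I) = -k E_i + I$, so the constraint $-kE_i + I \leq I_d$ is equivalent to $E_i \geq 0$. For $x^{(k)}_k = \frac{k}{2}(1,\ldots,1)$, the sum becomes $\frac{k}{2}\sum_{j=1}^{k-1}(2E_j - \frac{2}{k}I) = k\sum_{j=1}^{k-1}E_j - (k-1)I$, so the constraint $k\sum_{j=1}^{k-1}E_j - (k-1)I \leq I_d$ is equivalent to $\sum_{j=1}^{k-1}E_j \leq I_d$. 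Thus the inclusion at level $1$ is equivalent to the conjunction: $E_i \geq 0$ for all $i \in [k-1]$ and $\sum_{j=1}^{k-1}E_j \leq I_d$.

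Next I would match this with the POVM condition. The tuple $\{E_1,\ldots,E_{k-1}, E_k\}$ with $E_k := I - E_1 - \cdots - E_{k-1}$ is a POVM precisely when each element is positive semidefinite and they sum to $I_d$. The summation-to-identity is automatic by the definition of $E_k$; positivity of $E_1,\ldots,E_{k-1}$ is the first family of conditions above; and positivity of $E_k = I - \sum_{j=1}^{k-1}E_j \geq 0$ is exactly $\sum_{j=1}^{k-1}E_j \leq I_d$. (Note we do not separately need $E_j \leq I_d$, since that follows from $E_j \geq 0$ together with $\sum E_j \leq I_d$ and positivity of the complementary terms.) Hence the two conditions are literally the same, completing the argument.

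The only subtlety — and the one place to be careful — is justifying that checking the inclusion of polyhedra at their extremal points suffices here: $\mathcal D_{\jewel,k}(1)$ is the convex hull of the $x^{(k)}_i$ by Lemma~\ref{lem:extremal-points-jewel-base}, and $\mathcal D_{2E-\frac{2}{k}I}(1)$ is convex (it is a spectrahedron), so containment of all vertices implies containment of the hull; the converse is trivial. This is routine, so I expect no real obstacle; the computation of the two extremal-point constraints is the main (brief) content.
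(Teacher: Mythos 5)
Your proposal is correct and follows essentially the same route as the paper's proof: both check the inclusion on the extremal points of the simplex $\mathcal D_{\jewel,k}(1)$ from Lemma \ref{lem:extremal-points-jewel-base} and observe that the resulting matrix inequalities are exactly $E_i \geq 0$ for $i \in [k-1]$ and $\sum_{i=1}^{k-1} E_i \leq I$, which together characterize the POVM condition. The computations match the paper's, and your extra remarks on why vertex-checking suffices are correct, if slightly more detailed than the paper's one-line justification.
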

\begin{proof}
Since the left hand side is a polytope, we only need to check the assertion on the extremal points $x^{(k)}_j$ from Lemma \ref{lem:extremal-points-jewel-base}. We have
\begin{equation*}
-\frac{k}{2}e_i \in \mathcal D_{2 E - \frac{2}{k}I}(1) \iff -\frac{k}{2}(2 E_i - \frac{2}{k}I) \leq I \iff E_i \geq 0
\end{equation*}
and 
\begin{equation*}
\frac{k}{2}(1, \ldots, 1) \in \mathcal D_{2E - \frac{2}{k}I}(1) \iff \frac{k}{2} \sum_{i = 1}^{k-1} (2E_i - \frac{2}{k}I) \leq I \iff \sum_{i = 1}^{k-1} E_i \leq I.
\end{equation*}
This proves the assertion.
\end{proof}

The following theorem is one of our main results, connecting joint measurability of arbitrary POVMs to the inclusion of the matrix jewel. It is a generalization of \cite[Theorem V.3]{bluhm2018joint} from the case of $g$ binary (i.e.~2-outcome) POVMs to general POVMs (with an arbitrary number of outcomes). 

\begin{thm}\label{thm:jewel-compatible-POVM}
For a fixed matrix dimension $d$, consider $g$ tuples of self-adjoint matrices $E^{(i)} \in (\mathcal M_{d}^{sa})^{k_i - 1}$, $k_i \in \mathbb N$, $i \in [g]$. Define $E^{(i)}_{k_i}:= I_d - E^{(i)}_1 \ldots - E^{(i)}_{k_i -1}$, set $\mathbf k = (k_1, \ldots, k_g)$, and write
\begin{align*}
\mathcal D_{E}&:=\mathcal D_{(2E^{(1)}-\frac{2}{k_1} I,\ldots,  2E^{(g)}-\frac{2}{k_g} I)}\\& = \bigsqcup_{n=1}^\infty \left\{ X \in (\mathcal M_n^{sa})^{\sum_{i=1}^g (k_i-1)} \, : \, \sum_{i=1}^g \sum_{j=1}^{k_i-1} \left(2E^{(i)}_j-\frac{2}{k_i}I\right)\otimes X_{i,j} \leq I_{dn}\right\}.
\end{align*}
Then
\begin{enumerate}
\item $\mathcal D_{\jewel, \mathbf k}(1)  \subseteq \mathcal D_E(1)$ if and only if $\Set{E^{(i)}_1, \ldots, E^{(i)}_{k_i}}$, $i \in [g]$, are POVMs.
\item $\mathcal D_{\jewel, \mathbf k}  \subseteq \mathcal D_E$ if and only if $\Set{E^{(i)}_1, \ldots, E^{(i)}_{k_i}}$, $i \in [g]$, are jointly measurable POVMs.
\item $\mathcal D_{\jewel, \mathbf k}(l)  \subseteq \mathcal D_E(l)$ for $l \in [d]$ if and only if for any isometry $V : \mathbb C^l \hookrightarrow \mathbb C^d$, the tuples $\Set{V^*E^{(i)}_1 V, \ldots, V^* E^{(i)}_{k_i}V}$, $i \in [g]$, are jointly measurable POVMs.
\end{enumerate}
\end{thm}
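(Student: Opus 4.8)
The plan is to prove the three statements in sequence, building each on the previous one and on the direct-sum machinery from Section \ref{sec:direct-sum}.

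For item (1), I would observe that by the explicit description \eqref{eq:jewel-explicit} together with Lemma \ref{lem:wmaxpolytopes} (or directly Lemma \ref{lem:sumfallsapart} applied iteratively), the level-one inclusion $\mathcal D_{\jewel, \mathbf k}(1) \subseteq \mathcal D_E(1)$ decomposes as a conjunction of the $g$ inclusions $\mathcal D_{\jewel, k_i}(1) \subseteq \mathcal D_{2E^{(i)} - \frac{2}{k_i}I}(1)$, one for each block. This is exactly the content of Proposition \ref{prop:spectrahedronPOVM}, which says each such inclusion is equivalent to $\Set{E^{(i)}_1, \ldots, E^{(i)}_{k_i}}$ being a POVM. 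So item (1) is essentially immediate from already-established results; the only thing to check carefully is that the direct-sum structure of $\mathcal D_{\jewel,\mathbf k}$ really does reduce the inclusion into a conjunction at level one, which Lemma \ref{lem:sumfallsapart} handles.

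For item (2), this is the heart of the theorem. The strategy is to use Lemma \ref{lem:phi}: since $\mathcal D_{\jewel,\mathbf k}(1)$ is bounded, $\mathcal D_{\jewel,\mathbf k} \subseteq \mathcal D_E$ holds if and only if the unital map $\Phi$ sending the defining tuple of the matrix jewel to the tuple $(2E^{(i)}_j - \frac{2}{k_i}I)_{i,j}$ is completely positive. The task is then to translate complete positivity of $\Phi$ into the existence of a joint POVM. Here I would work on the operator system $\mathcal{OS}$ generated by the matrix jewel's defining matrices: this operator system is (an ampliation of) $\ell^\infty$ of the vertex set of the direct sum of simplices, so one can identify a natural positive basis of "point evaluations". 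Concretely, a unital completely positive map out of $C(\mathrm{vertices})$-like operator systems corresponds to a POVM indexed by those vertices; the vertices of $\mathcal D_{\jewel,k_i}(1) \oplus \cdots$ after the direct sum correspond precisely to tuples $(j_1, \ldots, j_g)$ with $j_i \in [k_i]$ (one chooses a vertex in each block's simplex, with the extremal points from Lemma \ref{lem:extremal-points-jewel-base}). The images of these point-evaluation projections under $\Phi$ then form the joint POVM $\Set{R_{j_1,\ldots,j_g}}$, and unitality plus the affine relations among the $x^{(k_i)}_j$ force exactly the marginal conditions $E^{(i)}_v = \sum_{\text{other } j} R_{\ldots, v, \ldots}$. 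Conversely, given a joint POVM one builds $\Phi$ as the corresponding UCP map and reads off complete positivity. I would likely follow the proof of \cite[Theorem V.3]{bluhm2018joint} closely, since the structure is the same with simplices replacing the $k_i = 2$ intervals; the direct-sum compatibility of $\hat\oplus$ with UCP maps (Proposition \ref{prop:direct-sum-FS}) is what lets the blocks combine into a single joint measurement rather than $g$ separate ones.

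For item (3), the plan is to combine item (2) with the standard fact that level-$l$ inclusion of free spectrahedra corresponds to $l$-positivity of $\Phi$ (Lemma \ref{lem:phi}), and then to use the compression/dilation correspondence: $l$-positivity of $\Phi^{A\to B}$ is equivalent to complete positivity of all the compressed maps $V^* (\cdot) V$-composed versions, i.e.\ to $\mathcal D_{\jewel,\mathbf k} \subseteq \mathcal D_{V^*EV}$ for every isometry $V: \mathbb{C}^l \hookrightarrow \mathbb{C}^d$, which by item (2) is joint measurability of the compressed POVMs. The slightly delicate point is that compressing by an isometry $V$ sends the POVM $\Set{E^{(i)}_j}$ to $\Set{V^* E^{(i)}_j V}$, which is again a POVM (completeness is preserved because $V^* I_d V = I_l$), and that $\mathcal D_{V^*EV}(1)$ remains bounded so Lemma \ref{lem:phi} still applies; these are routine but should be noted. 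The main obstacle overall is item (2): getting the bookkeeping right between the operator-system structure of the (ampliated, direct-summed) matrix jewel, the vertex labelling by multi-indices $(j_1,\ldots,j_g)$, and the marginal relations of a joint POVM — in particular making sure the "extra" outcome $E^{(i)}_{k_i} = I - \sum_{j<k_i} E^{(i)}_j$ in each block is correctly accounted for by the affine relation among the extremal points $x^{(k_i)}_1, \ldots, x^{(k_i)}_{k_i}$ rather than appearing as an independent generator.
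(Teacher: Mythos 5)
Your overall strategy coincides with the paper's: item (1) via Lemma \ref{lem:sumfallsapart} and Proposition \ref{prop:spectrahedronPOVM}, item (2) by reducing the inclusion $\mathcal D_{\jewel,\mathbf k}\subseteq\mathcal D_E$ to complete positivity of the unital map $\Phi$ on the operator system generated by the jewel's defining diagonal matrices (Lemma \ref{lem:phi}), and item (3) by the standard compression argument. Items (1) and (3) are fine. Your sketch of item (2), however, contains two concrete errors that would make the argument fail if carried out as written. First, the combinatorics is dualized in the wrong direction: the vertices of $\mathcal D_{\jewel,k_1}(1)\oplus\cdots\oplus\mathcal D_{\jewel,k_g}(1)$ are the \emph{union} of the embedded vertices of the blocks, $\sum_i k_i$ in number; one does not ``choose a vertex in each block'' --- that describes the Cartesian product (cf.\ Lemmas \ref{lem:cartesianfaces} and \ref{lem:sumfallsapart}). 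The multi-indices $\eta\in[k_1]\times\cdots\times[k_g]$ that label the joint POVM are the coordinates of the commutative algebra $\mathbb C^{k_1\cdots k_g}\cong\mathbb C^{k_1}\otimes\cdots\otimes\mathbb C^{k_g}$ in which the defining matrices $w_j^{(i)}=I^{\otimes(i-1)}\otimes v_j^{(k_i)}\otimes I^{\otimes(g-i)}$ live, i.e.\ the facets of the jewel at level one (equivalently the vertices of the dual cuboid), not its vertices. You land on the correct index set, but for the wrong reason.

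Second, and more substantively, the operator system $\mathcal{OS}_{\Set{w_j^{(i)}}}$ is \emph{not} ``$\ell^\infty$ of the vertex set'': for $g\geq 2$ its dimension is at most $1+\sum_i(k_i-1)$, strictly smaller than $\prod_i k_i$. The point-evaluation projections $g_\eta$ (with $g_\eta(\epsilon)=\delta_{\epsilon,\eta}$) therefore do not lie in the domain of $\Phi$, so ``the images of these point-evaluation projections under $\Phi$'' are undefined. The missing step --- which is precisely where the joint POVM comes from in the paper's proof --- is Arveson's extension theorem: one extends $\Phi$ to a map $\tilde\Phi$ on all of $\mathbb C^{k_1\cdots k_g}$, observes that on a commutative algebra positivity of $\tilde\Phi$ is equivalent to complete positivity, and only then sets $G_\eta:=\tilde\Phi(g_\eta)$; the equivalence is between joint measurability and the \emph{existence of a positive extension}, whose non-uniqueness mirrors that of the joint observable. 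You defer to the binary-case proof of \cite[Theorem V.3]{bluhm2018joint}, where this step does appear, so the repair is available --- but as written your sketch omits the one idea that actually produces the $\prod_i k_i$ elements of the joint POVM.
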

\begin{proof}
Since $\mathcal D_{\jewel, k_i}(1)$ is a polytope for all $i \in [g]$ and $D_{\jewel, k_i} = \mathcal W_{max}(D_{\jewel, k_i}(1))$, the first assertion follows from Lemmas \ref{lem:wmaxpolytopes} and \ref{lem:sumfallsapart} together with Proposition \ref{prop:spectrahedronPOVM}. 

For the second assertion, let us define, for $i \in [g]$ and $j \in [k_i-1]$,
$$w^{(i)}_j := \underbrace{I \otimes \cdots \otimes I}_{i-1 \text{ times}} \otimes v^{(k_i)}_j \otimes \underbrace{I \otimes \cdots \otimes I}_{g-i \text{ times}}.$$
Here, the $v_j^{(k_i)}$  are (identified with) the diagonal matrices appearing in Definition \ref{def:jewel}, with the appropriate matrix dimension ($k_i$ in the formula above). The free spectrahedral inclusion holds if and only if the unital map $\Phi: \mathcal O \mathcal S_{\Set{w^{(i)}_j}_{i \in [g], j \in [k_i -1]}} \to \mathcal M_d$, defined as 
\begin{equation*}
\Phi: w_j^{(i)}  \mapsto 2 E^{(i)}_j - \frac{2}{k_i}I \qquad \forall i \in [g], \forall j \in [k_i -1],
\end{equation*}
is completely positive, since $\mathcal D_{\jewel, \mathbf k}(1)$ is a polytope and therefore bounded. By Arveson's extension theorem $ \Phi$ is completely positive if and only if there is a completely positive extension $\tilde\Phi: \mathbb C^{k_1 \cdots k_g} \to \mathcal M_d$ of $\Phi$, as $\Phi$ is defined on an operator system \cite[Theorem 6.2]{Paulsen2002}. As $\mathbb C^{k_1 \cdots k_g}$ is a commutative matrix subalgebra, $\tilde \Phi$ is completely positive if and only if it is positive. Let $\tilde \Phi: \mathbb C^{k_1 \cdots k_g} \to \mathcal M_d$ be an extension $\Phi$. We will show now that $\tilde \Phi$ is positive if and only if the $E^{(i)}$ form a set of jointly measurable POVMs.

Let $\epsilon \in [\mathbf k] := \times_{i=1}^g [k_i]$. Then,
\begin{equation} \label{eq:oselements}
w^{(i)}_j(\epsilon) = - \frac{2}{k_i} + 2 \delta_{\epsilon(i),j}.
\end{equation}
Let $g_\eta \in \mathbb C^{k_1 \cdots k_g}$, $\eta \in [\mathbf k]$ such that $g_\eta(\epsilon) = \delta_{\epsilon,\eta}$. These vectors form a basis of $\mathbb C^{k_1 \cdots k_g}$. Hence, we can rewrite Equation \eqref{eq:oselements} as
\begin{equation*}
w^{(i)}_j(\epsilon)= - \frac{2}{k_i} \sum_{\eta \in [\mathbf k]} g_\eta(\epsilon) + 2 \sum_{\substack{\eta \in [\mathbf k]\\ \eta(i) = j}} g_\eta(\epsilon).
\end{equation*}

Let $G_\eta := \tilde \Phi(g_\eta)$. The map $\tilde \Phi$ is positive if and only if $G_\eta \geq 0 $ for all $\eta \in [\mathbf k]$. It remains to show that $[G_\eta]_{\eta \in [\mathbf k]}$ is a joint POVM for the $E^{(i)}$ if and only if $\tilde \Phi$ is a positive extension of $\Phi$. By the definition of $\tilde \Phi$ and its unitality, we obtain
\begin{equation*}
 - \frac{2}{k_i} I + 2 \sum_{\substack{\eta \in [\mathbf k]\\ \eta(i) = j}} G_\eta = 2 E_j^{(i)} - \frac{2}{k_i} I \qquad \forall i \in [g], \forall  j \in [k_i -1].
\end{equation*}
Thus, $\tilde \Phi$ is a positive extension of $\Phi$ if and only if the set $\Set{G_\eta}_{\eta \in [\mathbf k]}$ satisfies
\begin{align*}
G_\eta &\geq 0 \qquad \qquad\qquad\qquad\qquad \forall \eta \in [\mathbf k] \\
I &= \sum_{\eta \in [\mathbf k]} G_\eta \\
E_j^{(i)} &= \sum_{\substack{\eta \in [\mathbf k]\\ \eta(i) = j}} G_\eta \qquad \qquad  \forall i \in [g], \forall j \in [k_i -1].
\end{align*}
This is equivalent to $[G_\eta]_{\eta \in [\mathbf k]}$ being a joint POVM for the $\Set{E_1^{(i)} \ldots E_{k_i}^{(i)}}$, since the above conditions also imply
\begin{equation*}
E^{(i)}_{k_i} = I - \sum_{j = 1}^{k_i -1} E^{(i)}_j= \sum_{\eta \in [\mathbf k]} G_\eta - \sum_{j=1}^{k_i-1}\sum_{\substack{\eta \in [\mathbf k]\\ \eta(i) = j}} G_\eta  = \sum_{\substack{\eta \in [\mathbf k]\\ \eta(i) = k_i}} G_\eta\qquad \forall i \in [g].
\end{equation*}

Finally, the third claim follows from the second one, using the standard argument in \cite[Lemma V.2 and Corollary IV.6]{bluhm2018joint}.
\end{proof}

\begin{remark}
From point $(3)$ of Theorem \ref{thm:jewel-compatible-POVM}, it follows in particular that it is enough to check inclusion at level $d$, because the compatibility of a collection of POVMs is not affected by a conjugation with unitaries applied to all POVM elements. This is also a consequence of the well-known fact that for a map $\Phi: \mathcal S \to \mathcal M_d$, where $\mathcal S$ is an operator system, $d$-positivity is equivalent to complete positivity \cite[Theorem 6.1]{Paulsen2002}. As $\mathcal D_{\jewel, \mathbf k}(1)$ is bounded, $\mathcal D_{\jewel, \mathbf k}(d) \subseteq \mathcal D_{E}(d)$ thus holds if and only if $\mathcal D_{\jewel, \mathbf k} \subseteq \mathcal D_{E}$ (see also \cite[Lemma 2.3]{helton2019dilations}, \cite[Corollary 4.6]{bluhm2018joint}).
\end{remark}

\begin{remark}
The fact that $\mathcal D_{\jewel,k}(1)$ is a simplex for all $k \in \mathbb N$ implies that the free spectrahedron $\mathcal D_{A}$ with $\mathcal D_{A}(1) = \mathcal D_{\jewel,k}(1)$ is uniquely defined (in fact, this is true even for matrix convex sets). This follows from \cite[Section 4]{passer2018minimal}. Remark 4.2 of \cite{davidson2016dilations} implies that $\mathcal D_{\jewel,k}(1) \subseteq \mathcal D_E(1)$ if and only if $\mathcal D_{\jewel,k} \subseteq \mathcal D_E$, which by Theorem \ref{thm:jewel-compatible-POVM} implies  that the joint measurability problem is trivial for just a single POVM. As $\mathcal D_{\jewel,\mathbf k}(1)$ is a polytope but no longer a simplex for $\mathbf k \in \mathbb N^{g}$ and $g \geq 2$, the free spectrahedron with this set at level $1$ is no longer unique. Therefore, there exist POVMs which are not jointly measurable.
\end{remark}

The correspondence in the theorem above also extends to the level of balanced compatibility regions / inclusion sets. The theorem below corresponds to \cite[Theorem V.7]{bluhm2018joint} and is a generalization of the latter from binary POVMs to POVMs with an arbitrary number of outcomes.

\begin{thm} \label{thm:delta-is-gamma}
Let $d$, $g \in \mathbb N$ and $\mathbf k \in \mathbb N^{g}$. Then,
\begin{equation*}
\Gamma(g,d,\mathbf{k}) = \Delta(g,d,\mathbf{k}).
\end{equation*}
\end{thm}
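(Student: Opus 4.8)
The plan is to show both inclusions between the balanced compatibility region $\Gamma(g,d,\mathbf k)$ and the inclusion set $\Delta(g,d,\mathbf k)$ by translating membership in each set, via Theorem \ref{thm:jewel-compatible-POVM}, into a statement about joint measurability of noisy POVMs. The bridge is the observation that noisy POVMs of the form $\tilde E^{(i)}_j = s_i E^{(i)}_j + (1-s_i)\frac{1}{k_i}I$ are, up to the affine change of variables in Theorem \ref{thm:jewel-compatible-POVM}, precisely the tuples of matrices that appear when one scales the matrix jewel by the vector $(s_1^{\times(k_1-1)},\ldots,s_g^{\times(k_g-1)})$. Concretely, if we write $A^{(i)} = 2E^{(i)} - \frac{2}{k_i}I$, then $2\tilde E^{(i)}_j - \frac{2}{k_i}I = s_i(2E^{(i)}_j - \frac{2}{k_i}I) = s_i A^{(i)}_j$, so the spectrahedron $\mathcal D_{\tilde E}$ is exactly the scaled spectrahedron obtained by replacing the defining tuple $A$ by $(s_1 A^{(1)},\ldots,s_g A^{(g)})$; equivalently, $\mathcal D_{\jewel,\mathbf k}(1)\subseteq \mathcal D_{\tilde E}(1)$ holds automatically for any POVMs $E^{(i)}$ and any $s\in[0,1]^g$, and $(s_1^{\times(k_1-1)},\ldots,s_g^{\times(k_g-1)})\cdot\mathcal D_{\jewel,\mathbf k}\subseteq \mathcal D_E$ is equivalent to $\mathcal D_{\jewel,\mathbf k}\subseteq \mathcal D_{\tilde E}$.

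For the inclusion $\Gamma\subseteq\Delta$: fix $s\in\Gamma(g,d,\mathbf k)$ and an arbitrary tuple $B\in(\mathcal M_d^{sa})^{\sum_i(k_i-1)}$ with $\mathcal D_{\jewel,\mathbf k}(1)\subseteq\mathcal D_B(1)$. Writing $B = (B^{(1)},\ldots,B^{(g)})$ with $B^{(i)}\in(\mathcal M_d^{sa})^{k_i-1}$ and setting $E^{(i)}_j := \frac12 B^{(i)}_j + \frac{1}{k_i}I$, part (1) of Theorem \ref{thm:jewel-compatible-POVM} gives that the $\{E^{(i)}_1,\ldots,E^{(i)}_{k_i}\}$ are POVMs. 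Since $s\in\Gamma$, the noisy POVMs $\tilde E^{(i)}$ are jointly measurable, so by part (2) of Theorem \ref{thm:jewel-compatible-POVM} we get $\mathcal D_{\jewel,\mathbf k}\subseteq\mathcal D_{\tilde E}$, which by the identity above is exactly $(s_1^{\times(k_1-1)},\ldots,s_g^{\times(k_g-1)})\cdot\mathcal D_{\jewel,\mathbf k}\subseteq\mathcal D_B$. As $B$ was arbitrary subject to the level-$1$ inclusion, $s\in\Delta(g,d,\mathbf k)$. The reverse inclusion $\Delta\subseteq\Gamma$ runs the same equivalences backwards: given $s\in\Delta$ and arbitrary $d$-dimensional POVMs $E^{(i)}$ with $k_i$ outcomes, set $B^{(i)}_j := 2E^{(i)}_j - \frac{2}{k_i}I$; part (1) ensures $\mathcal D_{\jewel,\mathbf k}(1)\subseteq\mathcal D_B(1)$, membership in $\Delta$ then yields $(s_1^{\times(k_1-1)},\ldots,s_g^{\times(k_g-1)})\cdot\mathcal D_{\jewel,\mathbf k}\subseteq\mathcal D_B = \mathcal D_{\jewel,\mathbf k}\subseteq\mathcal D_{\tilde E}$, and part (2) translates this into joint measurability of the $\tilde E^{(i)}$, i.e.\ $s\in\Gamma(g,d,\mathbf k)$.

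The only genuinely delicate point — and the step I would be most careful about — is checking that the scaling conventions match exactly: that scaling the matrix jewel by the grouped vector $(s_1^{\times(k_1-1)},\ldots,s_g^{\times(k_g-1)})$ corresponds on the dual (spectrahedron-defining) side to scaling each block $A^{(i)}$ by $s_i$, and hence to the balanced-noise admixture with weight $s_i$ on the $i$-th POVM with the correct $1/k_i$ normalization. This is where the definition of $\Delta_{\mathcal D_A}$ (scaling within each group of $k_i-1$ coordinates by a common $s_i$) was tailored precisely to the balanced-noise model, and one should state explicitly the elementary identity $s\cdot\mathcal D_A = \mathcal D_{(s_1^{-1}A^{(1)},\ldots)}$-type relation — more precisely that $X\in(s_1^{\times(k_1-1)},\ldots)\cdot\mathcal D_{\jewel,\mathbf k}$ iff $\sum_{i,j} s_i w^{(i)}_j\otimes X'_{i,j}\le I$ for the corresponding unscaled $X'$ — so that everything lines up with $\mathcal D_{\tilde E}$. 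Once this bookkeeping is in place, the theorem is an immediate corollary of Theorem \ref{thm:jewel-compatible-POVM}, exactly paralleling the binary case \cite[Theorem V.7]{bluhm2018joint}.
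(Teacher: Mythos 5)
Your proposal is correct and follows essentially the same route as the paper: both rest on the identity $2\tilde E^{(i)}_j-\tfrac{2}{k_i}I = s_i\bigl(2E^{(i)}_j-\tfrac{2}{k_i}I\bigr)$, the resulting equivalence $(s_1^{\times(k_1-1)},\ldots,s_g^{\times(k_g-1)})\cdot\mathcal D_{\jewel,\mathbf k}\subseteq\mathcal D_E \iff \mathcal D_{\jewel,\mathbf k}\subseteq\mathcal D_{\tilde E}$, and the bijectivity of $A\mapsto 2A-\tfrac{2}{k}I$ to pass between arbitrary tuples $B$ and POVM tuples, all fed into parts (1) and (2) of Theorem \ref{thm:jewel-compatible-POVM}. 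The ``delicate point'' you flag is exactly the bookkeeping the paper's proof carries out, so no gap remains.
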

\begin{proof}
Let $s \in \mathbb R^g_+$. It holds that $s \in \Gamma(g,d,\mathbf{k})$ if and only if $s_i E^{(i)} + (1-s_i)I_d/k_i$, $i \in [g]$ are jointly measurable for any $d$-dimensional POVMs with $k_i$ outcomes for the $i$-th POVM. Let $\mathcal D_E $ be as in Theorem \ref{thm:jewel-compatible-POVM}. We find that
\begin{align*}
(s_1^{\times (k_1-1)}, \ldots, s_g^{\times (k_g-1)}) \cdot \mathcal D_{\jewel, \mathbf k} \subseteq \mathcal D_E &\iff \mathcal D_{\jewel, \mathbf k} \subseteq \mathcal D_{(2s_1 E^{(1)}-\frac{2s_1}{k_1} I,\ldots,  2s_gE^{(g)}-\frac{2s_g}{k_g} I)} \\
& \iff D_{\jewel, \mathbf k} \subseteq D_{(2F^{(1)}-\frac{2}{k_1} I,\ldots,  2F^{(g)}-\frac{2}{k_g} I)},
\end{align*}
where $F_j^{(i)} = s_i E_j^{(i)} + (1-s_i) \frac{1}{k_i} I_d$ and $j \in [k_i-1]$, $i \in [g]$. Hence, it follows from Theorem \ref{thm:jewel-compatible-POVM} that $s \in \Gamma(g,d,\mathbf{k})$ if and only if the implication 
\begin{equation} \label{eq:Gamma_implication}
\mathcal D_{\jewel, \mathbf k}(1) \subseteq \mathcal D_E(1) \implies \left(s_1^{\times (k_1-1)}, \ldots, s_g^{\times (k_g-1)}\right) \cdot \mathcal D_{\jewel, \mathbf k} \subseteq \mathcal D_E
\end{equation}
is true for all $E = (2 E^{(1)}-\frac{2}{k_1} I,\ldots,  2E^{(g)}-\frac{2}{k_g} I)$. Moreover, $A \mapsto 2 A - (2/k) I$ is a bijective map on $\mathcal M_d^{sa}$ for fixed $k \in \mathbb N$. Thus, any $B_i \in (\mathcal M_{d}^{sa})^{k_i-1}$ can written as $B_i = 2 E^{(i)} - (2/k_i)I$ with $E^{(i)} \in (\mathcal M_{d}^{sa})^{k_i-1}$ for all $i \in [g]$. Therefore, the implication in Equation \eqref{eq:Gamma_implication} holds for fixed $s$ if and only if the implication 
\begin{equation}
\mathcal D_{\jewel, \mathbf k}(1) \subseteq \mathcal D_B(1) \implies \left(s_1^{\times (k_1-1)}, \ldots, s_g^{\times (k_g-1)}\right) \cdot \mathcal D_{\jewel, \mathbf k} \subseteq \mathcal D_B
\end{equation}
is true for all $B \in (\mathcal M_d^{sa})^{\sum_{i=1}^g (k_i-1)}$. Hence, $s \in \Gamma(g,d,\mathbf k)$ is equivalent to $s \in \Delta(g,d,\mathbf{k})$.
\end{proof}

\section{Compatibility results from quantum information theory
} \label{sec:compatibility-from-QIT}

Having established in the previous section the close relation between compatibility and inclusion sets, we next gather results from quantum information theory which provide upper and lower bounds on the sets $\Gamma$. Such bounds translate immediately, via Theorem \ref{thm:delta-is-gamma}, to the corresponding bounds for the sets $\Delta$; we postpone this analysis until Section \ref{sec:discussion}.

\subsection{Upper bounds from MUBs} \label{sec:MUB-bounds}

Mutually unbiased bases (MUBs) yield natural examples of POVMs which are very far from being compatible \cite{wootters1989optimal}. Recall that, in $\mathbb C^d$, a collection of $g$ orthonormal bases $\Set{\psi_j^{(i)}}_{j = 1}^d$, $i \in [g]$, is called \emph{mutually unbiased} if
\begin{equation*}
|\langle \psi_j^{(i)},\psi_u^{(v)}\rangle|^2 = \frac{1}{d}
\end{equation*}
for $i \neq v$ and any $j$, $u \in [d]$. Let $E_j^{(i)} = \psi_j^{(i)}(\psi_j^{(i)})^\ast $ be the corresponding effect operators.
In the case where we construct one MUB from another one by applying a Fourier transform, i.e.\
\begin{equation*}
\psi_k^{(2)} = \frac{1}{\sqrt{d}} \sum_{l = 1}^d e^{2\pi \mathrm{i} \frac{lk}{d}} \psi_{l}^{(1)}
\end{equation*}
we will call these two MUBs \emph{canonically conjugated}.

The maximal number of MUBs in dimension $d$ is $d+1$ and it is known that this bound is attained if $d = p^r$ for a prime number $p$ and $r \in \mathbb N$ \cite{wootters1989optimal}. Apart from that, very few examples are known, see \cite{durt2010mutually} for a review. From \cite{Carmeli2012}, we have the following results on two canonically conjugated MUBs:
\begin{prop}[{\cite[Proposition 5, Example 1 and Proposition 6]{Carmeli2012}}] \label{prop:upper-2-mubs}
Let $E^{(1)}$ and $E^{(2)}$ be the effect operators corresponding to two canonically conjugated MUBs. Then, $ \lambda E^{(1)} + (1-\lambda)I/d$ and $ \mu E^{(2)} + (1-\mu)I/d$ are jointly measurable if and only if
\begin{equation*}
\mu \leq \frac{1}{d}[(d-2)(1-\lambda)+2\sqrt{(1-d)\lambda^2 + (d-2)\lambda + 1}]
\end{equation*}
(equivalently, we can exchange $\lambda$ and $\mu$). Another equivalent form is that the above POVMs are jointly measurable if and only if
\begin{equation*}
\mu + \lambda \leq 1 \qquad \text{or} \qquad \lambda^2 + \mu^2+\frac{2(d-2)}{d}(1-\mu)(1-\lambda) \leq 1.
\end{equation*}
In particular, for $\mu = \lambda$, this simplifies to 
\begin{equation*}
\lambda \leq \frac{1}{2}\left(1+\frac{1}{1+\sqrt{d}}\right).
\end{equation*}
\end{prop}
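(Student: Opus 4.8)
The plan is to reduce the general statement about compatibility of the noisy POVMs to a concrete semidefinite feasibility problem, solve that problem explicitly, and then massage the resulting inequality into the three equivalent forms claimed. I would begin by recalling the structure of two canonically conjugated MUBs: the effects $E^{(1)}_j = \psi^{(1)}_j (\psi^{(1)}_j)^\ast$ are the rank-one projections onto a fixed orthonormal basis (say the computational basis), while $E^{(2)}_k = \psi^{(2)}_k (\psi^{(2)}_k)^\ast$ are the rank-one projections onto its Fourier transform. The noisy POVMs $\lambda E^{(1)} + (1-\lambda) I/d$ and $\mu E^{(2)} + (1-\mu) I/d$ are jointly measurable if and only if there exists a joint POVM $\{G_{jk}\}_{j,k \in [d]}$ of positive semidefinite operators summing to $I$ whose two marginals reproduce these noisy effects. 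The key simplification comes from exploiting the large symmetry group: the pair of MUBs is invariant (up to relabelling of outcomes) under the discrete Weyl--Heisenberg group, so by twirling any feasible joint POVM over this group we may assume that $G_{jk}$ is itself covariant. This collapses the $d^2$ operators $G_{jk}$ to a single operator (a "fiducial" element), and the joint POVM constraints become a small number of scalar and operator inequalities. This is precisely the covariant joint measurement construction used in \cite{Carmeli2012}; I would cite Proposition 5 and Proposition 6 there, and my job in the write-up is to assemble their pieces into the stated equivalences rather than reprove them from scratch.

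Concretely, I would first prove the "if" direction in the most convenient form — the condition $\mu + \lambda \le 1$ or the quadratic inequality $\lambda^2 + \mu^2 + \tfrac{2(d-2)}{d}(1-\mu)(1-\lambda) \le 1$ — by exhibiting an explicit covariant joint POVM. When $\mu + \lambda \le 1$ one uses a trivial construction: take $G_{jk} = \lambda E^{(1)}_j \delta_{\ldots} + \mu E^{(2)}_k \delta_{\ldots} + (\text{leftover noise})\tfrac{1}{d^2}I$, i.e.\ distribute the deterministic parts into disjoint index ranges and spread the remaining weight uniformly; positivity holds exactly because $1 - \lambda - \mu \ge 0$ leaves enough "budget." For the quadratic region one needs the genuinely nontrivial fiducial operator from \cite{Carmeli2012}; I would write it down (it is a specific combination of $I$, a rank-one projector, and its Fourier conjugate with coefficients depending on $\lambda,\mu,d$), verify positivity (a $2\times 2$ or at most $3\times 3$ principal-minor computation in the relevant invariant subspace), and check that the two marginals come out right. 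The "only if" direction then follows by the twirling argument above: any joint POVM can be averaged to a covariant one with the same marginals, covariant joint POVMs are parametrized by the fiducial operator, and positivity of the fiducial forces exactly the quadratic constraint.

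Once the two-parameter equivalence is in hand, the remaining two forms are pure algebra. The expression $\mu \le \tfrac1d[(d-2)(1-\lambda) + 2\sqrt{(1-d)\lambda^2 + (d-2)\lambda + 1}]$ is obtained by solving the quadratic $\lambda^2 + \mu^2 + \tfrac{2(d-2)}{d}(1-\mu)(1-\lambda) \le 1$ for $\mu$ as a quadratic in $\mu$ and taking the upper root — here one must check that the discriminant simplifies to a multiple of $(1-d)\lambda^2 + (d-2)\lambda + 1$ and that the relevant branch is the $+$ sign, using $0 \le \lambda,\mu \le 1$ and $d \ge 2$ to discard the spurious root and to confirm that the disjunction with $\mu + \lambda \le 1$ is automatically absorbed. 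Finally, specializing to $\mu = \lambda$ in the quadratic inequality gives $2\lambda^2 + \tfrac{2(d-2)}{d}(1-\lambda)^2 \le 1$; expanding and solving the resulting quadratic in $\lambda$, one root is negative and the positive root simplifies, after rationalizing, to $\tfrac12(1 + \tfrac{1}{1+\sqrt d})$ — a short computation I would spell out but not belabor. The main obstacle is the "only if" direction: justifying that twirling loses no generality (i.e.\ the symmetry group acts transitively enough on the outcome labels that covariant joint POVMs suffice) and then correctly identifying the positivity condition on the fiducial operator; the rest is either a cited construction or elementary manipulation of quadratics.
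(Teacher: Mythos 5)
The paper does not actually prove this proposition: it is imported verbatim from \cite{Carmeli2012} (Propositions 5 and 6 and Example 1 there), so there is no in-paper argument to compare yours against. Your outline is a faithful reconstruction of how the cited reference proceeds: the reduction to covariant joint observables by twirling over the discrete Weyl--Heisenberg group (legitimate because the noisy target observables are covariant and averaging preserves marginals), the parametrization of covariant joint POVMs by a single fiducial positive operator, and the identification of its positivity with the quadratic constraint. The elementary construction for $\lambda+\mu\le 1$ via $G_{jk}=\tfrac{\lambda}{d}E^{(1)}_j+\tfrac{\mu}{d}E^{(2)}_k+(1-\lambda-\mu)\tfrac{1}{d^2}I$ is correct. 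Your algebra connecting the forms also checks out: multiplying the discriminant of the quadratic in $\mu$ by $d^2$ gives $4\left[(1-d)\lambda^2+(d-2)\lambda+1\right]=4(1-\lambda)\left(1+(d-1)\lambda\right)$, the upper root reproduces the first displayed inequality, and the region $\mu+\lambda\le 1$ is indeed absorbed since $(1-\lambda)^2\le(1-\lambda)(1+(d-1)\lambda)$.

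One small correction: in the symmetric specialization the quadratic $(4d-4)\lambda^2-4(d-2)\lambda+(d-4)\le 0$ has roots $\frac{d-2\pm\sqrt d}{2(d-1)}$, and the lower root is \emph{not} negative for $d\ge 5$; the quadratic inequality alone only describes the interval between the two roots. You still get the clean answer $\lambda\le\frac12\left(1+\frac{1}{1+\sqrt d}\right)$, but only after taking the union with the region $\lambda\le\frac12$ coming from the disjunct $\mu+\lambda\le 1$ (which contains the lower root). Since you already invoke that absorption argument for the asymmetric form, this is a one-line fix. The only genuinely nontrivial content you defer --- the explicit fiducial operator and the equivalence of its positivity with the quadratic constraint --- is exactly what the paper also defers to \cite{Carmeli2012}, so your proposal is as complete as the paper's treatment.
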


For more than two MUBs, there is a necessary criterion which generalizes the above in the symmetric case.
\begin{prop}[{\cite[Equation 10]{Designolle2018}}]
Let $\lambda E^{(i)} + (1-\lambda)I/d$, $i \in [g]$ be jointly measurable. Then, it holds that
\begin{equation*}
\lambda \leq \frac{\sqrt{d} + g}{g(\sqrt{d} + 1)}.
\end{equation*}
\end{prop}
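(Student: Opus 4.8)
The plan is to test a hypothetical joint POVM against one positive ``witness'' operator attached to each outcome tuple, built from sums of the MUB projectors; the only genuine use of mutual unbiasedness will be a spectral-norm bound on such sums. First I would fix notation: write $E^{(i)}_j = \psi^{(i)}_j(\psi^{(i)}_j)^\ast$ for the rank-one MUB projectors and $M^{(i)}_j := \lambda E^{(i)}_j + \frac{1-\lambda}{d} I$ for the noisy effects. If the $M^{(i)}$ are jointly measurable, Definition \ref{def:jointPOVM} provides a joint POVM $\Set{G_{\mathbf{j}}}$ indexed by $\mathbf{j} = (j_1, \ldots, j_g)$ with $j_i \in [d]$, satisfying $G_{\mathbf{j}} \geq 0$, $\sum_{\mathbf{j}} G_{\mathbf{j}} = I_d$, and the marginal relations $\sum_{\mathbf{j} : j_i = j} G_{\mathbf{j}} = M^{(i)}_j$ for all $i \in [g]$, $j \in [d]$.

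Next I would prove the key lemma: for every $\mathbf{j}$, the operator $N_{\mathbf{j}} := \sum_{i=1}^g E^{(i)}_{j_i}$ satisfies $\norm{N_{\mathbf{j}}}_\infty \leq 1 + \frac{g-1}{\sqrt d} =: \nu$. Writing $V$ for the $d \times g$ matrix whose columns are the unit vectors $\psi^{(i)}_{j_i}$, we have $N_{\mathbf{j}} = V V^\ast$, hence $\norm{N_{\mathbf{j}}}_\infty = \norm{V^\ast V}_\infty$; and $V^\ast V = I_g + R$, where $R$ is Hermitian with zero diagonal and off-diagonal entries of modulus $1/\sqrt d$ by mutual unbiasedness, so Gershgorin's theorem (equivalently, the Hermitian row-sum bound) gives $\norm{V^\ast V}_\infty \leq 1 + \norm{R}_\infty \leq 1 + (g-1)/\sqrt d$.

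The main computation then runs as follows. Since $G_{\mathbf{j}} \geq 0$ and $\nu I_d - N_{\mathbf{j}} \geq 0$ (the latter by the lemma), we have $\tr[G_{\mathbf{j}}(\nu I_d - N_{\mathbf{j}})] \geq 0$ for every $\mathbf{j}$. Summing over $\mathbf{j}$, the first piece is $\nu \tr[\sum_{\mathbf{j}} G_{\mathbf{j}}] = \nu d$, while for the second piece I would regroup the sum according to the value of the $i$-th coordinate and use the marginal relations to obtain $\sum_{\mathbf{j}} \tr[G_{\mathbf{j}} N_{\mathbf{j}}] = \sum_{i=1}^g \sum_{j=1}^d \tr[M^{(i)}_j E^{(i)}_j] = g(d\lambda + 1 - \lambda)$, using $\tr[(E^{(i)}_j)^2] = \tr[E^{(i)}_j] = 1$. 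Hence $0 \leq \nu d - g(d\lambda + 1 - \lambda)$, i.e.\ $g\lambda(d-1) \leq \nu d - g$. Substituting $\nu = 1 + (g-1)/\sqrt d$ gives $\nu d - g = d - g + (g-1)\sqrt d = (\sqrt d + g)(\sqrt d - 1)$, while $g(d-1) = g(\sqrt d - 1)(\sqrt d + 1)$, so dividing yields $\lambda \leq \frac{\sqrt d + g}{g(\sqrt d + 1)}$.

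The only step I expect to require care is the spectral bound $\norm{N_{\mathbf{j}}}_\infty \leq 1 + (g-1)/\sqrt d$; everything else is routine bookkeeping with POVM marginals together with a scalar simplification. Note that we only use an \emph{upper} estimate of $\norm{N_{\mathbf{j}}}_\infty$, so the argument produces a necessary condition on $\lambda$, as in the statement; for $g = 2$ the Gram matrix is $2 \times 2$ and the estimate is exact, recovering the if-and-only-if of Proposition \ref{prop:upper-2-mubs} in the symmetric case $\lambda = \mu$.
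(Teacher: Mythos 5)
Your argument is correct, and it is worth noting that the paper does not actually prove this proposition: it is imported verbatim from Designolle et al.\ as \cite[Equation 10]{Designolle2018}, so there is no in-paper proof to compare against. Your derivation is a valid, self-contained replacement. The key lemma is right: $N_{\mathbf j}=VV^\ast$ has the same nonzero spectrum as the $g\times g$ Gram matrix $V^\ast V=I_g+R$, whose off-diagonal entries have modulus $1/\sqrt d$ by unbiasedness, so Gershgorin gives $\norm{N_{\mathbf j}}_\infty\leq 1+(g-1)/\sqrt d=\nu$. The bookkeeping also checks out: $\sum_{\mathbf j}\tr[G_{\mathbf j}N_{\mathbf j}]=\sum_{i,j}\tr[M^{(i)}_jE^{(i)}_j]=g(d\lambda+1-\lambda)$ via the marginal relations of Definition \ref{def:jointPOVM}, and the factorizations $\nu d-g=(\sqrt d+g)(\sqrt d-1)$ and $g(d-1)=g(\sqrt d-1)(\sqrt d+1)$ yield exactly the claimed bound (for $d\geq 2$). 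Your method is also genuinely different in flavor from the technique the paper \emph{does} use for the neighboring MUB bound (the proposition proved via Zhu's information-theoretic criterion, Proposition \ref{prop:zhu}, giving $\sum_i\lambda_i^2\leq 1$): yours is an elementary ``positive witness'' pairing $\tr[G_{\mathbf j}(\nu I-N_{\mathbf j})]\geq 0$, requiring nothing beyond the spectral estimate on sums of MUB projectors, and it correctly degenerates to the tight symmetric bound of Proposition \ref{prop:upper-2-mubs} when $g=2$. One small presentational caveat: the proposition as stated in the paper does not repeat the hypothesis that the $E^{(i)}$ come from MUBs (it is inherited from the surrounding text), and your proof uses this hypothesis essentially, both for rank-one-ness ($\tr[(E^{(i)}_j)^2]=1$) and for the overlap condition; you should state it explicitly.
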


\bigskip

There is a different approach to finding necessary conditions for joint measurability developed by H.~Zhu. While it is not restricted to MUBs, it seems to work best for these objects.
We recall Zhu's incompatibility criterion \cite{zhu2015information,zhu2016universal}. Define, for any matrix $A$ with $\tr[A] \neq 0$,
\begin{equation*}
 \overline{\mathcal G}(A):= \frac{\dyad{A^\circ}}{\tr[A]} \in \mathcal M^{sa}_{d^2},
\end{equation*}
where $A^\circ = A - \tr[A] I/d$ and $\Ket{A^\circ}$ is a vectorization of $A^\circ$. For a POVM $E$, we define
\begin{equation*}
 \overline{\mathcal G}(\Set{E_i}_{i \in [k]}) = \sum_{i = 1}^k \overline{\mathcal G}(E_i).
\end{equation*}
\begin{prop}[{\cite[Equations (10,11)]{zhu2015information}}] \label{prop:zhu}
Let $E^{(i)}$, $i \in [g]$ be a collection of compatible POVMs in $\mathcal M_d$. Then,
\begin{equation*}
1 + \min \Set{\tr[H]:H \geq \overline{\mathcal G}(E^{(i)}), \forall i \in [g]} \leq d.
\end{equation*}
\end{prop}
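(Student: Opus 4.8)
The plan is to reduce the statement to the defining SDP feasibility problem for joint measurability and then dualize. First I would set up the standard dilation/characterization: a collection of POVMs $E^{(i)} = \{E^{(i)}_j\}_{j \in [k_i]}$, $i \in [g]$, is compatible if and only if there exists a family of positive semidefinite operators (a joint POVM) $G_{j_1,\ldots,j_g} \geq 0$ summing to $I_d$ whose marginals recover the $E^{(i)}$. Following Zhu's approach, I would parametrize the problem using the quantum-channel (Choi) picture: a POVM $E = \{E_j\}$ corresponds to the operator $\sum_j \dyad{j} \otimes E_j$, and one rephrases compatibility as the existence of a completely positive trace-preserving map (or rather a suitable positive operator) consistent with the given marginals. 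The object $\overline{\mathcal G}(E)$ is precisely the normalized, traceless part of the ``second moment'' (the Gram-type operator) associated to the POVM, and the quantity $\min\{\tr[H] : H \geq \overline{\mathcal G}(E^{(i)})\ \forall i\}$ is the optimal value of a semidefinite program.

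The key steps, in order: (1) write down the primal SDP whose feasibility encodes joint measurability of the $E^{(i)}$ --- variables are the joint POVM elements $G_\eta$, constraints are positivity, normalization, and the marginal conditions; (2) form the Lagrangian dual, introducing a self-adjoint multiplier (call it $H$, up to a scalar shift) for the marginal constraints and a scalar for the normalization constraint; (3) use weak duality: feasibility of the primal implies the dual optimum is bounded by the appropriate value --- concretely, that any dual-feasible $H$ with $H \geq \overline{\mathcal G}(E^{(i)})$ for all $i$ must satisfy $1 + \tr[H] \leq d$ (the $d$ on the right coming from $\tr[I_d] = d$ and the normalization of the joint POVM); (4) minimize over such $H$ to get the stated bound. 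The vectorization $\Ket{A^\circ}$ and the maximally entangled state enter because the marginal constraints, when written in the Choi representation of the relevant channel, naturally produce rank-one terms $\dyad{E_i^\circ}/\tr[E_i]$ after centering each effect by its trace.

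The main obstacle I expect is step (2)--(3): getting the bookkeeping of the duality exactly right so that the traceless-centering operation $A \mapsto A^\circ = A - \tr[A]I/d$ and the normalization $\tr[E_i]$ appear in precisely the form defining $\overline{\mathcal G}$, and so that the constant is exactly $d$ rather than some shifted or rescaled version. In particular one must be careful that the ``$1 +$'' on the left-hand side is the contribution of the trivial/scalar part of the POVMs that is automatically compatible and has been factored out of $\overline{\mathcal G}$ by the centering. An alternative, perhaps cleaner, route --- and the one I would actually try to follow, since it is how Zhu phrases it in \cite{zhu2015information} --- is to avoid explicit SDP duality and instead argue directly: assume the $E^{(i)}$ are compatible with joint POVM $\{G_\eta\}$, construct from $\{G_\eta\}$ an explicit feasible point $H$ of the form (scalar)$\cdot I + $ (combination of the $\overline{\mathcal G}(G_\eta)$), verify $H \geq \overline{\mathcal G}(E^{(i)})$ for each $i$ using convexity/operator-monotonicity of $A \mapsto \dyad{A^\circ}/\tr[A]$ (this Gram operator is jointly convex in the appropriate sense, which is the technical lemma doing the real work), and compute $\tr[H] \leq d - 1$. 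Either way, the crux is the convexity property of $\overline{\mathcal G}$ under coarse-graining of POVM elements, which lets the fine-grained joint POVM control the coarse-grained marginals.
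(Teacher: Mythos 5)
First, a point of reference: the paper does not prove this proposition at all --- it is imported verbatim from \cite[Equations (10,11)]{zhu2015information} and used as a black box in the proof of the MUB criterion that follows it. So there is no in-paper argument to compare against, and your proposal has to be judged on its own. Of your two routes, the second (direct) one is the correct and standard one, and you have correctly identified the crux: the monotonicity of $\overline{\mathcal G}$ under post-processing. The first route (Lagrangian duality of the compatibility SDP) should be discarded: Zhu's criterion is only a \emph{necessary} condition, and $\overline{\mathcal G}(E)=\sum_j \dyad{E_j^\circ}/\tr[E_j]$ is \emph{quadratic} in the effects, so it cannot arise from linearly dualizing the joint-POVM feasibility program; chasing that bookkeeping would not converge to the stated inequality.

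What is missing to turn the second route into a proof are the two technical steps you gesture at but do not carry out. (i) \emph{Data processing:} if $E_j=\sum_x p(j|x)M_x$ (Lemma \ref{lem:post-processing} puts compatible POVMs in exactly this form for a common $M$), then $E_j^\circ=\sum_x p(j|x)M_x^\circ$ by linearity of the centering, and the operator Cauchy--Schwarz inequality $\dyad{\sum_x v_x}\le\bigl(\sum_x t_x\bigr)\sum_x \dyad{v_x}/t_x$ (sum the positive semidefinite $2\times 2$ block matrices with entries $t_x$, $\bra{v_x}$, $\ket{v_x}$, $\dyad{v_x}/t_x$ and take a Schur complement) applied with $v_x=p(j|x)\ket{M_x^\circ}$, $t_x=p(j|x)\tr[M_x]$ yields $\overline{\mathcal G}(E^{(i)})\le\overline{\mathcal G}(M)$ after summing over $j$ and using $\sum_j p(j|x)=1$. (ii) \emph{Trace bound:} for any POVM $M$,
\begin{equation*}
\tr[\overline{\mathcal G}(M)]=\sum_x\left(\frac{\tr[M_x^2]}{\tr[M_x]}-\frac{\tr[M_x]}{d}\right)\le\sum_x\left(\norm{M_x}_\infty-\frac{\tr[M_x]}{d}\right)\le\sum_x\tr[M_x]-1=d-1,
\end{equation*}
using $\tr[M_x^2]\le\norm{M_x}_\infty\tr[M_x]$ and $\norm{M_x}_\infty\le\tr[M_x]$ for positive semidefinite $M_x$. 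Taking $H=\overline{\mathcal G}(M)$ then finishes the proof; note that no shift by a multiple of the identity is needed, contrary to your ansatz ``(scalar)$\cdot I$ plus a combination of the $\overline{\mathcal G}(G_\eta)$''. Without (i) and (ii) being spelled out, the proposal is a correct outline rather than a proof.
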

If we are interested in the case of $g$ MUBs, we obtain the following necessary criterion, which appears in \cite{zhu2016universal}. We will provide a proof for convenience.
\begin{prop}
Let $\Set{\psi_j^{(i)}}_{j = 1}^d$, $i \in [g]$ be a collection of MUBs with corresponding POVMs $E^{(i)}$. If $\lambda_i E^{(i)} +(1-\lambda_i)I/d$ are compatible for $\lambda_i \in [0,1]$, then
\begin{equation*}
\sum_{i = 1}^g \lambda_i^2 \leq 1.
\end{equation*}
\end{prop}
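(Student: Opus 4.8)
The plan is to apply Zhu's incompatibility criterion (Proposition \ref{prop:zhu}) to the noisy MUB POVMs $F^{(i)} := \lambda_i E^{(i)} + (1-\lambda_i)I/d$ and to compute the quantity $\overline{\mathcal G}(F^{(i)})$ explicitly, exploiting the mutual unbiasedness of the bases. First I would note that for a rank-one projector $E_j^{(i)} = \psi_j^{(i)}(\psi_j^{(i)})^\ast$ one has $\tr[E_j^{(i)}] = 1$, so the noisy effect $F_j^{(i)}$ has trace $\lambda_i + (1-\lambda_i)/d$, and its traceless part is $(F_j^{(i)})^\circ = \lambda_i (E_j^{(i)})^\circ$, since adding a multiple of $I$ does not change the traceless part. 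Hence
\begin{equation*}
\overline{\mathcal G}(F_j^{(i)}) = \frac{\lambda_i^2 \dyad{(E_j^{(i)})^\circ}}{\lambda_i + (1-\lambda_i)/d}.
\end{equation*}
Summing over $j \in [d]$ gives $\overline{\mathcal G}(F^{(i)}) = \frac{\lambda_i^2}{\lambda_i + (1-\lambda_i)/d} \, \overline{\mathcal G}(E^{(i)})$, so I am reduced to understanding the operator $\overline{\mathcal G}(E^{(i)}) = \sum_{j=1}^d \dyad{(E_j^{(i)})^\circ}$ for a single MUB and then combining the contributions from the $g$ different MUBs.

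The key structural computation is that $\overline{\mathcal G}(E^{(i)})$ is (a multiple of) an orthogonal projection, and that the ranges of these projections for distinct MUBs are mutually orthogonal inside the traceless subspace of $\mathcal M_d$. Concretely, for a single orthonormal basis the vectors $\Ket{(E_j^{(i)})^\circ}$, $j \in [d]$, are linearly dependent (they sum to zero since $\sum_j E_j^{(i)} = I$), span a $(d-1)$-dimensional space, and one checks $\langle (E_j^{(i)})^\circ, (E_l^{(i)})^\circ\rangle = \delta_{jl} - 1/d$, i.e.\ the Gram matrix is $I_d - \frac1d J$, which is the projection onto the orthogonal complement of the all-ones vector. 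From this one deduces that $\overline{\mathcal G}(E^{(i)}) = \sum_j \dyad{(E_j^{(i)})^\circ}$ is precisely the orthogonal projection $P_i$ onto $\mathrm{span}\{\Ket{(E_j^{(i)})^\circ}\}$, a space of dimension $d-1$. For two different MUBs $i \neq i'$, mutual unbiasedness gives $\langle (E_j^{(i)})^\circ, (E_l^{(i')})^\circ \rangle = |\langle \psi_j^{(i)}, \psi_l^{(i')}\rangle|^2 - 1/d = 0$, so $P_i P_{i'} = 0$ and the $P_i$ have pairwise orthogonal ranges. Therefore $\overline{\mathcal G}(F^{(i)}) = c_i P_i$ with $c_i = \frac{\lambda_i^2}{\lambda_i + (1-\lambda_i)/d}$, and the minimal $H$ with $H \geq c_i P_i$ for all $i$ is $H = \sum_{i=1}^g c_i P_i$ (plus possibly something on the leftover subspace, but the minimum trace solution puts nothing there), giving $\tr[H] = \sum_i c_i (d-1)$. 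Plugging into Proposition \ref{prop:zhu}: $1 + (d-1)\sum_i c_i \leq d$, i.e.\ $\sum_i c_i \leq 1$. Finally I would observe that $c_i = \frac{\lambda_i^2}{\lambda_i + (1-\lambda_i)/d} = \frac{d\lambda_i^2}{d\lambda_i + 1 - \lambda_i} = \frac{d\lambda_i^2}{1+(d-1)\lambda_i} \geq \lambda_i^2$, since $1 + (d-1)\lambda_i \leq d$ for $\lambda_i \in [0,1]$; hence $\sum_i \lambda_i^2 \leq \sum_i c_i \leq 1$, which is the claim.

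The main obstacle I anticipate is the precise justification that the minimizer in Proposition \ref{prop:zhu} can be taken to be $H = \sum_i c_i P_i$ — one needs that any $H$ dominating all the $c_i P_i$ has trace at least $(d-1)\sum_i c_i$. This follows because the $P_i$ act on mutually orthogonal subspaces $V_i$ of dimension $d-1$: restricting $H$ to $V_i$, the condition $H \geq c_i P_i$ forces $P_{V_i} H P_{V_i} \geq c_i P_{V_i}$ (where $P_{V_i}$ is the orthogonal projection onto $V_i$), hence $\tr[P_{V_i} H P_{V_i}] \geq c_i(d-1)$; summing over the orthogonal pieces and using positivity of $H$ on the complement gives $\tr[H] \geq \sum_i c_i (d-1)$, and equality is achieved by $H = \sum_i c_i P_i$. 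A secondary point to handle carefully is the bookkeeping of the normalization in $\overline{\mathcal G}$ for the noisy effects, i.e.\ verifying that $\tr[F_j^{(i)}] = \lambda_i + (1-\lambda_i)/d$ enters correctly; this is routine but must be done to get the sharp constant $c_i$ rather than a weaker bound. Everything else is a direct computation with inner products of rank-one projectors, using only the defining relation of MUBs.
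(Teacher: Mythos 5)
Your overall strategy is the same as the paper's: apply Zhu's criterion (Proposition \ref{prop:zhu}) to the noisy POVMs, use mutual unbiasedness to show that the operators $\overline{\mathcal G}(F^{(i)})$ have pairwise orthogonal supports, and bound $\tr[H]$ from below by compressing onto those supports. That part of your argument is correct, and your additional observation that $\overline{\mathcal G}(E^{(i)})$ is exactly a rank-$(d-1)$ orthogonal projection (via the Gram matrix $I_d - \frac{1}{d}J$) is a pleasant refinement that the paper does not actually need.

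There is, however, a concrete computational error. You claim $\tr[F_j^{(i)}] = \lambda_i + (1-\lambda_i)/d$, but the noise term is $(1-\lambda_i)I/d$ with $I$ the $d\times d$ identity, so its trace is $(1-\lambda_i)\cdot d/d = 1-\lambda_i$ and hence $\tr[F_j^{(i)}] = \lambda_i + (1-\lambda_i) = 1$. The correct normalization is therefore $c_i = \lambda_i^2$, not $c_i = \frac{d\lambda_i^2}{1+(d-1)\lambda_i}$. Your value is strictly larger than $\lambda_i^2$ for $\lambda_i \in (0,1)$, and the intermediate statement $\sum_i \frac{d\lambda_i^2}{1+(d-1)\lambda_i} \leq 1$ that your chain of reasoning would establish is actually false: for $d=4$ and two canonically conjugated MUBs with $\lambda_1=\lambda_2 = 2/3$, which are compatible by Proposition \ref{prop:upper-2-mubs}, one gets $2\cdot \frac{4\cdot(4/9)}{1+3\cdot(2/3)} = \frac{32}{27} > 1$. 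Since the argument would then derive a false conclusion from a true hypothesis, the step computing $\overline{\mathcal G}(F^{(i)})$ must be wrong --- and it is, precisely at the trace. Once you replace $\tr[F_j^{(i)}]$ by $1$ you obtain $\overline{\mathcal G}(F^{(i)}) = \lambda_i^2\,\overline{\mathcal G}(E^{(i)})$, the rest of your argument goes through verbatim, and your final inequality $c_i \geq \lambda_i^2$ becomes an equality and can be dropped; the corrected proof is then essentially identical to the one in the paper.
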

\begin{proof}
A straightforward calculation shows that $\tr[(E_j^{(i)})^\circ (E_u^{(v)})^\circ] = 0$ if and only if
\begin{equation*}
\tr[E_j^{(i)} E_u^{(v)}] = \frac{\tr[E_j^{(i)}]\tr[E_u^{(v)}]}{d}.
\end{equation*}
The latter condition is fulfilled by the MUBs for $i \neq v$. Hence, the $\overline{\mathcal G}(E^{(i)})$ are pairwise orthogonal and the same holds for $\overline{\mathcal G}(\tilde{E}^{(i)})$, where $\tilde{E}^{(i)}_j = \lambda_i E^{(i)}_j + (1-\lambda_i) I/d$, $\lambda_i \in [0,1]$. Moreover, both sets of operators are positive semidefinite. Let $P_i$ be the orthogonal projections on the supports of the matrices $\overline{\mathcal G}(\tilde{E}^{(i)})$; it follows that $P:=\sum_{i=1}^g P_i$ is also an orthogonal projection. Consider now a self-adjoint matrix $H$ such that $H \geq \overline{\mathcal G}(\tilde{E}^{(i)})$ for all $i \in [g]$. We have
$$\tr[H] \geq \tr[PHP] = \sum_{i=1}^g \tr[P_i H P_i] \geq \sum_{i=1}^g \tr[P_i \overline{\mathcal G}(\tilde{E}^{(i)}) P_i] = \sum_{i=1}^g \tr[ \overline{\mathcal G}(\tilde{E}^{(i)})].$$
Therefore, by Proposition \ref{prop:zhu} we have
\begin{align*}
d - 1 &\geq \sum_{i = 1}^g \sum_{j = 1}^d \tr[\overline{\mathcal G}(\tilde{E}^{(i)}_j)] \\
&= \sum_{i = 1}^g \sum_{j = 1}^d \lambda_i^2 
 \tr[\overline{\mathcal G}(E^{(i)}_j)] \\
&= d \frac{d-1}{d}\sum_{i = 1}^g \lambda_i^2.
\end{align*}
This proves the claim.
\end{proof}

\subsection{Lower bounds from cloning}

In this section, we will review some results on asymmetric cloning, which will then translate into lower bounds on the inclusion sets for the matrix jewel. See \cite[Section VI]{bluhm2018joint} for a more detailed discussion. Let us define the set of allowed parameters arising from cloning:
\begin{align}
\Gamma^{clone}(g,d):=& \Bigg\{s\in [0,1]^g: \exists \mathcal T: \mathcal M_d^{\otimes g} \to \mathcal M_d \mathrm{~unital~and~completely~positive~linear~map~s.t.\ ~} \label{eq:gamma-clone} \\&  \forall X \in \mathcal M_d, \forall i \in [g], \mathcal T\left(I^{\otimes (i-1)}\otimes X \otimes I^{\otimes (n-i)}\right)=s_iX+(1-s_i)\frac{\tr[X]}{d}I\Bigg\}. \nonumber
\end{align}
A cloning map $\mathcal C$ is a quantum channel from $\mathcal M_d$ to $\mathcal M_d^{\otimes g}$ which maps all pure states $\sigma$ as close as possible to $\sigma^{\otimes g}$. Often, the worst case single copy fidelity $F_i$ is used to quantify the error with respect to a perfect cloning device (which is impossible to implement). Here,
\begin{equation*}
F_i(\mathcal C) := \inf_{\psi \in \mathcal S(\mathcal H)\mathrm{~pure}} \tr[\mathcal C(\psi)I^{\otimes (i-1)}\otimes \psi \otimes I^{\otimes (g-i)}], \qquad \forall i \in [g].
\end{equation*}

The following proposition clarifies the connection between asymmetric cloning and our definition of $\Gamma^{clone}(g,d)$, by showing that, without any loss in single copy fidelities, any cloning map can be assumed to have depolarizing marginals. It uses ideas which can be found in  \cite{Werner1998} (see also \cite{hashagen2016universal}) and will allow us to identify the $\mathcal T$ in Equation \eqref{eq:gamma-clone} with the dual of the cloning map.

\begin{prop} Let $\mathcal C: \mathcal M_d \to \mathcal M_d^{\otimes g}$ be a quantum channel with $F_i(\mathcal C) = \eta_i$ $\forall i \in [g]$. Then, there is a channel $\tilde{\mathcal C}: \mathcal M_d \to \mathcal M_d^{\otimes g}$ such that 
\begin{equation*}
\tr[\tilde{\mathcal C}(\psi)I^{\otimes (i-1)}\otimes \psi \otimes I^{\otimes (n-i)}] = \nu_i \geq \eta_i \qquad \forall \psi \in \mathcal S(\mathbb C^{d}) \mathrm{~pure}, \forall i \in [g].
\end{equation*}
Moreover, $\tilde{\mathcal C}$ can be chosen such that
\begin{equation*}
\tilde{\mathcal C}_i(A) = \tr_{i^{c}}[\tilde{\mathcal C}(A)] = \lambda_i A + (1 - \lambda_i)\frac{\tr[A]}{d}I_d \qquad \forall A \in \mathcal M_d.
\end{equation*}
Here, $\lambda_i = (d \nu_i -1)/(d-1) \in [0,1]$ $\forall i \in [g]$ and $\tr_{i^{c}}[\cdot]$ denotes the partial trace over all systems but the $i$-th one.
\end{prop}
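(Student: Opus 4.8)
The plan is to use a symmetrization (twirling) argument over the unitary group $\mathcal U(d)$. Given the cloning channel $\mathcal C$ with worst-case single-copy fidelities $\eta_i$, I would first average $\mathcal C$ over a simultaneous unitary action: define
\begin{equation*}
\tilde{\mathcal C}(A) := \int_{\mathcal U(d)} U^{\otimes g}\, \mathcal C(U^* A U)\, (U^{\otimes g})^*\, \mathrm d U,
\end{equation*}
where $\mathrm d U$ is the Haar measure. Since $\mathcal C$ is a quantum channel (unital is not needed, but trace preservation and complete positivity are) and conjugation by unitaries preserves these properties, $\tilde{\mathcal C}$ is again a quantum channel. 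The key structural consequence of the averaging is the covariance relation $\tilde{\mathcal C}(U A U^*) = U^{\otimes g}\,\tilde{\mathcal C}(A)\,(U^{\otimes g})^*$ for all $U \in \mathcal U(d)$.

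Next I would compute the marginals $\tilde{\mathcal C}_i = \tr_{i^c} \circ \tilde{\mathcal C}$. From the covariance relation one gets $\tilde{\mathcal C}_i(U A U^*) = U\,\tilde{\mathcal C}_i(A)\,U^*$ for all unitaries $U$, i.e.\ $\tilde{\mathcal C}_i$ is a covariant channel $\mathcal M_d \to \mathcal M_d$. By Schur's lemma (or the standard characterization of unitarily covariant maps, as in \cite{Werner1998}), any such map is of the form $A \mapsto \lambda_i A + (1-\lambda_i)\frac{\tr[A]}{d} I_d$ for some real parameter $\lambda_i$; complete positivity forces $\lambda_i \in [-1/(d-1),1]$, and I will get $\lambda_i \geq 0$ from the fidelity bound below. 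This establishes the depolarizing form of the marginals. Then I would relate $\lambda_i$ to the new fidelity: by linearity and the depolarizing form, for any pure $\psi$,
\begin{equation*}
\tr[\tilde{\mathcal C}(\psi)\, I^{\otimes(i-1)}\otimes \psi \otimes I^{\otimes(g-i)}] = \tr[\tilde{\mathcal C}_i(\psi)\,\psi] = \lambda_i + (1-\lambda_i)\frac{1}{d} =: \nu_i,
\end{equation*}
which is independent of $\psi$, so this is simultaneously the worst-case and best-case single-copy fidelity of $\tilde{\mathcal C}$ on the $i$-th clone. Inverting gives $\lambda_i = (d\nu_i - 1)/(d-1)$, matching the claim.

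Finally I would check $\nu_i \geq \eta_i$. Since $\eta_i = \inf_\psi \tr[\mathcal C(\psi)\,I^{\otimes(i-1)}\otimes\psi\otimes I^{\otimes(g-i)}]$, for every unitary $U$ and every pure $\psi$ we have $\tr[\mathcal C(U^*\psi U)\, I^{\otimes(i-1)}\otimes U^*\psi U \otimes I^{\otimes(g-i)}] \geq \eta_i$; using the cyclicity of the trace and $U^{\otimes g}$-conjugation to move the unitaries onto $\mathcal C$'s output, averaging over $U$ yields exactly $\nu_i \geq \eta_i$. (In particular $\nu_i \geq \eta_i \geq 1/d$, so $\lambda_i \geq 0$.) The main technical point to be careful about is the invocation of Schur's lemma to pin down the covariant marginal channels — one must note that the commutant of $\{U \oplus U^* : U\}$ or equivalently the relevant representation-theoretic decomposition forces precisely the two-parameter family spanned by the identity channel and the completely depolarizing channel; this is the step where all the work really sits, though it is standard. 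Everything else is bookkeeping with the Haar average and the trace.
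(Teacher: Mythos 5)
Your proposal is correct and follows essentially the same route as the paper: twirl $\mathcal C$ with $U^{\otimes g}(\cdot)U^{*\otimes g}$ over the Haar measure, derive covariance of the marginals, identify them as depolarizing channels, and bound $\nu_i \geq \eta_i$ by averaging the worst-case fidelity. The only presentational difference is that you invoke Schur's lemma for covariant channels directly where the paper passes through the Choi matrix and the classification of isotropic states \cite[Section 3.1.3]{Keyl2002} --- these are the same fact (and note the correct complete-positivity range is $\lambda_i \in [-1/(d^2-1),1]$, not $[-1/(d-1),1]$, though this is immaterial since you obtain $\lambda_i \geq 0$ from the fidelity anyway).
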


\begin{proof}
We claim that we can choose $\tilde{C}$ as a symmetrized version of $\mathcal C$, i.e.\
\begin{equation*}
\tilde{\mathcal C}(A) = \int_{\mathcal U(d)} (U^{\otimes g}) \mathcal C(U^\ast A U) (U^{\otimes g})^\ast d\mu(U) \qquad A \in \mathcal M_d.
\end{equation*}
Here, $\mu$ is the normalized Haar measure on the unitary group. The marginals of this map are
\begin{align*}
\tilde{\mathcal C}_i(A) &= \int_{\mathcal U(d)}\tr_{i^c}[(U^{\otimes g}) \mathcal C(U^\ast A U) (U^{\otimes g})^\ast] d\mu(U) \\
&= \int_{\mathcal U(d)} U \mathcal C_i(U^\ast A U) U^\ast d\mu(U) \qquad \forall A \in \mathcal M_d,
\end{align*}
where we have written $\mathcal C_i(A) := \tr_{i^c}[\mathcal C(A)]$.
We observe furthermore that for any $V \in \mathcal U(d)$ and $A \in \mathcal M_d$,
\begin{align*}
V \tilde{\mathcal C}_i(A) V^\ast &= V \int_{\mathcal U(d)} U \mathcal C_i(U^\ast A U) U^\ast  d\mu(U)V^\ast \\
&=   \int_{\mathcal U(d)} W \mathcal C_i(W^\ast V A V^\ast W) W^\ast  d\mu(W)\\
&= \tilde{\mathcal C}_i(V A V^\ast),
\end{align*} 
where we have used left-invariance of the Haar measure in the second line. Thus,
\begin{equation}\label{eq:covariance}
V \tilde{\mathcal C}_i(\cdot) V^\ast = \tilde{\mathcal C}_i(V \cdot V^\ast).
\end{equation}
Let us compute the single copy fidelities.
\begin{align*}
\tr[\tilde{\mathcal C}(\psi)I^{\otimes (i-1)}\otimes \psi \otimes I^{\otimes (n-i)}] &= \int_{\mathcal U(d)} \tr[\mathcal C(U^\ast\psi  U)I^{\otimes (i-1)}\otimes U^\ast\psi U \otimes I^{\otimes (n-i)}] d\mu(U)\\
&\geq \int_{\mathcal U(d)} \eta_i d\mu(U) = \eta_i.
\end{align*}
Here, we have used that $U^\ast \psi U$ is a pure state and that the Haar measure is positive. This shows the first assertion. 
Let us now prove the third assertion. Let $\tau_{0i}$ be the Choi matrix of $\tilde{\mathcal C}_i$, i.e.\ $\tau_{0i} = (\mathrm{Id}_d \otimes \tilde{\mathcal C}_i)(\Omega)$, where $\Omega$ is  the maximally entangled state
\begin{equation*}
\Omega := \frac{1}{d} \sum_{i,j = 1}^d (e_i \otimes e_i)(e_j \otimes e_j)^\ast
\end{equation*}
and $\Set{e_i}_{i = 1}^d$ is an orthonormal basis of $\mathbb C^d$. Let $V \in \mathcal U(d)$. Then,
\begin{align*}
(\overline V \otimes V) \tau_{0i} (\overline V \otimes V)^\ast &= (\mathrm{Id}_d \otimes \tilde{\mathcal C}_i)((\overline V \otimes V)\Omega(\overline V \otimes V)^\ast)\\
&=(\mathrm{Id}_d \otimes \tilde{\mathcal C}_i)(\Omega) =\tau_{0i}
\end{align*}
where we have used Equation \eqref{eq:covariance} and the well-known trick $(A \otimes I_d)\Omega = (I_d \otimes A^T)\Omega$ for any $A \in \mathcal M_d$. The above invariance implies that $\tau_{0i}$ is an isotropic state and is therefore of the form \cite[Section 3.1.3]{Keyl2002}
\begin{equation*}
\tau_{0i}=(1 -\lambda_i)\frac{1}{d^2}I_{d^2} + \lambda_i \Omega, \qquad \lambda_i \in \left[-\frac{1}{d^2-1},1\right].
\end{equation*}
By the Choi-Jamio{\l}kowski isomorphism, this is equivalent to 
\begin{equation*}
\tilde{\mathcal C}_i(A) = \lambda_i A + (1-\lambda_i)\frac{\tr[A]}{d} I_d \qquad \forall A \in \mathcal M_d.
\end{equation*}
With this expression, we can explicitly compute the single copy fidelities
\begin{align*}
\tr[\tilde{\mathcal C}(\psi)I^{\otimes (i-1)}\otimes \psi \otimes I^{\otimes (n-i)}] &= \tr[\tilde{\mathcal C}_i(\psi  )\psi ]
= \lambda_i + \frac{1 -\lambda_i}{d}.
\end{align*}
This proves the second assertion as well as the expression for $\lambda_i$ in terms of $\nu_i$.
\end{proof}

Therefore, we can now use $\mathcal T = \tilde{\mathcal C}^\ast$ in Equation \eqref{eq:gamma-clone}, which shows that $\Gamma^{clone}(g,d)$ indeed arises from optimal asymmetric cloning. The exact form of $\Gamma^{clone}(g,d)$ has been computed in \cite{kay2016optimal,studzinski2014group}, using different methods. To obtain the theorem below from \cite{kay2016optimal}, one needs to perform the necessary transform from $\nu_i$ to $\lambda_i$.
\begin{thm}[{\cite[Theorem 1, Section 2.3]{kay2016optimal}}]
For any $g$,$d \geq 2$
\begin{align*}
\Gamma^{clone}(g,d) =& \left\{s \in [0,1]^g:(g+d-1)\left[g-d^2+d+(d^2-1)\sum_{i=1}^g s_i\right]\right. \\ &\left.\leq \left(\sum_{i=1}^g \sqrt{s_i(d^2-1)+1}\right)^2\right\}.
\end{align*}
In particular, for $s_1 = \ldots =s_g$, the maximal value is
\begin{equation*}
s_{max} = \frac{g+d}{g(d+1)}.
\end{equation*}
\end{thm}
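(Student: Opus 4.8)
The plan is to reduce the problem to the already-cited computation of optimal asymmetric cloning regions in \cite{kay2016optimal,studzinski2014group}, and to carry out the change of variables that translates single-copy fidelities $\nu_i$ into the depolarizing parameters $\lambda_i=s_i$ appearing in \eqref{eq:gamma-clone}. By the preceding proposition, we know that $\Gamma^{clone}(g,d)$ is exactly the set of tuples $s$ for which there exists a covariant cloning channel $\tilde{\mathcal C}:\mathcal M_d\to\mathcal M_d^{\otimes g}$ whose $i$-th marginal is the depolarizing channel with parameter $s_i$; dually, $\mathcal T=\tilde{\mathcal C}^\ast$ is the UCP map required in \eqref{eq:gamma-clone}. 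The relation between the worst-case single-copy fidelity $\nu_i$ and the depolarizing parameter is $\lambda_i=(d\nu_i-1)/(d-1)$, equivalently $\nu_i=\lambda_i+(1-\lambda_i)/d$, so $s_i=\lambda_i$ corresponds to $\nu_i=(s_i(d-1)+1)/d$. First I would state precisely which region of fidelity-tuples $(\nu_1,\ldots,\nu_g)$ is shown achievable in \cite[Theorem 1]{kay2016optimal}: there the optimal asymmetric universal cloner is parametrized and the Pareto-optimal frontier is described by an explicit inequality in the $\nu_i$. The core of the proof is then purely algebraic: substitute $\nu_i=(s_i(d-1)+1)/d$ into that inequality and simplify.

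The key steps, in order: (1) Recall from the proposition above that membership $s\in\Gamma^{clone}(g,d)$ is equivalent to the existence of a channel with depolarizing marginals of parameters $s_i$, hence equivalent to a statement about achievable single-copy fidelities $\nu_i=\lambda_i+(1-\lambda_i)/d$ of a cloning channel. (2) Invoke \cite[Theorem 1, Section 2.3]{kay2016optimal} (and/or \cite{studzinski2014group}), which gives the exact achievable region in the fidelity variables; write $R_i:=\sqrt{s_i(d^2-1)+1}$, noting that $d^2\nu_i-(d-1)=s_i(d^2-1)+1=R_i^2$ arises naturally because $\nu_i$ is an affine function of $s_i$ and the cloner optimization is quadratic. (3) Perform the substitution, collecting terms so that the left-hand side becomes $(g+d-1)[\,g-d^2+d+(d^2-1)\sum_i s_i\,]$ and the right-hand side becomes $(\sum_i R_i)^2$; this is the stated inequality. (4) For the symmetric specialization $s_1=\cdots=s_g=:s$, set $R_i=R:=\sqrt{s(d^2-1)+1}$, so the inequality reads $(g+d-1)[g-d^2+d+g(d^2-1)s]\le g^2 R^2=g^2(s(d^2-1)+1)$; isolate $s$ and simplify, using $d^2-1=(d-1)(d+1)$ and $g+d-1$ cancellations, to obtain the threshold $s_{max}=(g+d)/(g(d+1))$.

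The main obstacle is bookkeeping rather than conceptual: matching conventions between \cite{kay2016optimal} (which works with fidelities, and possibly with a different normalization of the depolarizing channel or a different indexing of the clones) and the present paper's normalization in \eqref{eq:gamma-clone}. One must check carefully that the worst-case single-copy fidelity used in the proposition above coincides with the figure of merit optimized in \cite{kay2016optimal}, and that the ``optimal'' frontier there indeed describes the \emph{entire} achievable region of fidelity-tuples (not merely the Pareto frontier) — the latter follows because the set of achievable $s$ is closed under decreasing any $s_i$ (add more depolarizing noise to the $i$-th marginal, which is a valid post-composition with a channel), so it suffices to know the upper frontier. The algebraic simplification in steps (3)–(4) is routine but error-prone; I would verify the symmetric case independently as a sanity check, since $s_{max}=(g+d)/(g(d+1))$ also follows directly from the $s_1=\cdots=s_g$ reduction and matches the known universal asymmetric cloning fidelity.
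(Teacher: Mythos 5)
Your proposal follows exactly the route the paper takes: the paper gives no independent proof of this theorem but imports it from \cite{kay2016optimal,studzinski2014group}, remarking only that one must ``perform the necessary transform from $\nu_i$ to $\lambda_i$'' --- precisely the substitution $\nu_i=(s_i(d-1)+1)/d$ you carry out, justified by the preceding proposition on symmetrized cloners with depolarizing marginals. Your additional observations (the downward-closedness of the achievable region under extra depolarizing noise, and the symmetric-case check yielding $s_{max}=(g+d)/(g(d+1))$, whose algebra is correct) are sound and only make explicit what the paper leaves implicit.
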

In the symmetric case, the optimal cloning map is unique \cite{Werner1998, Keyl2002}.
The following proposition shows that cloning gives indeed a lower bound on the balanced compatibility region.

\begin{prop} \label{prop:cloning-lower}
Let $g$, $d \in \mathbb N$ and $\mathbf k \in \mathbb N^g$ and $k_{max} = \max_{i \in [g]}k_i$. Then, it holds that 
\begin{equation*}
\Gamma^{clone}(g,k_{max} d) \subseteq \Gamma(g,d,\mathbf{k}).
\end{equation*}
\end{prop}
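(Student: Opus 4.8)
The plan is to connect the cloning parameters directly to a joint POVM via the dual channel, exploiting the fact that a cloning map already handles dimension $d$ and that we need only embed our $k_i$-outcome POVMs into a space of dimension $k_{max}d$. First I would fix $s \in \Gamma^{clone}(g, k_{max}d)$, so by Equation \eqref{eq:gamma-clone} there is a unital completely positive map $\mathcal T : \mathcal M_{k_{max}d}^{\otimes g} \to \mathcal M_{k_{max}d}$ with $\mathcal T(I^{\otimes(i-1)} \otimes X \otimes I^{\otimes(g-i)}) = s_i X + (1-s_i) \tr[X]/(k_{max}d) \cdot I$ for all $X \in \mathcal M_{k_{max}d}$. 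Given $g$ POVMs $E^{(i)} \in (\mathcal M_d^{sa})^{k_i}$, I would pad each one with zero effects to obtain a POVM $\hat E^{(i)}$ with $k_{max}$ outcomes in dimension $d$, then embed it into dimension $k_{max}d$ — for instance by tensoring with a fixed rank-one projector, or by viewing $\mathcal M_d \subset \mathcal M_{k_{max}d}$ via a block embedding and completing to a POVM. The point is to produce, for each $i$, a $k_{max}$-outcome POVM $\{P^{(i)}_j\}$ in dimension $k_{max}d$ whose first $k_i$ effects sit inside the original $d$-dimensional block and agree with $E^{(i)}_j$ there.

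Next I would build the joint POVM. Consider the operators $G_{j_1,\ldots,j_g} := \mathcal T(P^{(1)}_{j_1} \otimes P^{(2)}_{j_2} \otimes \cdots \otimes P^{(g)}_{j_g})$ for $j_i \in [k_{max}]$. Positivity of each $G_{\mathbf j}$ follows from complete positivity of $\mathcal T$ and positivity of the tensor product of effects; summing over all indices and using $\sum_{j_i} P^{(i)}_{j_i} = I$ together with unitality of $\mathcal T$ gives $\sum_{\mathbf j} G_{\mathbf j} = \mathcal T(I^{\otimes g}) = I$. Fixing one index $i$ and summing over the others, the tensor factor in slot $i$ stays $P^{(i)}_{j_i}$ while all other factors become identities, so the marginal is $\mathcal T(I^{\otimes(i-1)} \otimes P^{(i)}_{j_i} \otimes I^{\otimes(g-i)}) = s_i P^{(i)}_{j_i} + (1-s_i)\tr[P^{(i)}_{j_i}]/(k_{max}d) \cdot I$. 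Here one must choose the embedding so that $\tr[P^{(i)}_{j_i}]/(k_{max}d)$ equals a noise weight compatible with the $1/k_i$ normalization needed in $\Gamma(g,d,\mathbf k)$ — this is where the padding and embedding have to be set up carefully (for example so that each embedded effect has the same trace $d/k_{max}$ times something, or by instead landing first in $\Gamma(g,d,k_{max}^{\times g})$ and invoking Proposition \ref{prop:different-k-gamma-inclusion} to pass to $\mathbf k$).

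Finally, after conjugating the joint POVM by the isometry $V: \mathbb C^d \hookrightarrow \mathbb C^{k_{max}d}$ picking out the original block (which commutes with taking marginals and preserves the POVM property, and which sends $P^{(i)}_j \mapsto E^{(i)}_j$ for $j \in [k_i]$ and to $0$ otherwise), one obtains a joint POVM in dimension $d$ for the noisy measurements $s_i E^{(i)}_j + (1-s_i) I_d / k_i$, possibly after discarding the zero-padded outcomes using Lemma \ref{lem:bunch-together}. This shows $s \in \Gamma(g,d,\mathbf k)$, completing the inclusion. I expect the main obstacle to be the bookkeeping in the middle step: arranging the embedding into dimension $k_{max}d$ and the zero-padding so that the single-copy depolarizing marginals of $\mathcal T$ produce exactly balanced noise with weight $(1-s_i)/k_i$ after restriction, rather than $(1-s_i)/(k_{max}d)$ or some other constant. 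The cleanest route is probably to prove $\Gamma^{clone}(g, k_{max}d) \subseteq \Gamma(g,d,k_{max}^{\times g})$ first, mirroring the structure of Proposition \ref{prop:lin-gamma-inclusion}, using the block-diagonal direct-sum trick $F^{(i)}_j = E^{(i)}_j \oplus E^{(i)}_{j+1} \oplus \cdots$ so that every effect has trace $d/k_{max} \cdot$ (a constant) independent of $j$, and then apply Proposition \ref{prop:different-k-gamma-inclusion} to descend to $\mathbf k$.
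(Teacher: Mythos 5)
Your proposal is correct and, in its final form, is essentially the paper's argument: the paper simply observes that $G_{\mathbf j}=\mathcal T(E^{(1)}_{j_1}\otimes\cdots\otimes E^{(g)}_{j_g})$ gives $\Gamma^{clone}(g,D)\subseteq\Gamma^{lin}(g,D,\mathbf k)$ and then invokes Proposition \ref{prop:lin-gamma-inclusion}, whose proof is exactly the block-diagonal direct-sum trick plus Proposition \ref{prop:different-k-gamma-inclusion} that you arrive at in your last paragraph. The only difference is that you inline the content of Proposition \ref{prop:lin-gamma-inclusion} rather than citing it, and your correctly-identified worry about the trace normalization is precisely what that proposition resolves.
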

\begin{proof}
Using 
\begin{equation*}
G_{j_1, \ldots, j_g}=\mathcal T\left(E^{(1)}_{j_1} \otimes \ldots \otimes E^{(g)}_{j_g}\right)
\end{equation*}
as a joint POVM, where $\mathcal T$ is the map from Equation \eqref{eq:gamma-clone}, it is clear that 
\begin{equation} \label{eq:clone-lin-inclusion}
\Gamma^{clone}(g,D) \subseteq \Gamma^{lin}(g,D,\mathbf{k}), \qquad \forall \mathbf k \in \mathbb N^{g}, \forall D \in \mathbb N.
\end{equation}
The assertion follows then by Proposition \ref{prop:lin-gamma-inclusion}.
\end{proof}
\begin{remark}
Note that the left hand side of Equation \eqref{eq:clone-lin-inclusion} is independent of $\mathbf k$, since the cloning map is designed to clone states, not measurements, such that we can perform any kind of measurement on the approximate clones.
\end{remark}

\section{Lower bounds from symmetrization} \label{sec:symmetrization}

In this section, we give lower bounds on $\Delta(g,d,\mathbf{k})$ by considering its inclusion inside more symmetric spectrahedra. We start with a single point.

\begin{thm}\label{thm:symmetric-jewel}
Let $g$, $d \in \mathbb N$, $k_j \in \mathbb N$, $\mathbf{k} = (k_1, \ldots, k_g)$. Then, 
\begin{equation*}
\frac{1}{2d} \left( \frac{1}{k_1-1}, \ldots, \frac{1}{k_g-1}\right) \in \Delta(g,d,\mathbf k).
\end{equation*}
\end{thm}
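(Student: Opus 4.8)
The plan is to use the identification $\Gamma(g,d,\mathbf k) = \Delta(g,d,\mathbf k)$ from Theorem \ref{thm:delta-is-gamma}, so it suffices to show that the vector $s = \frac{1}{2d}\left(\frac{1}{k_1-1}, \ldots, \frac{1}{k_g-1}\right)$ lies in the balanced compatibility region $\Gamma(g,d,\mathbf k)$; equivalently, for any $g$-tuple of $d$-dimensional POVMs $E^{(i)}$ with $k_i$ outcomes, the noisy POVMs $\tilde E^{(i)}_j = s_i E^{(i)}_j + (1-s_i)\frac{1}{k_i}I$ are jointly measurable. The natural construction of a joint POVM is the ``trivial'' one: set
\begin{equation*}
G_{j_1,\ldots,j_g} := \frac{1}{\prod_{i=1}^g k_i}\left( I + \sum_{i=1}^g \left( \text{correction term for outcome } j_i \text{ of POVM } i \right)\right),
\end{equation*}
i.e.\ one adds up the ``first-order'' deviations $E^{(i)}_{j_i} - \frac{1}{k_i}I$ of each POVM, suitably rescaled by $s_i$, and checks that the result is positive. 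More precisely I would try
\begin{equation*}
G_{j_1,\ldots,j_g} = \frac{1}{\prod_{i=1}^g k_i}\left( I + \sum_{i=1}^g k_i\, s_i\Bigl(E^{(i)}_{j_i} - \frac{1}{k_i}I\Bigr)\right),
\end{equation*}
and verify (a) that the marginals are the $\tilde E^{(i)}$ — this is immediate since summing over all $j_l$, $l\neq i$, kills the $l\neq i$ correction terms (each sums to zero) and leaves $\tilde E^{(i)}_{j_i}$ — and (b) positivity.

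**The positivity estimate.** The crux is (b). Write $A^{(i)} := E^{(i)}_{j_i} - \frac{1}{k_i}I$, so that $-\frac{1}{k_i}I \le A^{(i)} \le \bigl(1-\frac{1}{k_i}\bigr)I$, hence $\|A^{(i)}\|_\infty \le 1 - \frac{1}{k_i} \le 1$. Then we need
\begin{equation*}
I + \sum_{i=1}^g k_i s_i A^{(i)} \ge 0,
\end{equation*}
and it suffices that $\sum_{i=1}^g k_i s_i \|A^{(i)}\|_\infty \le 1$. With $\|A^{(i)}\|_\infty \le 1 - \frac{1}{k_i} = \frac{k_i-1}{k_i}$ and $s_i = \frac{1}{2d(k_i-1)}$ we get $k_i s_i \|A^{(i)}\|_\infty \le k_i \cdot \frac{1}{2d(k_i-1)}\cdot \frac{k_i-1}{k_i} = \frac{1}{2d}$, so the sum is at most $\frac{g}{2d}$ — which is \emph{not} bounded by $1$ for large $g$. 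So the naive operator-norm bound is too lossy, and this is the main obstacle. The fix is to exploit that one cannot simultaneously push \emph{all} the $A^{(i)}$ against a common worst-case vector: a cleaner route is to bound $\bigl\|\sum_i k_i s_i A^{(i)}\bigr\|_\infty$ not by the sum of norms but by using that on any unit vector $\psi$, $\langle \psi, A^{(i)}\psi\rangle = \langle \psi, E^{(i)}_{j_i}\psi\rangle - \frac{1}{k_i} \in [-\frac1{k_i}, 1-\frac1{k_i}]$, and then controlling $\sum_i$ of such terms. Alternatively — and this is probably the intended argument — one uses the \emph{symmetrization} viewpoint announced in the section title: compare $\mathcal D_{\jewel,\mathbf k}$ with a rescaled matrix cube (or matrix diamond), using that the single point $\frac{1}{d}\mathbf 1$ (or $\frac{1}{2d}\mathbf 1$) lies in the inclusion set of the matrix cube, a fact imported from the literature on inclusion constants (Section 1.5 of \cite{helton2019dilations}, \cite{passer2018minimal}), and then transfer this through the explicit polytope inclusion $\mathcal D_{\jewel,\mathbf k}(1) \hookrightarrow$ (rescaled cube)$(1)$ coming from Lemma \ref{lem:extremal-points-jewel-base}.

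**Which route to carry out.** I would first check whether the clean combinatorial joint-POVM above actually works with the correct constant: the point is that after conjugating by a unit vector $\psi$ and setting $p_i := \langle\psi, E^{(i)}_{j_i}\psi\rangle \in [0,1]$, positivity of $\langle\psi, G_{j_1,\ldots,j_g}\psi\rangle$ for \emph{every} choice of $(j_1,\ldots,j_g)$ reduces to $1 + \sum_i k_i s_i (p_i^{(j_i)} - \tfrac1{k_i}) \ge 0$ where $p_i^{(j_i)} = \langle\psi, E^{(i)}_{j_i}\psi\rangle$ and $\sum_{j_i} p_i^{(j_i)} = 1$; the worst case picks, for each $i$, the outcome minimizing $p_i^{(j_i)}$, which is at most $\frac1{k_i}$ but could be $0$. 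So the worst case is $1 - \sum_i k_i s_i \cdot \frac1{k_i} = 1 - \sum_i s_i$, and one needs $\sum_i s_i \le 1$ — but with $s_i = \frac{1}{2d(k_i-1)}$ this fails for large $g$ or small $k_i$. Hence the naive trivial POVM is genuinely insufficient and one must instead build the joint POVM from a cloning-type or symmetrization-type construction; I expect the actual proof to invoke the matrix-cube inclusion constant $\frac{1}{d}$ together with the factor-$2$ loss inherent in passing between the jewel's $v^{(k)}_j$ normalization and a cube normalization (hence the $\frac{1}{2d}$), transferred via Lemma \ref{lem:sumfallsapart} and the explicit extreme points of $\mathcal D_{\jewel,k}(1)$. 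The hard part is producing the correct symmetrizing channel (or verifying the polytope inclusion with the right scaling) so that the constant comes out as exactly $\frac{1}{2d(k_i-1)}$ in each block rather than something weaker.
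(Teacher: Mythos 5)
Your proposal correctly diagnoses why the ``trivial'' joint POVM $G_{\mathbf j} = \frac{1}{\prod_i k_i}\bigl(I + \sum_i k_i s_i (E^{(i)}_{j_i} - \frac{1}{k_i}I)\bigr)$ cannot work --- the worst case indeed forces $\sum_i s_i \leq 1$, which fails for large $g$ --- and you correctly guess that the intended route is a symmetrization argument rather than an explicit joint POVM. But at that point the proposal stops: you write that ``the hard part is producing the correct symmetrizing channel (or verifying the polytope inclusion with the right scaling)'' and leave exactly that part undone. So there is a genuine gap: no proof is actually given, only a (partially correct) description of where one should look.

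Concretely, two ingredients are missing. First, the source of the factor $\frac{1}{2d}$ is not the matrix-cube inclusion constant $\frac 1d$ ``with a factor-$2$ loss,'' as you speculate; it is the general dilation result for free spectrahedra whose scalar level is symmetric about the origin (\cite[Proposition VII.2]{bluhm2018joint}, the complex version of \cite[Theorem 1.4]{helton2019dilations}): if $\mathcal C = -\mathcal C$ and $\mathcal W_{max}(\mathcal C)(1) \subseteq \mathcal D_B(1)$ for $B$ a tuple of $d\times d$ matrices, then $\frac{1}{2d}\mathcal W_{max}(\mathcal C) \subseteq \mathcal D_B$. The paper applies this not to a cube or diamond but to the symmetrization of the jewel itself, $\mathcal D_{S\jewel,\mathbf k} := \mathcal W_{max}\bigl(\mathrm{conv}(-\mathcal D_{\jewel,\mathbf k}(1)\cup \mathcal D_{\jewel,\mathbf k}(1))\bigr)$, which contains $\mathcal D_{\jewel,\mathbf k}$ by maximality. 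Second, one must then shrink the symmetrized set back inside the jewel: the symmetrization factors through the direct sum blockwise, and by Lemma \ref{lem:sumfallsapart} it suffices to find the largest $\lambda_i$ with $\lambda_i\,\mathrm{conv}\{-\mathcal D_{\jewel,k_i}(1)\cup\mathcal D_{\jewel,k_i}(1)\} \subseteq \mathcal D_{\jewel,k_i}(1)$. Using the extreme points from Lemma \ref{lem:extremal-points-jewel-base}, one checks via explicit convex combinations (e.g.\ $-\frac{1}{k_i-1}\frac{k_i}{2}(1,\ldots,1) = \frac{1}{k_i-1}\sum_j(-\frac{k_i}{2}e_j)$) that $\lambda_i = \frac{1}{k_i-1}$ works; this, not a normalization mismatch with the cube, is where the $\frac{1}{k_i-1}$ comes from. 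Chaining $\mathcal D_{\jewel,\mathbf k}(1)\subseteq\mathcal D_B(1)$, the $\frac{1}{2d}$ theorem for $\mathcal D_{S\jewel,\mathbf k}$, and the blockwise shrinking then yields the claimed point of $\Delta(g,d,\mathbf k)$ directly, with no need to pass through $\Gamma$ at all. Note also that your fallback suggestion of comparing with the matrix diamond is a genuinely different (and here generally weaker) bound --- it is the content of the separate Theorem \ref{thm:jewel-vs-diamond}, which produces $\frac{1}{(k_i-1)^2}$-type factors rather than $\frac{1}{2d(k_i-1)}$.
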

\begin{proof}
We consider a symmetrization of the matrix jewel, which we denote as
\begin{equation*}
\mathcal D_{S\jewel, \mathbf{k}} := \mathcal W_{max}(\mathrm{conv}\{-\mathcal D_{\jewel, \mathbf{k}}(1)\cup \mathcal D_{\jewel, \mathbf{k}}(1)\}).
\end{equation*}
Since the matrix jewel is a polytope on the first level, $\mathcal D_{S\jewel, \mathbf{k}}$ is indeed a free spectrahedron. It holds that 
\begin{equation*}
\mathcal D_{\jewel, \mathbf{k}} \subseteq \mathcal D_{S\jewel, \mathbf{k}},
\end{equation*}
since the inclusion holds at level $1$ and $\mathcal D_{S\jewel, \mathbf{k}}$ is a maximal spectrahedron (see also \cite[Remark 4.2]{davidson2016dilations}). Let $\lambda \in [0,1]^g$ be such that
\begin{equation*}
\lambda \cdot \mathcal D_{S\jewel, \mathbf{k}}(1) \subseteq \mathcal D_{\jewel, \mathbf{k}}(1).
\end{equation*}
Then, for any $B \in (\mathcal M_d^{sa})^{\sum_{j=1}^g(k_j-1)}$, the implication 
\begin{equation*}
\mathcal D_{\jewel, \mathbf{k}}(1) \subseteq \mathcal D_B(1) \implies \frac{1}{2d} \lambda \cdot \mathcal D_{S\jewel, \mathbf{k}} \subseteq \mathcal D_B
\end{equation*}
holds by \cite[Proposition VII.2]{bluhm2018joint}, which generalizes \cite[Theorem 1.4]{helton2019dilations} to the complex setting.  We can apply this result to the asymmetrically scaled spectrahedron, since $\lambda \cdot \mathcal D_{A}(n) \subseteq \mathcal D_B(n)$ if and only if $\mathcal D_{A}(n) \subseteq \mathcal D_{\lambda \cdot B}(n)$ for any free spectrahedra $\mathcal D_{A}$, $\mathcal D_{B}$ and any $n \in \mathbb N$. Therefore, $\lambda/(2d) 
\in \Delta(g,d,\mathbf k)$. We only need to find the largest valid $\lambda$. As can be seen from comparing the extreme points, the symmetrization carries through the direct sum construction of the matrix jewel,
\begin{equation*}
\mathcal D_{S\jewel, \mathbf{k}}(1) = \bigoplus_{i = 1}^g \mathrm{conv}\{-\mathcal D_{\jewel, k_i}(1) \cup \mathcal D_{\jewel, k_i}(1)\}.
\end{equation*}
 We note that $X \in \mathcal D_A$ if and only if $X \in \mathcal D_{A \otimes I}$, which are the elements appearing as summands in the direct sum of free spectrahedra. By Lemma \ref{lem:sumfallsapart},
the conditions on $\lambda$ reduce to
\begin{equation} \label{eq:largest_lambda}
\lambda_i \mathrm{conv}\{-\mathcal D_{\jewel, k_i}(1) \cup \mathcal D_{\jewel, k_i}(1)\} \subseteq  \mathcal D_{\jewel, k_i}(1)
\end{equation}
for each $i \in [k]$. We recall that $\mathcal D_{\jewel, k_i}(1)$ has extreme points $-(k_i/2) e_j$, $j \in [k_i -1]$ and $(k_i/2)(1, \ldots, 1)$ by Lemma \ref{lem:extremal-points-jewel-base}. We can write $-(k_i/2)(1, \ldots, 1)$ and $(k_i/2) e_j$ as a convex combination of extreme points of $(k_i -1) \mathcal D_{\jewel, k_i}(1)$, as
\begin{equation*}
-\frac{1}{k_i-1}\frac{k_i}{2}(1, \ldots, 1) = \frac{1}{k_i-1}\sum_{j = 1}^{k_i-1} \left(-\frac{k_i}{2} e_j\right)
\end{equation*}
and
\begin{equation} \label{eq:explicit_convex_combination}
\frac{1}{k_i-1}\frac{k_i}{2}e_j = \frac{1}{k_i-1}\frac{k_i}{2}(1, \ldots, 1)+\frac{1}{k_i-1}\sum_{\substack{l = 1\\l \neq j}}^{k_i-1} \left(-\frac{k_i}{2} e_l\right).
\end{equation}
Therefore, $-(k_i/2)(1, \ldots, 1)$ and $(k_i/2) e_j \in (k_i -1) \mathcal D_{\jewel, k_i}(1)$ for all $j \in [k_i -1]$. Thus, $\lambda_i = 1/(k_i -1)$ is a valid choice in Equation \eqref{eq:largest_lambda}.
\end{proof}
Furthermore, we can approximate the matrix jewel by sets for which we know the inclusion constants. A convenient choice for such a set is the matrix diamond. A similar idea has been used in \cite[Section 2]{Passer2018a}. To state the result, we write
\begin{equation*}
\mathrm{QC}_g := \Set{s \in [0,1]^g: \sum_{i = 1}^g s_i^2 \leq 1}
\end{equation*}
for the positive part of the Euclidean unit ball.
\begin{thm}\label{thm:jewel-vs-diamond}
Let $g$, $d \in \mathbb N$, $k_j \in \mathbb N$, $\mathbf{k} = (k_1, \ldots, k_g)$. Then, 
\begin{equation*}
\left(\frac{1}{(k_1 - 1)^2}, \ldots, \frac{1}{(k_g - 1)^2}\right) \cdot \Delta\left(g,d,2^{\times \sum_{i = 1}^g (k_i - 1)}\right)\subseteq\Delta(g,d,\mathbf k).
\end{equation*}
In particular, 
\begin{equation*}
\left(\frac{1}{(k_1 - 1)^2}, \ldots, \frac{1}{(k_g - 1)^2}\right) \cdot \mathrm{QC}_{\sum_{i = 1}^g (k_i - 1)}\subseteq \Delta(g,d,\mathbf k).
\end{equation*}
\end{thm}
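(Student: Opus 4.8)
The plan is to compare the matrix jewel $\mathcal D_{\jewel,\mathbf k}$ with a matrix diamond $\mathcal D_{\diamondsuit,\sum(k_i-1)}$ by relating their first levels, and then push the first-level inclusion up to all levels using the fact that both sides of the second inclusion in \eqref{eq:implication} involve a \emph{maximal} spectrahedron. First I would observe that the matrix diamond $\mathcal D_{\diamondsuit,h}$ with $h=\sum_{i=1}^g(k_i-1)$ is, by the remark after Definition \ref{def:jewel} together with Lemma \ref{lem:wmaxpolytopes}, the maximal spectrahedron $\mathcal W_{max}$ built on the $\ell_1$-ball $\mathcal B_1(\mathbb R^h) = \bigoplus_{i=1}^g \bigoplus_{j=1}^{k_i-1}[-1,1]$, and similarly $\mathcal D_{\jewel,\mathbf k} = \mathcal W_{max}(\bigoplus_{i=1}^g \mathcal D_{\jewel,k_i}(1))$. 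The key geometric input is a comparison at level one: I claim that
$$\left(\tfrac{1}{(k_1-1)^2}, \ldots, \tfrac{1}{(k_g-1)^2}\right)\cdot \Big[(s_1^{\times(k_1-1)},\ldots,s_g^{\times(k_g-1)})\cdot \mathcal D_{\diamondsuit,h}(1)\Big] \subseteq \mathcal D_{\jewel,\mathbf k}(1)$$
whenever $s\in\Delta(g,d,2^{\times h})$, or more precisely that scaling the diamond base by the stated factors lands it inside the jewel base. Because the direct-sum structure is compatible on both sides (Lemma \ref{lem:sumfallsapart}), this reduces to the single-block statement $\tfrac{1}{(k-1)^2}\,\mathcal B_1(\mathbb R^{k-1}) \subseteq \mathcal D_{\jewel,k}(1)$ — equivalently, that the vertices $\pm\tfrac{1}{(k-1)^2}e_j$ of the scaled cross-polytope lie in $\mathcal D_{\jewel,k}(1)$, which follows from the extremal-point description in Lemma \ref{lem:extremal-points-jewel-base} (indeed $\pm\tfrac{k}{2}e_j$ and $\pm\tfrac{k}{2}(1,\ldots,1)$ are the vertices of $\mathcal D_{\jewel,k}(1)$, so small multiples of the standard basis vectors are interior) — actually I expect the sharp factor $1/(k-1)^2$ to come from the explicit convex combinations \eqref{eq:explicit_convex_combination} used in the proof of Theorem \ref{thm:symmetric-jewel}, applied twice: once for the symmetrization and once more for the $\ell_1$-to-jewel comparison.

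Next I would upgrade this level-one inclusion to the full inclusion set. The mechanism is: if $s\in\Delta(g,d,2^{\times h})$ and $B\in(\mathcal M_d^{sa})^h$ satisfies $\mathcal D_{\jewel,\mathbf k}(1)\subseteq\mathcal D_B(1)$, then a fortiori $\mathcal D_{\diamondsuit,h}(1)\subseteq\mathcal D_B(1)$ is false in general — so instead I should run the argument the other way, exactly as in the proof of Theorem \ref{thm:symmetric-jewel}. Concretely: given $B$ with $\mathcal D_{\jewel,\mathbf k}(1)\subseteq\mathcal D_B(1)$, I want to conclude $\big(\tfrac{1}{(k_1-1)^2},\ldots\big)\cdot(s_1^{\times(k_1-1)},\ldots)\cdot\mathcal D_{\jewel,\mathbf k}\subseteq\mathcal D_B$. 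Using that $\mathcal D_{\jewel,\mathbf k}$ is maximal over its first level and the level-one inclusion $\tfrac{1}{(k-1)^2}\cdot(\text{diamond base})\subseteq(\text{jewel base})$ established above, together with $\mathcal D_{\jewel,\mathbf k}(1)\subseteq\mathcal D_B(1)$, I get $\big(\tfrac{1}{(k_1-1)^2},\ldots\big)\cdot\mathcal D_{\diamondsuit,h}(1)\subseteq\mathcal D_B(1)$; applying the definition of $\Delta(g,d,2^{\times h})$ to the rescaled tuple $B'$ with $B'_{i,j}=\tfrac{1}{(k_i-1)^2}B_{i,j}$, the hypothesis $s\in\Delta(g,d,2^{\times h})$ yields $(s_1^{\times(k_1-1)},\ldots)\cdot\mathcal D_{\diamondsuit,h}\subseteq\mathcal D_{(k-1)^2\cdot B}$, and then the level-one jewel-inside-scaled-diamond inclusion — promoted to all levels via maximality of $\mathcal D_{\jewel,\mathbf k}$ and the remark after \cite[Remark 4.2]{davidson2016dilations} — gives $\mathcal D_{\jewel,\mathbf k}\subseteq(k-1)^2\cdot\mathcal D_{\diamondsuit,h}$, and chaining the scalings produces the claim. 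The "In particular" statement then follows immediately from the known fact that $\mathrm{QC}_h\subseteq\Delta(g,d,2^{\times h})$, which is the content of the binary-outcome results imported from \cite{bluhm2018joint} (the matrix diamond contains the maximally-symmetric Euclidean-ball scaling); I would cite the relevant proposition there.

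The main obstacle I anticipate is getting the direction of the maximality argument exactly right and bookkeeping the asymmetric scalings: one must be careful that "$\mathcal D_{\jewel,\mathbf k}\subseteq c\cdot\mathcal D_{\diamondsuit,h}$ at level one $\Rightarrow$ at all levels" genuinely uses that the \emph{larger} set $c\cdot\mathcal D_{\diamondsuit,h}$ is $\mathcal W_{max}$ of its first level (not $\mathcal D_{\jewel,\mathbf k}$), and conversely that "$c\cdot\mathcal D_{\diamondsuit,h}(1)\subseteq\mathcal D_B(1)\Rightarrow$ all levels" uses maximality on the \emph{smaller} side. A clean way to sidestep sign/direction confusion is to phrase everything through the maps $\Phi^{A\to B}$ of Lemma \ref{lem:phi} and the identity $\lambda\cdot\mathcal D_A(n)\subseteq\mathcal D_B(n)\iff\mathcal D_A(n)\subseteq\mathcal D_{\lambda\cdot B}(n)$ used already in Theorem \ref{thm:symmetric-jewel}; with that bookkeeping in place the argument is essentially a two-step sandwich. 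The only genuinely computational point, verifying the sharp constant $1/(k-1)^2$ rather than a cruder bound, should be handled by the explicit convex combination in \eqref{eq:explicit_convex_combination} applied to both the symmetrization step and the $\ell_1$-ball comparison step.
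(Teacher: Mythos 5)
Your overall architecture is the paper's: sandwich the jewel between two scalings of the matrix diamond at level one, promote both inclusions to all levels using that the diamond and its scalings are $\mathcal W_{max}$ of their first level, and then route the hypothesis $s\in\Delta(g,d,2^{\times h})$ (with $h=\sum_{i=1}^g(k_i-1)$) through an arbitrary $B$ satisfying $\mathcal D_{\jewel,\mathbf k}(1)\subseteq\mathcal D_B(1)$. The promotion directions you flag in your final paragraph are also the right ones. The genuine gap is in the constants: you take the \emph{inner} inclusion to be $\frac{1}{(k-1)^2}\mathcal B_1(\mathbb R^{k-1})\subseteq\mathcal D_{\jewel,k}(1)$ and the \emph{outer} one to be $\mathcal D_{\jewel,k}(1)\subseteq(k-1)^2\,\mathcal B_1(\mathbb R^{k-1})$. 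Both are true, but when you chain $\left(\tfrac{s_i}{(k_i-1)^2}\right)\cdot\mathcal D_{\diamondsuit,h}\subseteq\mathcal D_B$ with $\mathcal D_{\jewel,\mathbf k}\subseteq\left((k_i-1)^2\right)\cdot\mathcal D_{\diamondsuit,h}$, the two losses multiply and you only obtain $\left(\tfrac{s_i}{(k_i-1)^4}\right)\cdot\mathcal D_{\jewel,\mathbf k}\subseteq\mathcal D_B$, i.e.\ membership of $s/(k-1)^4$ in $\Delta(g,d,\mathbf k)$, which is strictly weaker than the claim for $k\geq 3$. The stated factor $1/(k-1)^2$ is the \emph{ratio} of the two sharp one-sided constants, not a constant you may impose on each side separately. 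The paper uses: (outer) $\mathcal D_{\jewel,k_i}(1)\subseteq\frac{k_i(k_i-1)}{2}\,\mathcal D_{\diamond,k_i-1}(1)$, read off from the $\ell_1$-norms of the extreme points in Lemma \ref{lem:extremal-points-jewel-base}, the binding one being $\frac{k_i}{2}(1,\ldots,1)$; and (inner) $\frac{k_i}{2(k_i-1)}\,\mathcal D_{\diamond,k_i-1}(1)\subseteq\mathcal D_{\jewel,k_i}(1)$, from Equation \eqref{eq:explicit_convex_combination}. Their ratio is $\frac{k_i}{2(k_i-1)}\cdot\frac{2}{k_i(k_i-1)}=\frac{1}{(k_i-1)^2}$.

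A related error feeds this: your parenthetical that ``$\pm\frac{k}{2}e_j$ and $\pm\frac{k}{2}(1,\ldots,1)$ are the vertices of $\mathcal D_{\jewel,k}(1)$'' is wrong. The jewel base is a simplex with vertices $-\frac{k}{2}e_j$ and $+\frac{k}{2}(1,\ldots,1)$ only, and precisely this lack of central symmetry produces the two different one-sided constants: $+\lambda e_j$ lies in the simplex only for $\lambda\le\frac{k}{2(k-1)}$, while $-\lambda e_j$ allows $\lambda\le\frac{k}{2}$. There is also no symmetrization step in this proof --- that belongs to Theorem \ref{thm:symmetric-jewel}; Equation \eqref{eq:explicit_convex_combination} is used once here, for the inner constant. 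Two smaller bookkeeping slips: the rescaled tuple should give $s\cdot\mathcal D_{\diamondsuit,h}\subseteq\mathcal D_{(k-1)^{-2}\cdot B}$, not $\mathcal D_{(k-1)^{2}\cdot B}$; and ``promoted to all levels via maximality of $\mathcal D_{\jewel,\mathbf k}$'' should invoke maximality of the scaled diamond (the larger set), as you yourself note afterwards. With the sharp one-sided constants substituted, your argument becomes the paper's proof.
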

\begin{proof}
We observe that
\begin{equation*}
\mathcal D_{\jewel, k_i}(1) \subseteq \frac{k_i(k_i-1)}{2} \cdot \mathcal D_{\diamond, k_i -1}(1).
\end{equation*}
This follows from the computation of the $\ell_1$ norms of the extremal points of the jewel base found in Lemma \ref{lem:extremal-points-jewel-base}. Moreover, the matrix diamond is the maximal spectrahedron for the $\ell_1$-ball. Thus, together with Lemma \ref{lem:sumfallsapart},
\begin{equation*}
\mathcal D_{\jewel, \mathbf{k}} \subseteq  \left(\frac{k_1(k_1-1)}{2}, \ldots, \frac{k_g(k_g-1)}{2}\right) \cdot \mathcal D_{\diamond, \sum_{i = 1}^g(k_i -1)}
\end{equation*}
(see again \cite[Remark 4.2]{davidson2016dilations}). Furthermore, we need to find the largest $\lambda_i \geq 0$ for $i \in [g]$ such that 
\begin{equation*}
\lambda_i \cdot \mathcal D_{\diamond, k_i -1}(1) \subseteq \mathcal D_{\jewel, k_i}(1).
\end{equation*}
The extreme point of the matrix diamond are $\pm e_j$ for $j \in [k_i - 1]$. It holds that $\pm \lambda_i e_j \subseteq \mathcal D_{\jewel, k_i}(1)$ if and only if
\begin{equation*}
\pm \lambda_i \in \left[-\frac{k_i}{2}, \frac{1}{k_i - 1} \frac{k_i}{2} \right].
\end{equation*}
This follows directly from 
Lemma \ref{lem:extremal-points-jewel-base} and Equation \eqref{eq:explicit_convex_combination}. Thus, $\lambda_i \leq k_i/(2(k_i-1))$. 
From Lemma \ref{lem:sumfallsapart}, we infer that
\begin{equation*}
\left(\frac{k_1}{2(k_1 - 1)} , \ldots, \frac{k_g}{2(k_g - 1)} \right) \cdot \mathcal D_{\diamond, \sum_{j = 1}^g(k_j -1)}(1) \subseteq \mathcal D_{\jewel, \mathbf{k}}(1).
\end{equation*}
Let $B \in (\mathcal M_d^{sa})^{\sum_{j=1}^g(k_j-1)}$. Now, by the previous reasoning, the implication 
\begin{align*}
&\left(\frac{k_1}{2(k_1 - 1)} , \ldots, \frac{k_g}{2(k_g - 1)} \right) \cdot \mathcal D_{\diamond, \sum_{j = 1}^g(k_j -1)}(1) \subseteq \mathcal D_{\jewel, \mathbf{k}}(1)\subseteq \mathcal D_B(1) \implies\\ &  \left(s_1 \frac{1}{(k_1-1)^2}, \ldots, s_g\frac{1}{(k_g-1)^2}  \right) \cdot \mathcal D_{\jewel, \mathbf{k}} \subseteq \left(s_1\frac{k_1}{2(k_1 - 1)} , \ldots, s_g\frac{k_g}{2(k_g - 1)} \right) \cdot \mathcal D_{\diamond, \sum_{j = 1}^g(k_j -1)} \subseteq \mathcal D_B
\end{align*}
holds for all $s \in \Delta(g,d,2^{\times \sum_{j = 1}^g(k_j -1)})$. As $B$ was arbitrary, this proves the first assertion. The second follows from \cite[Theorem VII.7]{bluhm2018joint}, which adapts results from \cite{passer2018minimal}.
\end{proof}

\section{Incompatibility witnesses and the matrix cube}\label{sec:incompatibility-witnesses-binary}

In this section we introduce the notion of \emph{incompatibility witnesses} in the case of tuples of binary POVMs. The case of general POVMs will be treated in the next section. The terminology is borrowed from the theory of entanglement where entanglement witnesses allow one to detect entanglement in quantum states. In the same way, elements from the matrix diamond detect the incompatibility of POVMs as will be shown below. We would like to point out that 
a related notion was recently introduced by A.~Jen{\v{c}}ov{\'a} in \cite{jencova2018incompatible}; see also \cite{carmeli2018quantum} for yet another notion of incompatibility witness.

The use of incompatibility witnesses is twofold. On the one hand, they can be used to certify incompatibility of a given set of POVMs in the regime where this becomes a hard computational problem. In this respect, they play the same role as entanglement witnesses. On the other hand, they can be used to prove new bounds on the compatibility region. To illustrate this, we recover in Proposition \ref{prop:planar-qubits} a result originally obtained in \cite{uola2016adaptive}.

Let us start with a simple calculation motivating the new definition. Recall from \cite[Theorem V.3]{bluhm2018joint} (or from Theorem \ref{thm:jewel-compatible-POVM}) that $g$ quantum effects $E_1, \ldots, E_g \in \mathcal M_d$ are compatible if and only if for all elements of the matrix diamond $X \in \mathcal D_{\diamondsuit,g}$, it holds that
\begin{equation}\label{eq:condition-2E-I-X}
\sum_{i=1}^g (2E_i - I_d) \otimes X_i \leq I_{dn}.
\end{equation}
Recall that for a $g$-tuple $(X_1, \ldots, X_g) \in \mathcal M_n^{sa}$ to be an element of the matrix diamond, it needs to satisfy the following conditions:
$$\forall \epsilon \in \{\pm 1\}^g, \qquad \sum_{i=1}^g \epsilon_i X_i \leq I_n.$$
Let us now show, by a simple and direct computation, why compatible effects $E_1, \ldots, E_g$ must satisfy condition \eqref{eq:condition-2E-I-X}, for any choice of $X$ as above. We write $G$ for the joint POVM associated with $E_1, \ldots, E_g$. Then, 
\begin{align*}
\sum_{i=1}^g (2E_i - I_d) \otimes X_i &= \sum_{i=1}^g \left[ \sum_{\substack{\eta \in \{0,1\}^g\\\eta_i = 0}} G_\eta-\sum_{\substack{\eta \in \{0,1\}^g\\\eta_i = 1}} G_\eta \right]\otimes X_i\\
&= \sum_{i=1}^g  \sum_{\eta \in \{0,1\}^g} (-1)^{\eta_i} G_\eta\otimes X_i\\
&= \sum_{\eta \in \{0,1\}^g} G_\eta \otimes \left[ \sum_{i=1}^g  (-1)^{\eta_i} X_i\right]\\
&\leq \sum_{\eta \in \{0,1\}^g} G_\eta \otimes I_n\\
&= I_{dn}.
\end{align*}

The computation above justifies the following definition. 

\begin{defi}\label{def:incompatibility-witness-binary}
A $g$-tuple of self-adjoint matrices $X \in (\mathcal M_n^{sa})^g$ is called an \emph{incompatibility witness} if one of the following equivalent conditions holds: 
\begin{enumerate}
	\item $X$ is an element of the matrix diamond $\mathcal D_{\diamondsuit,g}$
	\item for all sign vectors $\epsilon \in \{\pm 1\}^g$, $\sum_{i=1}^g \epsilon_i X_i \leq I_n$
	\item for all sign vectors $\epsilon \in \{\pm 1\}^g$, $\|\sum_{i=1}^g \epsilon_i X_i \|_\infty \leq 1$.	
\end{enumerate}
\end{defi}

We can now restate the second claims in \cite[Theorem V.3]{bluhm2018joint} and Theorem \ref{thm:jewel-compatible-POVM} (applied to binary POVMs) as follows. 

\begin{prop}
A set of $d$-dimensional quantum effects $(E_1, \ldots, E_g)$ is jointly measurable if and only if, for any incompatibility witness $X$, condition \eqref{eq:condition-2E-I-X} holds. Moreover, one can restrict the size of the incompatibility witness to be $d$. 
\end{prop}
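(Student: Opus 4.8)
The plan is to deduce this proposition directly from Theorem~\ref{thm:jewel-compatible-POVM} and the characterization of the matrix diamond as the maximal spectrahedron over the $\ell_1$-ball. First I would recall that binary POVMs are recovered by taking $\mathbf k = (2,\ldots,2)$, so that $\mathcal D_{\jewel,\mathbf k} = \mathcal D_{\diamondsuit,g}$ and the affine change of variables $E_i \mapsto 2E_i - I_d$ (the case $k_i = 2$ of $E^{(i)} \mapsto 2E^{(i)} - \frac{2}{k_i}I$) turns the free spectrahedron $\mathcal D_E$ of Theorem~\ref{thm:jewel-compatible-POVM} into $\mathcal D_{(2E_1 - I_d, \ldots, 2E_g - I_d)}$. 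Then point~(2) of that theorem says the effects $(E_1, \ldots, E_g)$ (together with their complements) are jointly measurable if and only if $\mathcal D_{\diamondsuit,g} \subseteq \mathcal D_{(2E-I)}$, i.e.\ if and only if for every level $n$ and every $X \in \mathcal D_{\diamondsuit,g}(n)$ the inequality \eqref{eq:condition-2E-I-X} holds. Since elements of the matrix diamond are precisely incompatibility witnesses by Definition~\ref{def:incompatibility-witness-binary}, this is exactly the first claim.

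For the ``moreover'' part, I would invoke the level-reduction already noted in the remarks following Theorem~\ref{thm:jewel-compatible-POVM}: because $\mathcal D_{\diamondsuit,g}(1)$ is bounded, $d$-positivity of the unital map $\Phi$ sending the defining tuple of $\mathcal D_{\diamondsuit,g}$ to $(2E_1 - I_d, \ldots, 2E_g - I_d)$ is equivalent to complete positivity (\cite[Theorem 6.1]{Paulsen2002}, \cite[Corollary 4.6]{bluhm2018joint}), hence $\mathcal D_{\diamondsuit,g}(d) \subseteq \mathcal D_{(2E-I)}(d)$ already implies $\mathcal D_{\diamondsuit,g} \subseteq \mathcal D_{(2E-I)}$. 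Concretely this means it suffices to check condition \eqref{eq:condition-2E-I-X} for incompatibility witnesses $X \in (\mathcal M_d^{sa})^g$; all larger witnesses are then automatically satisfied. Equivalently, one can argue via Lemma~\ref{lem:phi}, which states that $\mathcal D_A(n) \subseteq \mathcal D_B(n)$ iff $\Phi^{A\to B}$ is $n$-positive, and then use $d$-positivity $\Leftrightarrow$ complete positivity for maps into $\mathcal M_d$.

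There is essentially no obstacle here: the proposition is a repackaging of results already established, and the only thing to be careful about is bookkeeping — making sure the affine substitution and the identification $\mathcal D_{\jewel,(2,\ldots,2)} = \mathcal D_{\diamondsuit,g}$ are stated cleanly, and that one notes the complementary effect $E_i \mapsto I_d - E_i$ being automatically controlled (a two-outcome POVM is determined by a single effect). I would therefore keep the proof to a few lines, citing Theorem~\ref{thm:jewel-compatible-POVM}(2), Definition~\ref{def:incompatibility-witness-binary}, and \cite[Theorem 6.1]{Paulsen2002} (or \cite[Corollary 4.6]{bluhm2018joint}) for the dimension bound.
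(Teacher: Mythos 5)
Your proposal is correct and follows exactly the route the paper takes: the paper explicitly presents this proposition as a restatement of Theorem \ref{thm:jewel-compatible-POVM}(2) specialized to $\mathbf k = (2,\ldots,2)$, where the matrix jewel becomes the matrix diamond and its elements are the incompatibility witnesses, with the dimension bound $d$ coming from the remark that $d$-positivity of the associated unital map into $\mathcal M_d$ is equivalent to complete positivity. Nothing is missing.
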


Deciding whether a $g$-tuple of operators is an incompatibility witness requires to check $2^g$ matrix inequalities of size $n$, a task which is computationally intractable for large $g$ although it can be formulated as a semidefinite program. We relate this question to another free spectrahedral inclusion problem, that of the \emph{complex matrix cube}. Recall from \cite{helton2019dilations} that the matrix cube is the free spectrahedron 
\begin{align}
\mathcal D_{\square, g} &:= \bigsqcup_{n=1}^\infty \left\{ X \in (\mathcal M_n^{sa})^g \, : \, \|X_i\|_\infty \leq 1, \, \forall i \in [g]\right\} \label{eq:matrix-cube}\\
&=  \bigsqcup_{n=1}^\infty \left\{ X \in (\mathcal M_n^{sa})^g \, : \, \sum_{i=1}^g c_i \otimes X_i \leq I_{2gn}\right\}, \nonumber
\end{align}
where the vectors $c_1, \ldots c_g \in \mathbb C^{2g}$ are given by
$$c_i = (e_i, -e_i).$$
We have the following result. 
\begin{prop}\label{prop:incompatibility-witness-vs-cube}
A $g$-tuple $X \in (\mathcal M_d^{sa})^g$ is an incompatibility witness if and only if $\mathcal D_{\square, g}(1) \subseteq \mathcal D_X(1)$. Moreover, we have
\begin{equation}\label{eq:matrix-cube-relaxation}
\mathcal D_{\square, g} \subseteq \mathcal D_X \implies \mathcal D_{\square, g}(1) \subseteq \mathcal D_X(1) \implies \vartheta^{\mathbb C}_{g,d}\mathcal D_{\square, g} \subseteq \mathcal D_X.
\end{equation}
Here, the $\vartheta_{g,d}^{\mathbb C}$ are the symmetric inclusion constants for the \emph{complex} matrix cube, i.e.~the $s \in \Delta_{\mathcal D_{\square, g}}(g,d,2^{\times g})$ for which $s_1 = \ldots = s_g = \vartheta_{g,d}^{\mathbb C}$.
\end{prop}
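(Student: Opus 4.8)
The plan is to treat the claimed equivalence first, and then the implication chain \eqref{eq:matrix-cube-relaxation}, both of which reduce to elementary facts about the cube together with the definition of the symmetric inclusion constant.

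For the equivalence, I would observe that $\mathcal D_{\square, g}(1) = [-1,1]^g$ is a polytope whose extreme points are precisely the sign vectors $\epsilon \in \{\pm 1\}^g$, and that for fixed self-adjoint $X_1, \ldots, X_g$ the defining condition $\sum_{i=1}^g x_i X_i \le I_n$ for membership of $x$ in $\mathcal D_X(1)$ is affine in $x$. Hence, exactly as in the proof of Proposition \ref{prop:spectrahedronPOVM}, the inclusion $[-1,1]^g \subseteq \mathcal D_X(1)$ holds if and only if it holds at the extreme points, i.e.\ $\sum_{i=1}^g \epsilon_i X_i \le I_n$ for all $\epsilon \in \{\pm 1\}^g$ — which is precisely condition (2) in Definition \ref{def:incompatibility-witness-binary}. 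This settles both directions simultaneously.

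For the chain \eqref{eq:matrix-cube-relaxation}, the first implication is immediate: inclusion of free spectrahedra at all levels restricts trivially to level $1$. For the second implication I would invoke the definition of $\vartheta^{\mathbb C}_{g,d}$: by construction the constant vector $(\vartheta^{\mathbb C}_{g,d}, \ldots, \vartheta^{\mathbb C}_{g,d})$ lies in $\Delta_{\mathcal D_{\square,g}}(g,d,2^{\times g})$, and since here each $k_i - 1 = 1$ the repeated-scaling notation $(s_1^{\times (k_1-1)}, \ldots, s_g^{\times (k_g-1)})$ collapses to that same vector. Applying the defining implication of $\Delta_{\mathcal D_{\square,g}}(g,d,2^{\times g})$ with the choice $B = X \in (\mathcal M_d^{sa})^g$, and using the equivalence just established so that the hypothesis $\mathcal D_{\square,g}(1) \subseteq \mathcal D_X(1)$ is exactly what is assumed, one obtains $\vartheta^{\mathbb C}_{g,d} \cdot \mathcal D_{\square, g} \subseteq \mathcal D_X$.

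I do not expect a genuine obstacle here; the only point requiring a little care is dimension bookkeeping. The cube $\mathcal D_{\square,g}$ is defined (after identifying $\mathbb C^{2g}$ with diagonal matrices) by the tuple $(c_1, \ldots, c_g)$ in $(\mathcal M_{2g}^{sa})^g$, so it is $\mathcal D_{\square,g}$ that plays the role of $\mathcal D_A$ in the definition of the inclusion set, with $D = 2g$ and $\mathbf k = 2^{\times g}$, while $X$ plays the role of the tuple $B$ with target dimension $d$; once this correspondence is matched the definition applies verbatim.
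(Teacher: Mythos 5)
Your proof is correct and follows essentially the same route as the paper: checking the level-$1$ inclusion at the $2^g$ sign-vector extreme points of the cube to recover condition (2) of Definition \ref{def:incompatibility-witness-binary}, and then deducing the chain \eqref{eq:matrix-cube-relaxation} directly from the definition of $\Delta_{\mathcal D_{\square,g}}(g,d,2^{\times g})$ applied with $B=X$. The only difference is that you spell out the dimension bookkeeping and the collapse of the repeated-scaling notation, which the paper leaves implicit.
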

\begin{proof}
The convex set inclusion $\mathcal D_{\square, g}(1) \subseteq \mathcal D_X(1)$ can be checked at the level of extremal points of the cube $\mathcal D_{\square, g}(1)$, which are the $2^g$ sign vectors $\epsilon \in \{\pm 1\}^g$. The resulting conditions are precisely the ones from Definition \ref{def:incompatibility-witness-binary}. Equation \eqref{eq:matrix-cube-relaxation} follows from the definition of the inclusion constants.
\end{proof}

\begin{remark}
The inclusion constants $\vartheta_{g,d}^{\mathbb C}$ above are the maximal elements $s \in \Delta_{\mathcal D_{\square,g}}(g,d,2^{\times g})$ such that $s_1 = \ldots = s_g$. They are known to possess a dimension independent lower bound, $g^{-1/2} \leq \vartheta_{g,d}^{\mathbb C}$ \cite[Section 6]{passer2018minimal}, which is known to be tight for $d$ large enough.
\end{remark}

\begin{remark}
The chain of implications \eqref{eq:matrix-cube-relaxation} suggests an efficient numerical procedure to determine, up to some precision, whether a given $g$-tuple of self-adjoint operators is an incompatibility witness. This is because the first and the last free spectrahedral inclusions can be formulated as an SDP, as follows:
\begin{align*}
	\text{maximize} \quad &s\\
	\text{subject to} \quad &\exists \Phi:\mathbb C^{2g} \to \mathcal M_d \text{ unital, completely positive}\\
	&s\Phi(c_i) = X_i \quad \forall i \in [g].
\end{align*}
If the value $s^*$ of the SDP above is such that $s^* \geq 1$, we conclude that the first inclusion in \eqref{eq:matrix-cube-relaxation} holds, so $X$ is an incompatibility witness. On the other hand, if the optimal value is such that $s^*<\vartheta^{\mathbb C}_{g,d}$, we conclude that $X$ is \emph{not} an incompatibility witness. However, if $s^* \in [\vartheta^{\mathbb C}_{g,d}, 1)$, we cannot conclude anything. Finally, let us point out that the SDP above has $3g+1$ constraints of size $d$, hence it is more tractable than the original brute-force condition, requiring $2^g$ matrix inequalities. 
\end{remark}

We end this section with an example of an application of the theory of incompatibility witnesses. We shall prove that the upper bound derived in \cite{uola2016adaptive} for the amount of noise needed to make a $g$-tuple of ``planar'' qubit POVMs jointly measurable can also be understood in the framework of incompatibility witnesses.

Recall that a \emph{planar qubit POVM} is a binary qubit POVM with effects which depend on only two Pauli operators (we choose $\sigma_X$ and $\sigma_Y$ below). We use the standard Pauli matrices
\begin{equation*}
\sigma_X = \begin{pmatrix}
0  & 1 \\ 1& 0
\end{pmatrix}, \qquad
\sigma_Y = \begin{pmatrix}
0  & -i \\ i& 0
\end{pmatrix}, \qquad
\sigma_Z = \begin{pmatrix}
1  & 0 \\ 0&-1
\end{pmatrix}.
\end{equation*}

In the case of planar qubit POVMs defined by vectors in the complex plane with angles in arithmetic progression, we have the following result.  

\begin{lem}
Let $X = (X_1, \ldots, X_g)$ where $X_j$ are planar qubit observables
\begin{equation}\label{eq:planar-qubit-observables}
X_j = \cos(j\pi/g) \sigma_X + \sin(j \pi/g) \sigma_Y, \qquad j \in [g].
\end{equation}
Then, $\lambda X$ is a incompatibility witness if and only if $|\lambda| \leq \sin(\pi/(2g))$. 
\end{lem}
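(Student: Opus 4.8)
The plan is to reduce the claim to a concrete trigonometric optimization over sign vectors. By Definition~\ref{def:incompatibility-witness-binary}(3), $\lambda X$ is an incompatibility witness if and only if $\bigl\|\sum_{j=1}^g \epsilon_j \lambda X_j\bigr\|_\infty \le 1$ for every $\epsilon \in \{\pm 1\}^g$. Using the planar form \eqref{eq:planar-qubit-observables}, one computes $\sum_{j=1}^g \epsilon_j X_j = a_\epsilon \sigma_X + b_\epsilon \sigma_Y$ with $a_\epsilon = \sum_{j} \epsilon_j \cos(j\pi/g)$ and $b_\epsilon = \sum_j \epsilon_j \sin(j\pi/g)$; since $(a\sigma_X + b\sigma_Y)^2 = (a^2+b^2) I_2$, its operator norm equals $\sqrt{a_\epsilon^2 + b_\epsilon^2} = \bigl|\sum_{j=1}^g \epsilon_j e^{\mathrm{i}j\pi/g}\bigr|$. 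Hence $\lambda X$ is an incompatibility witness if and only if $|\lambda|\le 1/M$, where $M := \max_{\epsilon \in \{\pm1\}^g}\bigl|\sum_{j=1}^g \epsilon_j e^{\mathrm{i}j\pi/g}\bigr|$, and the entire statement comes down to proving $M = 1/\sin(\pi/(2g))$.

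To evaluate $M$, I would pass to the dual description of the modulus: $\bigl|\sum_j \epsilon_j e^{\mathrm{i}j\pi/g}\bigr| = \max_{\phi}\sum_j \epsilon_j \cos(j\pi/g - \phi)$, so that after exchanging the two maxima and using $\max_{\epsilon}\sum_j \epsilon_j c_j = \sum_j |c_j|$ one gets $M = \max_\phi \sum_{j=1}^g |\cos(j\pi/g - \phi)|$. Since $|\cos|$ has period $\pi$ and the angles $j\pi/g \bmod \pi$ run over a full set of equally spaced residues, the function $S(\phi) := \sum_{j=0}^{g-1}|\cos(\phi + j\pi/g)|$ (which equals the expression above after reindexing) is $\tfrac{\pi}{g}$-periodic and even, so it suffices to maximize $S$ on $[0, \pi/(2g)]$.

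On this interval the signs of the summands are fixed: each argument $\phi + j\pi/g$ lies in $(0,\pi)$, exactly $p := \lceil g/2 \rceil$ of the terms $\cos(\phi + j\pi/g)$ are nonnegative, and the remaining $q := \lfloor g/2\rfloor$ are nonpositive. Splitting $S(\phi)$ accordingly, summing each block with the identity $\sum_{j=0}^{N-1}\cos(\phi + j\delta) = \frac{\sin(N\delta/2)}{\sin(\delta/2)}\cos\!\bigl(\phi + (N-1)\delta/2\bigr)$ (with $\delta = \pi/g$), and using $\sin(q\delta/2) = \cos(p\delta/2)$ (because $p\delta/2 + q\delta/2 = \pi/2$), everything collapses to $S(\phi) = \sin\!\bigl(\phi + (2p-1)\tfrac{\pi}{2g}\bigr)\big/\sin(\pi/(2g))$. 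As $\phi$ runs over $[0,\pi/(2g)]$ the argument sweeps $\bigl[(2p-1)\tfrac{\pi}{2g},\, p\tfrac{\pi}{g}\bigr]$, an interval which contains $\pi/2$ for both parities of $g$ (one checks $2p-1 \le g \le 2p$); hence $\max_\phi S(\phi) = 1/\sin(\pi/(2g))$, i.e.\ $M = 1/\sin(\pi/(2g))$, and the claim follows.

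The main obstacle is the bookkeeping in this last paragraph: correctly counting how many summands are nonnegative on the fundamental domain, checking that the telescoping of the two geometric sums really produces a single sine term, and verifying that the maximum of that sine is attained inside the admissible range of $\phi$. The mild case split between $g$ even and $g$ odd (hidden in $p = \lceil g/2\rceil$ and in the endpoint at which the maximum occurs, $\phi = \pi/(2g)$ resp.\ $\phi = 0$) is the only place where a little care is needed; everything else is routine.
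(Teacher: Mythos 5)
Your proposal is correct, and it takes a genuinely different route from the paper. After the common reduction to the scalar quantity $M=\max_{\epsilon\in\{\pm1\}^g}\bigl|\sum_{j}\epsilon_j e^{\mathrm{i}j\pi/g}\bigr|$ (the paper also passes through the Bloch-ball picture to get exactly this), the paper evaluates $M$ combinatorially: it observes that $-\omega^j=\omega^{g+j}$ for $\omega=e^{\mathrm i\pi/g}$, so each signed sum is a sum over a $g$-element subset of the $2g$-th roots of unity avoiding antipodal pairs, and then proves an exchange-type claim showing any maximizing subset consists of $g$ consecutive roots in a half-plane, hence is a rotation of the $\epsilon\equiv 1$ configuration, whose modulus is $2/|1-\omega|=1/\sin(\pi/(2g))$. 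You instead linearize the modulus via its support function, swap the two maxima to get $M=\max_\phi\sum_{j=0}^{g-1}|\cos(\phi+j\pi/g)|$, and evaluate this in closed form using the $\pi/g$-periodicity, the sign pattern on a fundamental domain, and the arithmetic-progression cosine sum, arriving at $S(\phi)=\sin(\phi+(2p-1)\pi/(2g))/\sin(\pi/(2g))$ with $p=\lceil g/2\rceil$; I checked the identity $\sin(q\delta/2)=\cos(p\delta/2)$ and the phase shift $\cos(\theta+\pi/2)=-\sin(\theta)$ do collapse the two blocks into a single sine, and that the admissible range of the argument contains $\pi/2$ for both parities. Your route buys an explicit formula for the whole function $\phi\mapsto S(\phi)$ (hence for the optimal measurement direction as well) and avoids the paper's somewhat delicate extremal-set argument, at the price of the trigonometric bookkeeping you flag; the paper's route is shorter to state and identifies the optimal sign vectors geometrically. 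Both establish the same value of $M$ and hence the lemma.
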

\begin{proof}
Let $\epsilon \in \{\pm 1\}^g$. The condition $\|\sum_j \epsilon_j \lambda X_j \|_\infty \leq 1$ reduces in this case, using the Bloch ball picture, to 
$$|\lambda|\left\| \left( \sum_{j=1}^g \epsilon_j \cos(j\pi/g), \sum_{j=1}^g \epsilon_j \sin(j\pi/g) \right) \right\|_2 \leq 1 \iff |\lambda| \left|\sum_{j=1}^g \epsilon_j \omega^j \right| \leq 1,$$
where $\omega = \exp(2\pi i /(2g))$ is a $2g$-th root of unity. Note that choosing $\epsilon \equiv 1$ gives
$$1 \geq |\lambda| \left|\sum_{j=1}^g  \omega^j \right| = \frac{2|\lambda|}{|1-\omega|} = \frac{|\lambda|}{\sin(\pi/(2g))},$$
proving one direction of the conclusion. For the other direction, note that $-\omega^j = \omega^{g+j}$, hence the signed sum of roots of unity corresponds to a sum of a subset of size $g$ of $2g$-roots of unity. The conclusion will follow from the following claim, proving that any optimizer must be a rotation of the $\epsilon \equiv 1$ case. 

\medskip

\textbf{Claim.} The maximization problem 
$$\max_{J \subseteq [2g]} \left| \sum_{j \in J} \omega^j \right| $$
is attained for a subset $J_0$ with cardinality $g$ and such that the set $\{\omega^j\}_{j \in J_0}$ is contained in some half-plane of $\mathbb C = \mathbb R^2$. Furthermore, if $j \in J_0$, then $j + g \not \in J_0$ (the sums are considered modulo $2g$). 

\medskip

Indeed, let $J$ be any maximizer, and let $s:= \sum_{j \in J} \omega^j$. We show that $\{\omega^j\}_{j \in J}$ lies in the half-plane $\{z \in \mathbb R^2 \, : \, \langle s, z \rangle \geq 0\}$. We will use the following fact: For two non-zero vectors $a$, $b$ in a real Hilbert space, $\langle a, b \rangle \geq 0$ implies $|a+b|>|a|$.
Assume that there is some $j \in J$ with
$\langle s, \omega^j \rangle <0$. 
If $g+j \notin J$, replacing $j$ with $g+j$ (taken cyclically) would increase the modulus of the sum, contradicting maximality. This is true, because the sum $s^\prime$ after replacement can be written $s^\prime = s - 2\omega^j$ and $\langle s, -\omega^{j}\rangle > 0$. Then $|s^\prime| > s$ by the fact above.
If $g+j \in J$, the two contributions cancel, and we can consider $J' = J \setminus \{j,g+j\}$ and iterate. So, there is no $j \in J$ such that
$\langle s, \omega^j \rangle <0$. Conversely, if $j \in [2g]$ such that $\langle s, \omega^j \rangle \geq 0$, then $j \in J$. If this was not the case, we would have $|s + \omega^j| > |s|$ which contradicts maximality. Hence, $|J| \geq g$. Assume $|J| > g$. Then, there is an $l \in J$ such that also $g+l \in J$. By the above, this implies $- \langle s, \omega^l \rangle \geq 0$ and thus $|s - \omega^{l}| > |s|$. Removing $l$ from $J$ would thus increase the modulus, contradicting maximality.

The above claim implies that $\sum_{j \in J_0} \omega^j = \omega^{k} \sum_{j \in [g]} \omega_j$ for some $k \in [2g]$, as there are no more than $g+1$ consecutive $\omega^{j}$ in a half-space.
This proves the assertion since $|\omega^k|=1$. 
\end{proof}

\begin{prop} \label{prop:planar-qubits}
Let $g$ be a fixed positive integer, and consider the quantum effects
$$E_j = \frac 1 2 (I_2 + t_j X_j), \qquad j \in [g],$$
for some $t_j \in [0,1]$, where $X_j$ have been defined in \eqref{eq:planar-qubit-observables}. If the above effects are jointly measurable, then 
$$\sum_{j=1}^g t_j \leq \frac{1}{\sin(\pi/(2g))}.$$
\end{prop}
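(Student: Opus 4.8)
The plan is to package the preceding lemma into the matrix‑diamond criterion for compatibility. Suppose the effects $E_j = \tfrac12(I_2 + t_j X_j)$ are jointly measurable and set $\lambda := \sin(\pi/(2g))$. By the lemma above, the $g$‑tuple $\lambda X = (\lambda X_1, \dots, \lambda X_g) \in (\mathcal M_2^{sa})^g$ is an incompatibility witness, i.e.\ it lies in the matrix diamond $\mathcal D_{\diamondsuit,g}$ at level $n=2$. Joint measurability of $E_1, \dots, E_g$ then forces inequality \eqref{eq:condition-2E-I-X} for this particular witness (the ``only if'' direction of \cite[Theorem V.3]{bluhm2018joint}, equivalently Theorem \ref{thm:jewel-compatible-POVM}(2) specialized to binary POVMs), which reads
\begin{equation*}
\sum_{j=1}^g (2E_j - I_2) \otimes (\lambda X_j) \leq I_4 .
\end{equation*}
Since $2E_j - I_2 = t_j X_j$ by the definition of $E_j$, this is exactly $\lambda \sum_{j=1}^g t_j\, X_j \otimes X_j \leq I_4$.

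It remains to extract the scalar bound $\lambda \sum_j t_j \leq 1$ from this operator inequality, and for that I would exhibit a unit vector that is a simultaneous $(+1)$‑eigenvector of every $X_j \otimes X_j$. Using $X_j = \cos(j\pi/g)\sigma_X + \sin(j\pi/g)\sigma_Y$ from \eqref{eq:planar-qubit-observables} and the standard basis $(e_1,e_2)$ of $\mathbb C^2$, one computes $X_j e_1 = e^{ij\pi/g} e_2$ and $X_j e_2 = e^{-ij\pi/g} e_1$, so $X_j \otimes X_j$ exchanges $e_1 \otimes e_2$ and $e_2 \otimes e_1$, independently of $j$. Hence $\psi := \tfrac{1}{\sqrt 2}(e_1 \otimes e_2 + e_2 \otimes e_1)$ satisfies $(X_j \otimes X_j)\psi = \psi$ for all $j \in [g]$, and pairing the displayed inequality against $\psi$ gives
\begin{equation*}
\lambda \sum_{j=1}^g t_j = \left\langle \psi,\ \lambda \sum_{j=1}^g t_j\, (X_j \otimes X_j)\psi \right\rangle \leq \langle \psi, \psi\rangle = 1,
\end{equation*}
i.e.\ $\sum_{j=1}^g t_j \leq 1/\sin(\pi/(2g))$, which is the claim.

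I do not anticipate a genuine obstacle: the whole strength of the argument resides in the sharp constant $\sin(\pi/(2g))$ supplied by the previous lemma together with the matrix‑diamond characterization of compatibility, and the only mildly non‑routine step is spotting the common eigenvector $\psi$ — equivalently, observing that $\bigl\|\sum_j t_j\, X_j \otimes X_j\bigr\|_\infty = \sum_j t_j$ exactly, the upper bound being immediate from $\mathrm{spec}(X_j) = \{\pm 1\}$ and the lower bound being certified by $\psi$. One could also phrase the conclusion without mentioning incompatibility witnesses, by testing \eqref{eq:condition-2E-I-X} directly on the scaled tuple $\sin(\pi/(2g))\,X$; going through the witness language merely makes the role of the lemma transparent.
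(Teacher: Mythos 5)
Your proof is correct, and it follows the paper's strategy up to the last step: invoke the lemma to certify that $\sin(\pi/(2g))X$ is an incompatibility witness, apply the compatibility condition \eqref{eq:condition-2E-I-X} with $2E_j - I_2 = t_j X_j$, and then extract the scalar bound by pairing the resulting operator inequality with a suitable unit vector. Where you diverge is in that final extraction: the paper passes to the transposed witness $\sin(\pi/(2g))X^{T}$, obtains $\sin(\pi/(2g))\sum_j t_j X_j \otimes X_j^{T} \leq I_4$, and pairs with the maximally entangled state $\Omega$ using $\tr[\Omega (A \otimes B)] = \tfrac12 \tr[B^T A]$ and $\tr[X_j^2]=2$; you instead keep $X_j \otimes X_j$ untransposed and exhibit the explicit common $(+1)$-eigenvector $\psi = \tfrac{1}{\sqrt2}(e_1\otimes e_2 + e_2\otimes e_1)$. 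Your eigenvector computation is right: $X_j = \bigl(\begin{smallmatrix} 0 & e^{-ij\pi/g} \\ e^{ij\pi/g} & 0\end{smallmatrix}\bigr)$ swaps $e_1$ and $e_2$ up to conjugate phases, so $X_j\otimes X_j$ exchanges $e_1\otimes e_2$ and $e_2\otimes e_1$ with the phases cancelling, independently of $j$. The two arguments are in fact conjugate to one another (your $\psi$ is $(I\otimes\sigma_X)$ applied to the maximally entangled vector, and $X_j^T = \sigma_X X_j \sigma_X$), but your version is arguably more self-contained, since it avoids both the transposition of the witness and the maximally entangled state trick, replacing them with a single explicit eigenvector that certifies $\|\sum_j t_j X_j\otimes X_j\|_\infty = \sum_j t_j$.
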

\begin{proof}
From the previous lemma, we know that $\sin(\pi/(2g))X$ is an incompatibility witness, hence so is $\sin(\pi/(2g))X^{T}$
$$\sin(\pi/(2g))\sum_{j=1}^g t_j X_j \otimes X_j^{T} \leq I_4.$$
Let $\Omega = 1/2 \sum_{i,j =1}^2 (e_i \otimes e_i)(e_j \otimes e_j)^\ast$ be the maximally entangled state, where $\Set{e_1,e_2}$ is the basis of $\mathbb C^2$ with respect to which we transpose. By taking the Hilbert-Schmidt inner product of the previous inequality with $\Omega$, we obtain
$$\sin(\pi/(2g))\sum_{j=1}^g t_j \leq 1,$$
proving the claim. Here, we have used $\tr[\Omega A \otimes B] = 1/2\tr[B^TA]$ and $\tr[\sigma_X \sigma_Y] = 0$, $\sigma_X^2 = I_2 = \sigma_Y^2$, by which $\tr[X_j^2] = 2$ $\forall j \in [g]$.
\end{proof}
\begin{cor} \label{cor:planar-qubits}
The proposition above implies the following upper bound for the balanced compatibility regions $\Gamma$ introduced in \cite{bluhm2018joint} for binary POVMs: for all $g \geq 2$, 
$$\Gamma(g,2,2^{\times g}) \subseteq \left\{s \in [0,1]^g \, : \, \sum_{j=1}^g s_j \leq\frac{1}{\sin(\pi/(2g))} \right\}.$$ 
\end{cor}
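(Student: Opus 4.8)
The plan is to derive Corollary \ref{cor:planar-qubits} directly from Proposition \ref{prop:planar-qubits} by specializing to balanced noisy POVMs. Fix $g \geq 2$ and take $s \in \Gamma(g,2,2^{\times g})$. By definition of the balanced compatibility region, for \emph{any} choice of binary qubit POVMs $\{E^{(j)}, I_2 - E^{(j)}\}$, the noisy versions $s_j E^{(j)} + (1-s_j)I_2/2$ are jointly measurable. The idea is simply to plug in the planar qubit observables: first I would choose $E^{(j)} = \frac12(I_2 + X_j)$, where $X_j = \cos(j\pi/g)\sigma_X + \sin(j\pi/g)\sigma_Y$ is the observable from \eqref{eq:planar-qubit-observables}. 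Since $\|X_j\|_\infty = 1$, these are valid effects, so $\{E^{(j)}, I_2 - E^{(j)}\}$ is a genuine binary POVM.

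Next I would compute the balanced-noise version of this POVM explicitly: $s_j E^{(j)} + (1-s_j)\frac{I_2}{2} = \frac12\big(s_j I_2 + s_j X_j + (1-s_j)I_2\big) = \frac12(I_2 + s_j X_j)$. So the noisy effect is exactly $E_j = \frac12(I_2 + t_j X_j)$ with $t_j = s_j \in [0,1]$, matching the hypothesis of Proposition \ref{prop:planar-qubits}. Since $s \in \Gamma(g,2,2^{\times g})$, these $E_j$ are jointly measurable, so the conclusion of the Proposition applies and yields $\sum_{j=1}^g t_j = \sum_{j=1}^g s_j \leq 1/\sin(\pi/(2g))$. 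As $s$ was an arbitrary element of $\Gamma(g,2,2^{\times g})$, this establishes the claimed set inclusion.

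There is essentially no obstacle here — the corollary is an immediate unwinding of definitions once Proposition \ref{prop:planar-qubits} is in hand, and the only thing to check is that the balanced noise prescription $\tilde E = sE + (1-s)I/2$ turns a planar observable effect $\frac12(I_2+X_j)$ into another planar effect $\frac12(I_2+sX_j)$, which is the trivial computation above. One might want to remark, for completeness, that the bound is only meaningful (non-vacuous) for $g$ large: since $1/\sin(\pi/(2g)) \to \infty$ while $s$ ranges over $[0,1]^g$, the constraint $\sum_j s_j \leq 1/\sin(\pi/(2g))$ genuinely cuts into the cube $[0,1]^g$ precisely when $g > 1/\sin(\pi/(2g))$, i.e.\ for $g \geq 3$ (and it is tight in the symmetric case, recovering the $\Gamma$ bounds of \cite{bluhm2018joint}). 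But strictly speaking, for the proof of the corollary itself, nothing beyond the substitution and an appeal to the Proposition is required.
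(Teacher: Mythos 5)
Your proof is correct and is exactly the argument the paper intends (the corollary is stated without a written proof, as an immediate consequence of Proposition \ref{prop:planar-qubits}): the balanced noise map sends the valid effect $\frac12(I_2+X_j)$ to $\frac12(I_2+s_jX_j)$, and joint measurability of these, guaranteed by $s\in\Gamma(g,2,2^{\times g})$, forces $\sum_{j=1}^g s_j \leq 1/\sin(\pi/(2g))$. One small slip in your closing aside, which is anyway not part of the proof: since $1/\sin(\pi/4)=\sqrt{2}<2$, the constraint already cuts into the cube $[0,1]^g$ at $g=2$, not only for $g\geq 3$.
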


\begin{remark}
Very similar ideas were used in the proof of \cite[Theorem VIII.8]{bluhm2018joint}. There, it was shown that if $F_1, \ldots, F_g$ are anti-commuting, self-adjoint, unitary $d \times d$ matrices, then the $g$-tuple $(s_1 F_1, \ldots, s_g F_g)$ is an incompatibility witness for any unit norm vector $s$ (see also \cite{kunjwal2014quantum} for a different use of the same matrices in quantum theory). As above, this observation, together with the ``maximally entangled state trick'' yields upper bounds on the sets $\Gamma(g,d, 2^{\times g})$.
\end{remark}

\section{Incompatibility witnesses -- the general case}\label{sec:incompatibility-witnesses-general}

We generalize here the notion of incompatibility witnesses introduced in the previous section for binary POVMs to the case of POVMs with arbitrary number of outcomes. As in previous Sections, we will identify vectors in $\mathbb C^k$ with diagonal $k \times k$ matrices.

\begin{defi}
Given a $g$-tuple of positive integers $\mathbf{k}$, we call the elements of the matrix jewel $\mathcal D_{\jewel, \mathbf k}$ \emph{incompatibility witnesses}. An incompatibility witness $X \in \mathcal D_{\jewel, \mathbf k}(n)$ has the property that for all compatible POVMs $E^{(1)}, \ldots, E^{(g)}$ having $k_i$ outcomes, respectively, the following inequality is satisfied
$$\sum_{i=1}^g \sum_{j=1}^{k_i-1} \left( 2 E^{(i)}_j - \frac{2}{k_i}I \right) \otimes X_{ij} \leq I_{dn}.$$ 
\end{defi}

In order to decide whether a given $\sum_{i=1}^g (k_i-1)$-tuple $X$ is an incompatibility witness, one has to check $\prod_{i=1}^g (k_i-1)$ matrix inequalities (see Definition \ref{def:jewel} and Proposition \ref{prop:direct-sum-FS}). When $g$ is large, this task becomes computationally difficult, so it useful to formulate the above membership question as a spectrahedral inclusion problem which can benefit from tractable relaxations. To do so, we need to consider the dual object to the matrix jewel (base), which we introduce next. 

\begin{defi}
Consider the vectors $x_1^{(k)}, \ldots, x_k^{(k)} \in \mathbb C^{k-1}$ from Lemma \ref{lem:extremal-points-jewel-base} and define the vectors $y^{(k)}_1, \ldots, y^{(k)}_{k-1}\in \mathbb C^k$ by $y_j(i) = x_i(j)$, for all $i \in [k]$ and $j \in [k-1]$:
	$$y_j^{(k)} = \frac{k}{2}(e_k-e_j), \qquad j \in [k-1].$$
$\mathcal D_{\cuboid,k}$ defined by
\begin{equation*}
\mathcal D_{\cuboid,k}(n):= \Set{X \in (\mathcal M_n^{sa})^{k-1}: \sum_{j = 1}^{k-1} y_j^{(k)} \otimes X_j \leq I_{kn}}. 
\end{equation*}
is called the \emph{matrix cuboid base}. For a $g$-tuple of positive integers $\mathbf k = (k_1, \ldots, k_g)$, we define the \emph{matrix cuboid} $\mathcal D_{\cuboid, \mathbf k}$ to be the free spectrahedron
	$$\mathcal D_{\cuboid, \mathbf k} := \mathcal D_{\cuboid, k_1} \hat \times \mathcal D_{\cuboid, k_2} \hat \times \cdots \hat \times \mathcal D_{\cuboid, k_g},$$
where the Cartesian product operation $\hat \times$ for free spectrahedra was introduced in Equation \eqref{eq:def-Cartesian-product-free-spectrahedra}.
\end{defi}

The definition above generalizes the notion of incompatibility witness from Definition \ref{def:incompatibility-witness-binary} to the setting of $g$ POVMs with arbitrary number of outcomes: $\mathcal D_{\square, g} = \mathcal D_{\cuboid,2^{\times g}}$. Note also that, at level $n=1$, the matrix jewel base and the matrix cuboid base are dual sets; in particular, $\mathcal D_{\cuboid, k}(1)$ is a simplex. 

\begin{remark}
The matrix cuboid is the maximal matrix convex set (in the sense of \cite[Section 4]{davidson2016dilations}, see also Equation \eqref{eq:Wmax}) built on top of the Cartesian product of simplices
$$\mathcal D_{\cuboid, k_1}(1)  \times \mathcal D_{\cuboid, k_2}(1) \times \cdots  \times \mathcal D_{\cuboid, k_g}(1).$$
\end{remark}

We display in Figure \ref{fig:matrix-cuboid} some examples of the $n=1$ of matrix cuboids. 

\begin{figure}[htbp]
	\begin{center}
		\includegraphics[scale=.53]{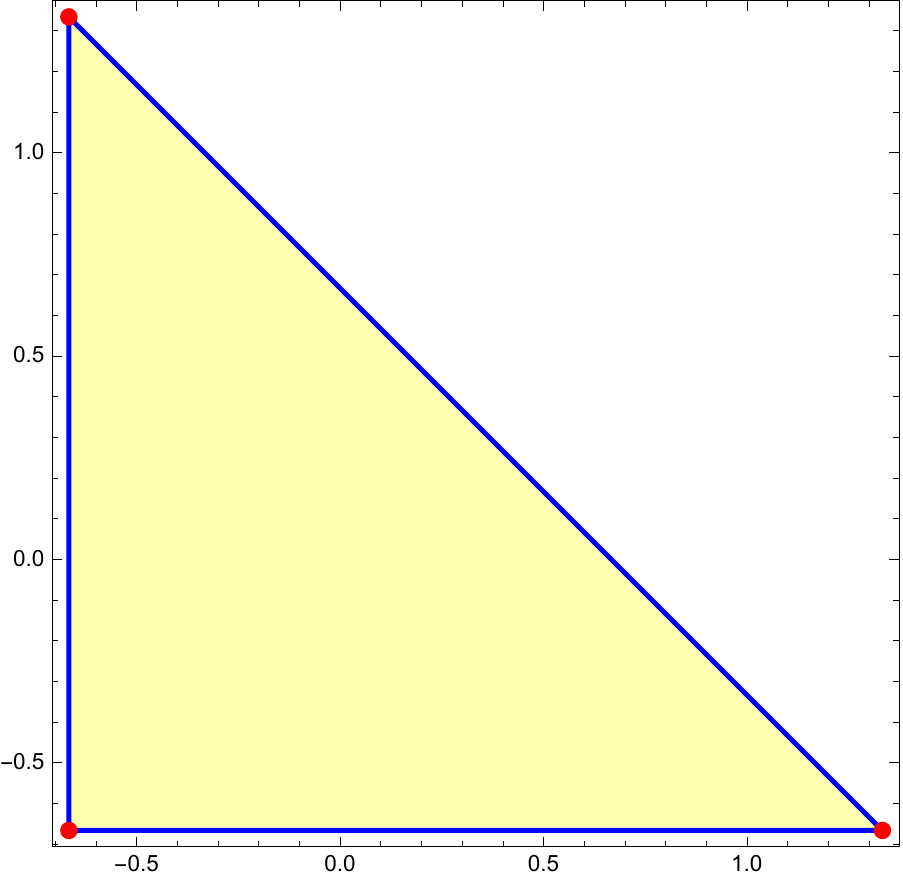}\qquad\qquad\qquad  \includegraphics[scale=.53]{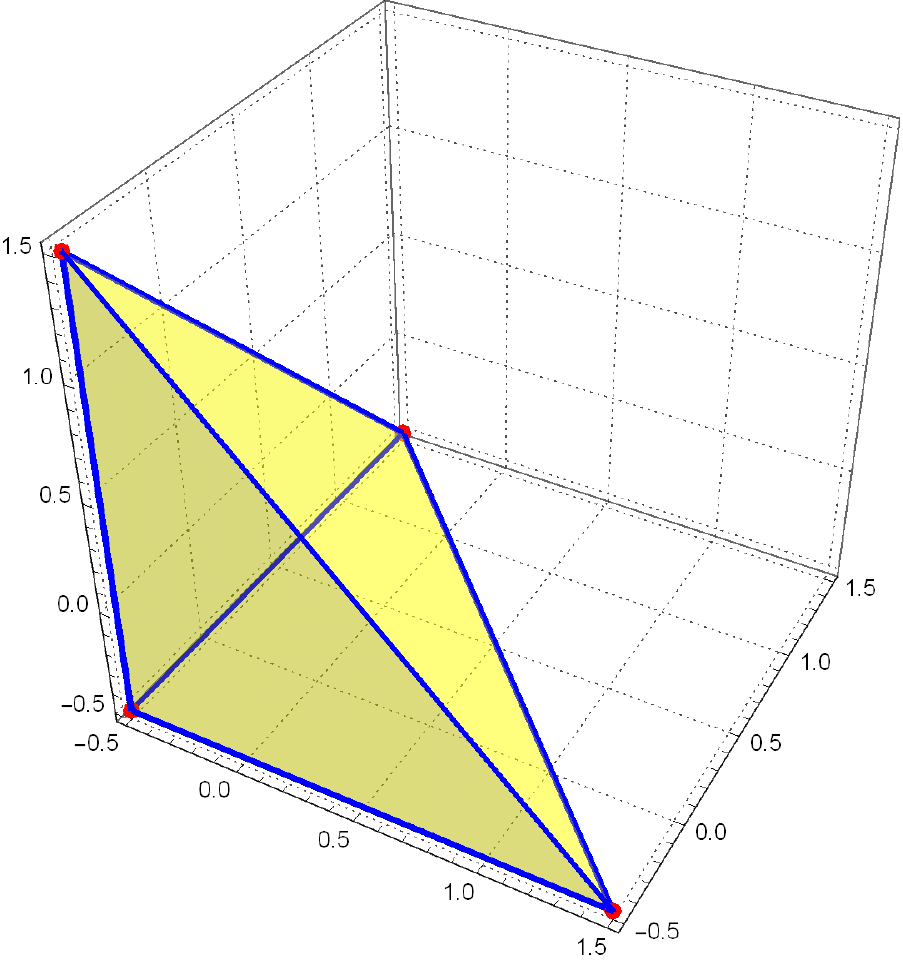}\\
		\includegraphics[scale=.53]{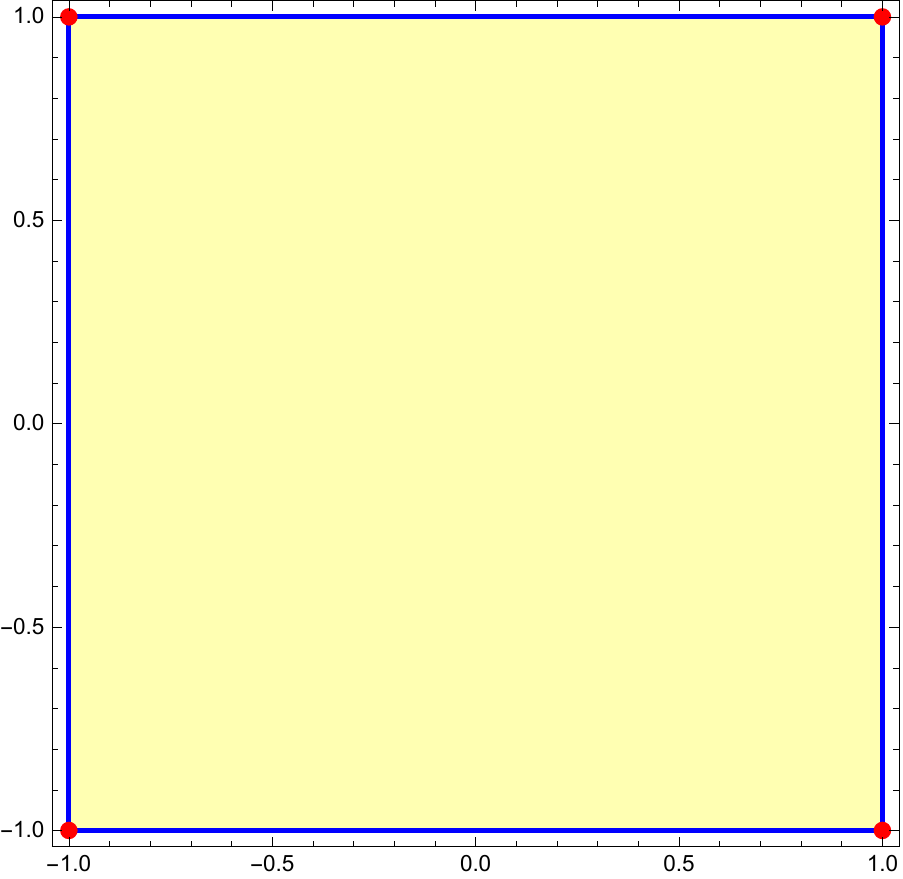}\qquad  \includegraphics[scale=.53]{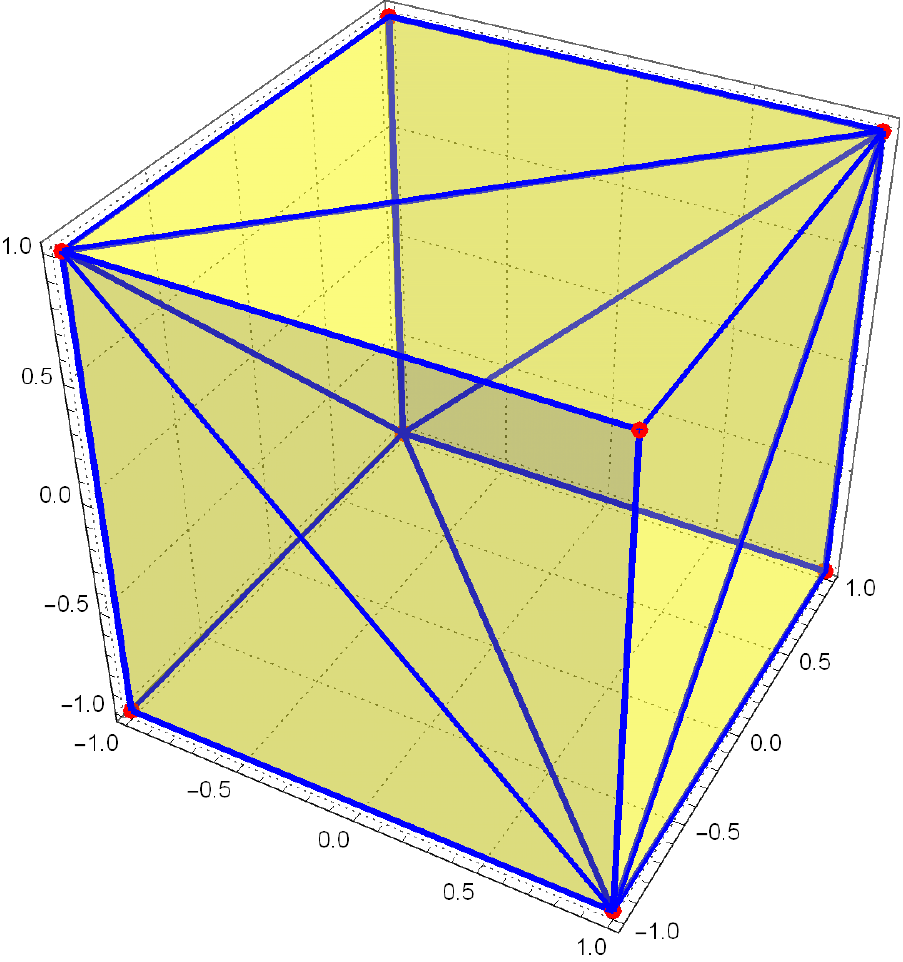}\qquad  \includegraphics[scale=.53]{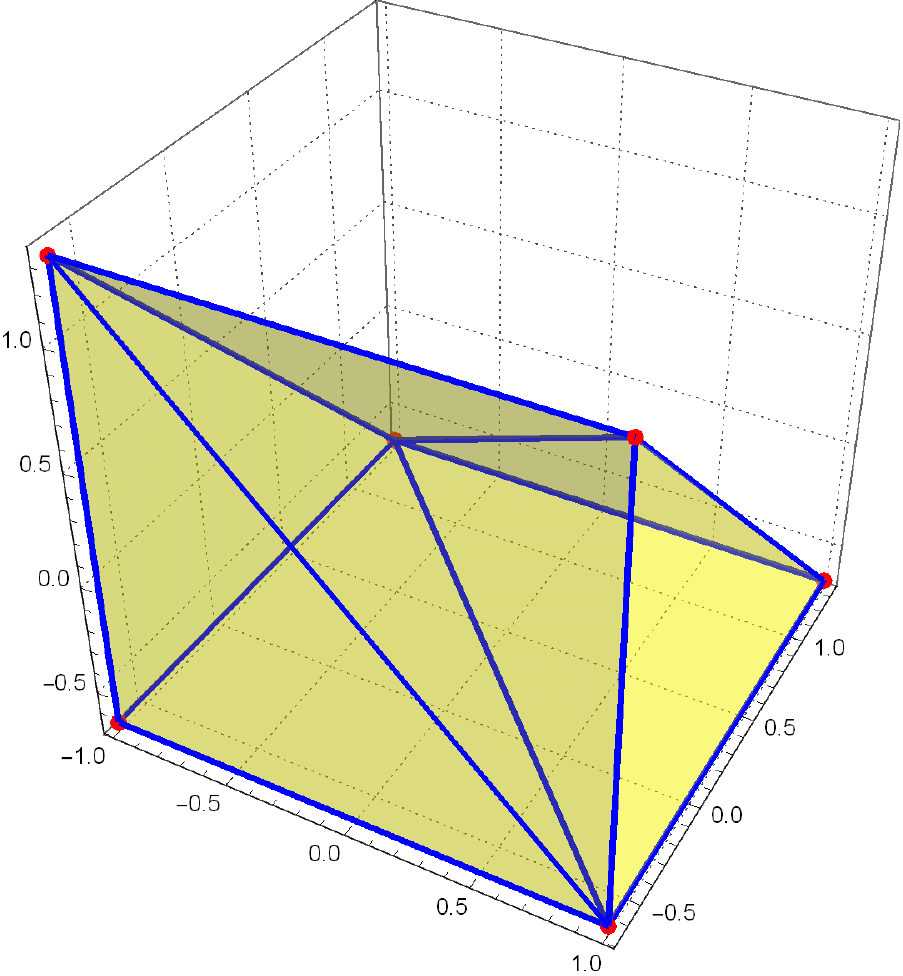}		\caption{Top row: the spectrahedron level of the matrix cuboid base $\mathcal D_{\cuboid,k}(1)$, for $k=3,4$. Bottom row: the spectrahedron level of the matrix cuboids $\mathcal D_{\cuboid,(2,2)}(1)$, $\mathcal D_{\cuboid,(2,2,2)}(1)$, and $\mathcal D_{\cuboid,(2,3)}(1)$. The first two are in fact the matrix cubes $\mathcal D_{\square, 2}(1)$ and $\mathcal D_{\square, 3}(1)$ from \cite{helton2019dilations} (a square and a cube), while the last polyhedron (a triangular prism, the Cartesian product of the triangle and the square) is new.}
		\label{fig:matrix-cuboid}
	\end{center}
\end{figure}

The relation between the notion of incompatibility witness and the matrix cuboid is given in the following result, which generalizes Proposition \ref{prop:incompatibility-witness-vs-cube}. 

\begin{prop}
A $g$-tuple $X \in (\mathcal M_d^{sa})^{\sum_{i=1}^g (k_i-1)}$ is an incompatibility witness if and only if $\mathcal D_{\cuboid, \mathbf k}(1) \subseteq \mathcal D_X(1)$. 
\end{prop}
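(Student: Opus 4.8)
The plan is to reduce this statement to the definitions and to the computation already carried out (in the display preceding Definition of the general incompatibility witness), mirroring the argument of Proposition \ref{prop:incompatibility-witness-vs-cube}. First I would unwind what $\mathcal D_{\cuboid, \mathbf k}(1) \subseteq \mathcal D_X(1)$ means: since the matrix cuboid is, by construction, the maximal matrix convex set over the Cartesian product of simplices $\mathcal D_{\cuboid, k_1}(1) \times \cdots \times \mathcal D_{\cuboid, k_g}(1)$, and since $\mathcal D_X(1)$ is a convex set, this inclusion can be checked on the \emph{extreme points} of $\mathcal D_{\cuboid, \mathbf k}(1)$. By Lemma \ref{lem:cartesianfaces}, the extreme points of the Cartesian product are exactly the tuples of extreme points of the factors, and the extreme points of $\mathcal D_{\cuboid, k_i}(1)$ are the vectors $x_1^{(k_i)}, \ldots, x_{k_i}^{(k_i)}$ from Lemma \ref{lem:extremal-points-jewel-base} (this is the whole point of the definition of $y_j^{(k)}$ via $y_j(i) = x_i(j)$: the cuboid base is the polar/dual of the jewel base, so the defining hyperplanes of one correspond to the extreme points of the other at level one).

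Next I would spell out the membership condition $x \in \mathcal D_X(1)$ for such an extreme point $x = (x^{(1)}_{\epsilon_1}, \ldots, x^{(g)}_{\epsilon_g})$ with $\epsilon \in [\mathbf k]$. Writing $X = (X_{ij})_{i \in [g], j \in [k_i-1]}$, this reads
\begin{equation*}
\sum_{i=1}^g \sum_{j=1}^{k_i-1} x^{(i)}_{\epsilon_i}(j)\, X_{ij} \leq I_d.
\end{equation*}
Using the explicit form $x^{(k)}_j = -\tfrac{k}{2} e_j$ for $j \in [k-1]$ and $x^{(k)}_k = \tfrac{k}{2}(1,\ldots,1)$ from Lemma \ref{lem:extremal-points-jewel-base}, one checks that $x^{(k)}_{\epsilon}(j) = -\tfrac{k}{2} + \tfrac{k}{2}\cdot[\epsilon = k] \cdot(\text{correction})$; more cleanly, $x^{(k)}_\epsilon(j) = \tfrac{k}{2}\big([\epsilon=k] - [\epsilon=j]\big)$, which is precisely $\tfrac12 v^{(k)}_j(\epsilon) \cdot \tfrac{k}{?}$... — in any case the point is that $x^{(k_i)}_{\epsilon_i}(j)$ is, up to the scalars, the coefficient that appears when one expands the LMI $\sum_j v^{(k_i)}_j \otimes X_j$ against a coordinate vector $g_{\epsilon}$. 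Running this substitution, the set of $\prod_i k_i$ inequalities indexed by $\epsilon \in [\mathbf k]$ obtained from the extreme points of the cuboid base is \emph{the same} system that defines membership in $\mathcal D_{\jewel, \mathbf k}(n)$ via Definition \ref{def:jewel} (for $n = d$) — equivalently, the system of $\prod_i (k_i - 1)$ inequalities from Proposition \ref{prop:direct-sum-FS}, since the two systems cut out the same polytope. Hence $\mathcal D_{\cuboid, \mathbf k}(1) \subseteq \mathcal D_X(1)$ holds iff $X \in \mathcal D_{\jewel, \mathbf k}(d)$, i.e., iff $X$ is an incompatibility witness by definition.

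The main obstacle — or rather, the step requiring the most care — is the bookkeeping that matches up the two families of inequalities: the $\prod_i k_i$ conditions coming from the extreme points $x^{(i)}_{\epsilon_i}$ of the cuboid base factors versus the $\prod_i (k_i-1)$ conditions appearing in the explicit description of the matrix jewel $\mathcal D_{\jewel, \mathbf k}$ as a direct sum (Proposition \ref{prop:direct-sum-FS}). These two polytope descriptions must be shown to coincide; the cleanest route is to invoke the level-one duality between $\mathcal D_{\jewel,k}(1)$ and $\mathcal D_{\cuboid,k}(1)$ (noted in the remark after the cuboid definition), together with Lemma \ref{lem:bremner} and Lemma \ref{lem:sumfallsapart}, so that one never has to manipulate the $\epsilon$-indexed inequalities by hand. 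Concretely: $\mathcal D_{\cuboid, \mathbf k}(1)^\circ = \mathcal D_{\jewel, \mathbf k}(1)$ (Cartesian-product-of-simplices dualizes to direct-sum-of-simplices by Lemma \ref{lem:bremner}), and inclusion of a polytope into a scalar spectrahedron is checked on extreme points, which are dual to the facets of the polar — so the extreme points of $\mathcal D_{\cuboid, \mathbf k}(1)$ are exactly the normal vectors of the facets of $\mathcal D_{\jewel, \mathbf k}(1)$, and imposing $\langle e, X \rangle \leq I_d$ for all such $e$ is exactly the defining LMI of $\mathcal D_{\jewel, \mathbf k}(d)$. This gives the equivalence without any index grinding.
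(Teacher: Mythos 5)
There is a genuine error at the heart of your main argument: you identify the extreme points of $\mathcal D_{\cuboid, k_i}(1)$ with the vectors $x_1^{(k_i)}, \ldots, x_{k_i}^{(k_i)}$ of Lemma \ref{lem:extremal-points-jewel-base}. Those are the \emph{vertices} of the jewel base $\mathcal D_{\jewel,k_i}(1)$, and since $\mathcal D_{\cuboid,k_i}(1)=\mathcal D_{\jewel,k_i}(1)^\circ$, they are the \emph{facet normals} of the cuboid base, not its extreme points. The extreme points of $\mathcal D_{\cuboid,k}(1)$ are instead the facet normals of the jewel base, i.e.\ the vectors $w_i^{(k)}$ with $w_i^{(k)}(j)=v_j^{(k)}(i)=-\tfrac2k+2\delta_{i,j}$, which is exactly what the paper's proof uses. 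This is not a cosmetic slip: plugging the $x^{(k_i)}_{\epsilon_i}$ into $\mathcal D_X(1)$ produces the inequalities $-\tfrac{k_i}{2}X_{i,j}\leq I$ and $\tfrac{k_i}{2}\sum_j X_{i,j}\leq I$, which by Proposition \ref{prop:spectrahedronPOVM} characterize $\mathcal D_{\jewel,\mathbf k}(1)\subseteq\mathcal D_X(1)$ (``$X$ is of the form $2E-\tfrac2k I$ for POVMs $E$''), a completely different condition from $X\in\mathcal D_{\jewel,\mathbf k}$. This is also why your attempted matching ``$x^{(k)}_\epsilon(j)=\tfrac12 v^{(k)}_j(\epsilon)\cdot\tfrac{k}{?}$'' cannot be completed: $x^{(k)}_\epsilon(j)$ is not proportional to $v^{(k)}_j(\epsilon)$ (compare $x_1^{(3)}=(-\tfrac32,0)$ with $(v_1^{(3)}(1),v_2^{(3)}(1))=(\tfrac43,-\tfrac23)$).

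Your closing paragraph does state the correct identification (``the extreme points of $\mathcal D_{\cuboid,\mathbf k}(1)$ are exactly the normal vectors of the facets of $\mathcal D_{\jewel,\mathbf k}(1)$''), in direct contradiction with the earlier claim, and that route is essentially the paper's proof: with $w_{\mathbf i}(s,j)=v_j^{(k_s)}(i_s)$ the extreme points of the cuboid, the conditions $\sum_{s,j} w_{\mathbf i}(s,j)X_{s,j}\leq I_d$ for all $\mathbf i\in[\mathbf k]$ are precisely the diagonal entries of the LMI in Equation \eqref{eq:jewel-explicit}, so the system coincides with $X\in\mathcal D_{\jewel,\mathbf k}(d)$ with no rescaling and no mismatch between ``$\prod_i k_i$'' and ``$\prod_i(k_i-1)$'' inequalities to reconcile. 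To repair the proposal, delete the first identification, keep the duality argument $\mathcal D_{\cuboid,\mathbf k}(1)^\circ=\mathcal D_{\jewel,\mathbf k}(1)$ (via Lemma \ref{lem:bremner}), and carry out the (short) verification that the facet normals of the jewel base are the $w_i^{(k)}$ above.
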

\begin{proof}
The condition in the statement can be checked at the level of the extreme points of $\mathcal D_{\cuboid, \mathbf k}(1)$, which are Cartesian products of the extreme points
$$\operatorname{ext}\left( \mathcal D_{\cuboid, k_i}(1) \right) = \{w_1, \ldots, w_k\},$$
where $w_i^{(k)} \in \mathbb C^{k-1}$ are given by
$$w_i^{(k)}(j) := - \frac 2 k + 2 \delta_{i,j}, \qquad \forall i \in [k], \, \forall j \in [k-1].$$
Note that the vectors $w_i^{(k)}$ introduced above and the vectors $v_j^{(k)}$ from Definition \ref{def:jewel} are related by $w_i^{(k)}(j)=v_j^{(k)}(i)$, for all $i \in [k]$ and $j \in [k-1]$. From the definition of the Cartesian product, it follows that the extremal points of the matrix jewel base are 
$$\operatorname{ext}\left( \mathcal D_{\cuboid, \mathbf k}(1) \right) = \{w_{\mathbf i}\}_{\mathbf i \in [\mathbf k]},$$
where 
$$w_{\mathbf i}(s,j) = w^{(k_s)}_{i_s}(j), \qquad \forall j \in [k_s], \forall s \in [g].$$
The condition in the statement reads
$$\left(\forall \mathbf i \in [\mathbf k], \qquad \sum_{s=1}^g \sum_{j=1}^{k_s-1} w_{\mathbf i}(s,j) X_{s,j} \leq I \right)\iff   \sum_{s=1}^g \sum_{j=1}^{k_s-1} v_{s,j} \otimes X_{s,j} \leq I ,$$
where $v_{s,j}(\mathbf i) := w_{\mathbf i}(s,j) =  w_{i_s}^{(k_s)}(j) = v_j^{(k_s)}(i_s)$. Equivalently, we have 
$$v_{s,j} = \underbrace{I \otimes \cdots \otimes I}_{s-1 \text{ times}} \otimes v_j^{(k_s)} \otimes\underbrace{I \otimes \cdots \otimes I}_{k-s \text{ times}},$$
which are precisely the vectors defining the matrix jewel, see Equation \eqref{eq:jewel-explicit}.
\end{proof}

\begin{remark}
The above proof also shows that for $k \in \mathbb N$, $\mathcal D_{\jewel, k}(1)^\circ = \mathcal D_{\cuboid, k}(1)$. However, the matrix cuboid and the matrix dual are not dual to each other as matrix convex sets as discussed in Remark \ref{rem:different-direct-sums}.
\end{remark}

\section{Discussion}\label{sec:discussion}

In this section, we study the shape of the inclusion sets for the matrix jewel, before we conclude with some open questions. Contrary to the matrix diamond appearing in the study of binary measurements \cite{bluhm2018joint}, the matrix jewel has not been studied in the literature on free spectrahedra. In algebraic convexity, the matrix convex sets having received the most attention are the matrix cube \cite{ben-tal2002tractable,helton2019dilations}, the different matricial notions of sphere \cite{helton2019dilations,davidson2016dilations}, and the maximal spectrahedra built upon $\ell_p$ spaces \cite{passer2018minimal}. These examples have symmetries that the matrix jewel lacks, rendering its structure more involved. Therefore, we only have two kind of tools at our disposal at this moment to study the structure of the matrix jewel. The first class are the results from quantum information theory presented in Section \ref{sec:compatibility-from-QIT}. The second class of results, derived in Section \ref{sec:symmetrization}, compares the matrix jewel to more symmetric free spectrahedra.

In terms of lower bounds, we have shown in Proposition \ref{prop:cloning-lower} that $\Gamma^{clone}(g,k_{max}d) \subseteq \Delta(g,d,\mathbf k)$, where $k_{max}$ is the maximal entry of $\mathbf k$. This implies in particular that for the balanced case in which $s_1 = \ldots = s_g$, we have
\begin{equation}\label{eq:LB-cloning}
s_{max} \geq \frac{g+k_{max}d}{g(1+k_{max}d)},
\end{equation}
where $s_{max}$ is the greatest balanced inclusion constant in $\Delta(g,d,\mathbf k)$. We also obtain lower bounds from the symmetrization of the matrix jewel (see Theorem \ref{thm:symmetric-jewel})
\begin{equation} \label{eq:symmetrization_again}
\left(\frac{1}{2d(k_1-1)}, \ldots, \frac{1}{2d(k_g-1)}\right)\in \Delta(g,d,\mathbf{k})
\end{equation}
and from the comparison with the matrix diamond (see Theorem \ref{thm:jewel-vs-diamond}) 
\begin{align}
\Delta(g,d,\mathbf{k}) &\supseteq \left(\frac{1}{(k_1-1)^2}, \ldots, \frac{1}{(k_g-1)^2}\right)\cdot \Delta\left(g,d,2^{\times \sum_{i = 1}^g (k_i - 1)}\right) \nonumber \\ &\supseteq \left(\frac{1}{(k_1-1)^2}, \ldots, \frac{1}{(k_g-1)^2}\right)\cdot\mathrm{QC} _{\sum_{i = 1}^g (k_i - 1)}. \label{eq:LB-jewel-vs-diamond}
\end{align}

Let $g_d$ be the maximal number of MUBs which exist in a given dimension $d$. Then, the results gathered in Section \ref{sec:MUB-bounds} translate into upper bounds on $\Delta(g,d,d^{\times g})$, where $g \leq g_d$. For the balanced case, we know from \cite{Designolle2018} that
\begin{equation*}
s_{max} \leq \frac{g + \sqrt{d}}{g(1+\sqrt{d})}.
\end{equation*}
For the asymmetric case, we have from \cite{zhu2015information} that
\begin{equation*}
\Delta(g,d,d^{\times g}) \subseteq \mathrm{QC}_g, \qquad g \leq g_d.
\end{equation*}
Here,
\begin{equation*}
\mathrm{QC}_g := \Set{s \in [0,1]^g: \sum_{i = 1}^g s_i^2 \leq 1}
\end{equation*}
is the higher dimensional equivalent of the positive quarter of the unit circle in two dimensions. For $g = 2$, we have a tighter upper bound, namely the one from \cite{Carmeli2012} (see Proposition \ref{prop:upper-2-mubs}). Let
\begin{equation*}
A = \Set{s \in [0,1]^2: s_1 +s_2 \leq 1} \cup \Set{s \in [0,1]^2: s_1^2 +s_2^2 + \frac{2(d-2)}{2}(1-s_1)(1-s_2) \leq 1}.
\end{equation*}
Then $A \subseteq \mathrm{QC}_2$ with equality for $d=2$ and strict inclusion for $d>2$ and 
\begin{equation*}
\Delta(2,d,d^{\times 2}) \subseteq A.
\end{equation*}
For more general bounds, we can use Proposition \ref{prop:different-k-gamma-inclusion} together with Theorem \ref{thm:delta-is-gamma}. Let $\mathbf{k}$ such that $k_i \geq 2$ for all $i \in [g]$. Then,
\begin{equation*}
\Delta(g,d,\mathbf{k}) \subseteq \Delta(g,d,2^{\times g}).
\end{equation*}
The right hand side was studied in \cite{bluhm2018joint}. From \cite[Theorem VIII.8]{bluhm2018joint}, which uses results from \cite{passer2018minimal}, we obtain
\begin{equation*}
\Delta(g,d,\mathbf{k}) \subseteq \mathrm{QC}_g \qquad \forall d \geq 2^{\left\lceil\frac{g-1}{2}\right\rceil}.
\end{equation*}
Using the concept of inclusion witness, we can bound $\Delta(g,2,2^{\times g})$ for any $g$. We have seen in Corollary \ref{cor:planar-qubits} that
\begin{equation*}
\Delta(g,2,2^{\times g}) \subseteq \Set{s \in [0,1]^g: \sum_{i = 1}^g s_i \leq \frac{1}{\sin(\pi/(2g))}}.
\end{equation*}

\def\arraystretch{2}
\begin{table}[]
	\begin{tabular}{|r|c|l|}
		\hline
		\multicolumn{3}{|c|}{\cellcolor[HTML]{C0C0C0}\textbf{Lower Bounds}} \\ \hline
		cloning                     & \multicolumn{2}{l|}{$s \geq \frac{g+kd}{g(1+kd)}$}       \\ \hline
		symmetrization              & \multicolumn{2}{l|}{$s \geq \frac{1}{2d(k-1)}$}       \\ \hline
		matrix diamond                 & \multicolumn{2}{l|}{ $s \geq \frac{1}{(k-1)^2\sqrt{g(k-1)}}$}       \\ \hline \hline
		\multicolumn{3}{|c|}{\cellcolor[HTML]{C0C0C0}\textbf{Upper Bounds}} \\ \hline
		anti-commuting unitaries    &                       & if $d \geq 2^{\left\lceil\frac{g-1}{2}\right\rceil}$  \\ \cline{1-1} \cline{3-3} 
		MUBs                        & \multirow{-2}{*}{$s\leq \frac{1}{\sqrt{g}}$}   & if $g \leq \text{max nb.~of MUBs in }\mathbb C^d$ \text{ and $k=d$} \\  \hline
	\end{tabular}
\vspace{0.2cm}
\caption{A comparison of all lower and upper bounds for the maximal $s$ such that $(s, \ldots, s) \in \Delta(g,d,\mathbf{k})$, in the case where $\mathbf{k}= (k, \ldots, k)$.}
\label{tbl:compare-all-bounds}
\end{table}

We gather all these bounds in Table \ref{tbl:compare-all-bounds}. In the case where POVMs have the same number of outcomes $k \geq 3$, it turns out that the bound \eqref{eq:symmetrization_again} obtained by symmetrization is always weaker than the cloning bound \eqref{eq:LB-cloning}. Note, however, that this is no longer the case for $\mathbf{k}$ in which not all entries are the same. We compare the cloning bound with the bound \eqref{eq:LB-jewel-vs-diamond} coming from the comparison with the matrix diamond in Figure \ref{fig:LB-compare}. It turns out that in the case where $k=d$ (the number of outcomes matches the dimension), the cloning bound always outperforms the diamond bound, except for qubits ($d=2$).

\begin{figure}
	\centering
	\includegraphics[width=0.4\linewidth]{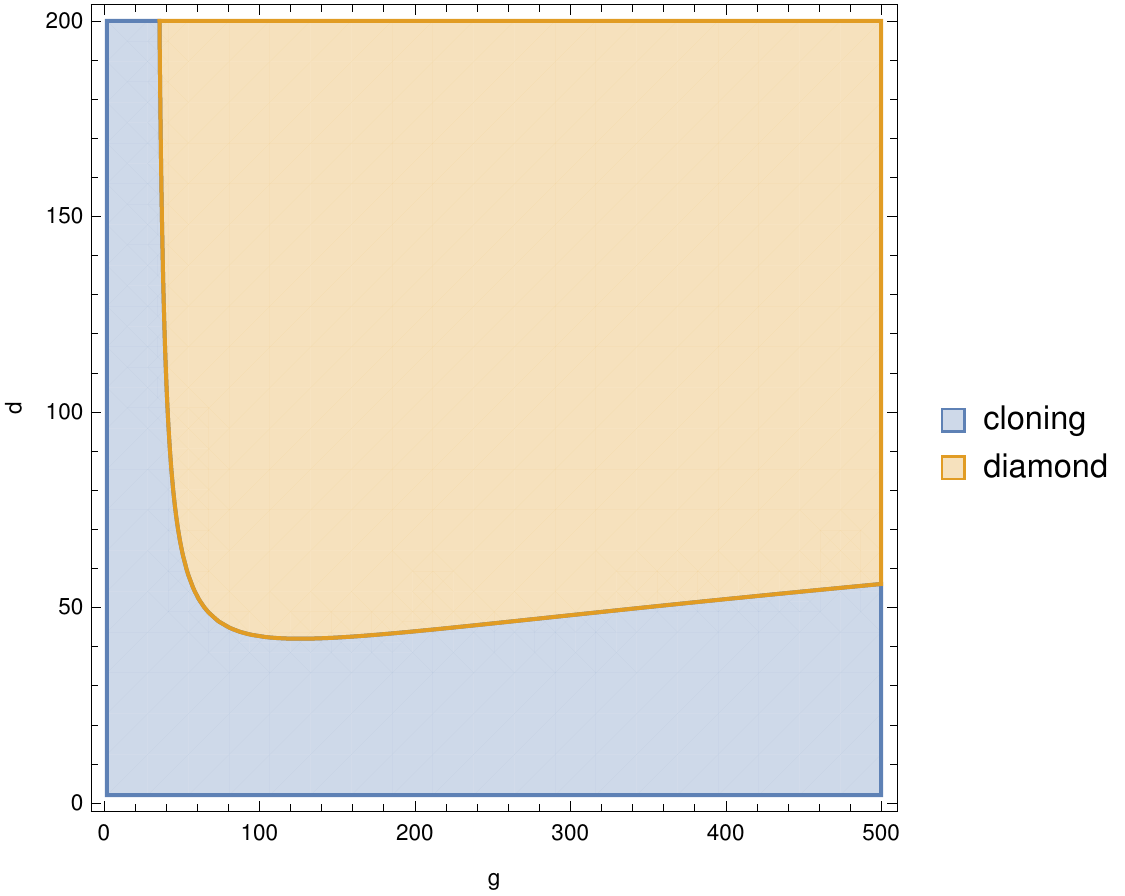} \qquad \qquad 	\includegraphics[width=0.4\linewidth]{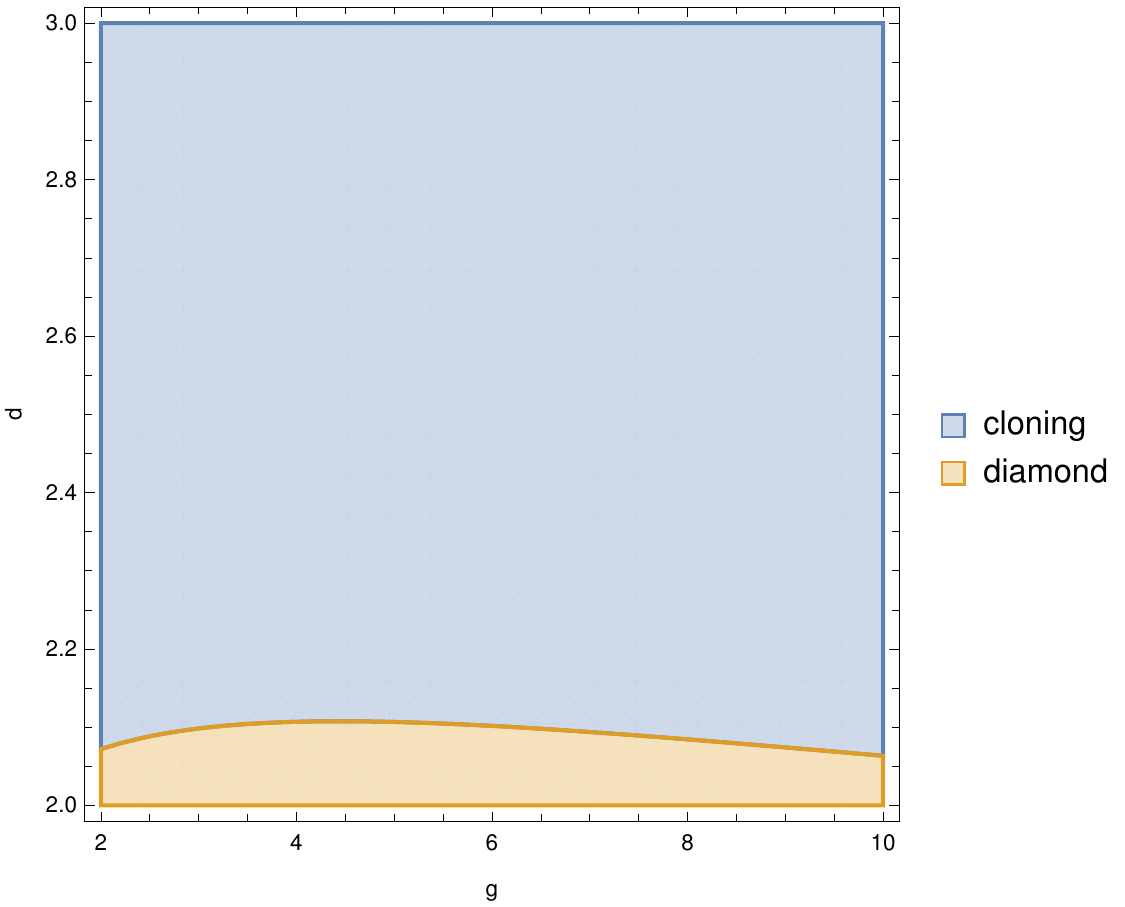}
	\caption{A comparison of the two lower bounds from equations \eqref{eq:LB-cloning} and \eqref{eq:LB-jewel-vs-diamond}, coming respectively from quantum cloning and from the comparison to the matrix diamond. On the left panel, we consider the case of $g$ POVMs on $\mathbb C^d$ with $k=3$ outcomes, while on the right panel we consider the case $k=d$. The regions correspond to the better (i.e.~larger) lower bound.}
	\label{fig:LB-compare}
\end{figure}

For the balanced compatibility region $\Gamma(g,d, \mathbf{k})$, the lower bounds obtained via the symmetrization of the matrix jewel in Theorems \ref{thm:symmetric-jewel} and \ref{thm:jewel-vs-diamond} are new and improve over the lower bounds from asymmetric cloning for suitable choices of parameters (see Figure \ref{fig:LB-compare}). The correspondence in Theorem \ref{thm:delta-is-gamma} yields
\begin{equation*}
\left(\frac{1}{2d(k_1-1)}, \ldots, \frac{1}{2d(k_g-1)}\right)\in \Gamma(g,d,\mathbf{k})
\end{equation*}
and
\begin{equation*}
\Gamma(g,d,\mathbf{k}) \supseteq \left(\frac{1}{(k_1-1)^2}, \ldots, \frac{1}{(k_g-1)^2}\right)\cdot\mathrm{QC} _{\sum_{i = 1}^g (k_i - 1)}. 
\end{equation*}

As mentioned at the beginning of this section, all the bounds on the inclusion set for the matrix jewel we have obtained here stem either from quantum information theory or from some symmetrization technique. We leave it as an open question whether it is possible to obtain stronger bounds from the study of free spectrahedra, which would then have interesting consequences for quantum information theory. We also leave open the study of the matrix cuboid from Section \ref{sec:incompatibility-witnesses-general}, which can be seen as a generalization of the matrix cube. In particular, the inclusion constants for such free spectrahedra would allow to obtain, via Proposition \ref{prop:incompatibility-witness-vs-cube}, efficient criteria for deciding whether a tuple of matrices is an incompatibility witness for general POVMs.

\bigskip

\noindent {\it Acknowledgments.} A.B. acknowledges support from the ISAM Graduate Center at the Technische Universit\"at M\"unchen and financial support from the VILLUM FONDE Nvia the QMATH Centre of Excellence (Grant no. 10059). Furthermore, A.B. acknowledges support from the QuantERA ERA-NET Cofund in Quantum Technologies implemented within the European Union's Horizon 2020 Programme (QuantAlgo project) via the Innovation Fund Denmark.  I.N. and A.B. would like to thank Guillaume Aubrun for pointing us to the direct sum of convex sets. I.N.'s research has been supported by the ANR projects {StoQ} (grant number ANR-14-CE25-0003-01) and {NEXT} (grant number ANR-10-LABX-0037-NEXT), by the PHC Sakura program (grant number 38615VA), and by the UEFISCDI (grant number PN-III-P1-1.1-MCT-2018-0015). Both authors also acknowledge the hospitality of M.~Jivulescu and N.~Lupa from the Universitatea Politehnic\u{a} Timi\c{s}oara, where most of this work was done. Finally, both authors would like to thank the anonymous referees for their constructive comments. The overall quality of the presentation has improved significantly thanks to their reports.

\bibliographystyle{alpha}
\bibliography{../spectralit}

\newcommand{\etalchar}[1]{$^{#1}$}
\begin{thebibliography}{DDOSS17}

\bibitem[Arv72]{Arveson1972}
William Arveson.
\newblock {Subalgebras of C$^\ast$-algebras II}.
\newblock {\em Acta Mathematica}, 128:271--308, 1972.

\bibitem[Bar02]{Barvinok2002}
Alexander Barvinok.
\newblock {\em A course in convexity}, volume~54 of {\em Graduate Studies in
  Mathematics}.
\newblock American Mathematical Society, 2002.

\bibitem[BCP{\etalchar{+}}14]{Brunner2014}
Nicolas {Brunner}, Daniel {Cavalcanti}, Stefano {Pironio}, Valerio {Scarani},
  and Stephanie {Wehner}.
\newblock {Bell nonlocality}.
\newblock {\em Reviews of Modern Physics}, 86:419--478, 2014.

\bibitem[BN18]{bluhm2018joint}
Andreas Bluhm and Ion Nechita.
\newblock Joint measurability of quantum effects and the matrix diamond.
\newblock {\em Journal of Mathematical Physics}, 59(11):112202, 2018.

\bibitem[Boh28]{Bohr1928}
Niels Bohr.
\newblock The quantum postulate and the recent development of atomic theory.
\newblock {\em Nature}, 121(3050):580--590, 1928.

\bibitem[Bre97]{Bremner1997}
David~D. Bremner.
\newblock On the complexity of vertex and facet enumeration for complex
  polytopes.
\newblock {\em Ph.D. thesis, School of Computer Science, McGill University,
  Monr{\'e}al, Canada}, 1997.

\bibitem[BTN02]{ben-tal2002tractable}
Aharon Ben-Tal and Arkadi Nemirovski.
\newblock On tractable approximations of uncertain linear matrix inequalities
  affected by interval uncertainty.
\newblock {\em SIAM Journal on Optimization}, 12(3):811--833, 2002.

\bibitem[CHT12]{Carmeli2012}
Claudio Carmeli, Teiko Heinosaari, and Alessandro Toigo.
\newblock Informationally complete joint measurements on finite quantum
  systems.
\newblock {\em Physical Review A}, 85:012109, Jan 2012.

\bibitem[CHT18]{carmeli2018quantum}
Claudio Carmeli, Teiko Heinosaari, and Alessandro Toigo.
\newblock Quantum incompatibility witnesses.
\newblock {\em arXiv preprint arXiv:1812.02985}, 2018.

\bibitem[DDOSS17]{davidson2016dilations}
Kenneth~R. Davidson, Adam Dor-On, Orr~Moshe Shalit, and Baruch Solel.
\newblock Dilations, inclusions of matrix convex sets, and completely positive
  maps.
\newblock {\em International Mathematics Research Notices},
  2017(13):4069--4130, 2017.

\bibitem[DEB{\.Z}10]{durt2010mutually}
Thomas Durt, Berthold-Georg Englert, Ingemar Bengtsson, and Karol
  {\.Z}yczkowski.
\newblock On mutually unbiased bases.
\newblock {\em International Journal of Quantum Information}, 8(04):535--640,
  2010.

\bibitem[DSFB18]{Designolle2018}
S\'ebastien Designolle, Paul Skrzypczyk, Florian Fr\"owis, and Nicolas Brunner.
\newblock Quantifying measurement incompatibility of mutually unbiased bases.
\newblock {\em arXiv preprint arXiv:1805.09609}, 2018.

\bibitem[EW97]{Effros1997}
Edward~G. Effros and Soren Winkler.
\newblock Matrix convexity: Operator analogues of the bipolar and {Hahn–Banach}
  theorems.
\newblock {\em Journal of Functional Analysis}, 144(1):117 -- 152, 1997.

\bibitem[Fin82]{Fine1982}
Arthur Fine.
\newblock {Hidden variables, joint probability, and the Bell inequalities}.
\newblock {\em Physical Review Letters}, 48(5):291--295, 1982.

\bibitem[Has17]{hashagen2016universal}
Anna-Lena Hashagen.
\newblock Universal asymmetric quantum cloning revisited.
\newblock {\em Quantum Information \& Computation}, 17(9-10):0747--0778, 2017.

\bibitem[Hei27]{Heisenberg1927}
Werner Heisenberg.
\newblock {{\"U}ber den anschaulichen Inhalt der quantentheoretischen Kinematik
  und Mechanik}.
\newblock {\em Zeitschrift f{\"u}r Physik}, 43(3):172--198, 1927.

\bibitem[HKM13]{helton_matricial_2013}
J.~William Helton, Igor Klep, and Scott McCullough.
\newblock The matricial relaxation of a linear matrix inequality.
\newblock {\em Mathematical Programming}, 138(1-2):401--445, 2013.

\bibitem[HKMS19]{helton2019dilations}
J.~William Helton, Igor Klep, Scott McCullough, and Markus Schweighofer.
\newblock Dilations, linear matrix inequalities, the matrix cube problem and
  beta distributions.
\newblock {\em Memoirs of the American Mathematical Society}, 257(1232), 2019.

\bibitem[HKR15]{Heinosaari2015}
Teiko Heinosaari, Jukka Kiukas, and Daniel Reitzner.
\newblock Noise robustness of the incompatibility of quantum measurements.
\newblock {\em Physical Review A}, 92:022115, 2015.

\bibitem[HMZ16]{Heinosaari2016}
Teiko Heinosaari, Takayuki Miyadera, and M\'ario Ziman.
\newblock An invitation to quantum incompatibility.
\newblock {\em Journal of Physics A: Mathematical and Theoretical},
  49(12):123001, 2016.

\bibitem[HZ11]{Heinosaari2011}
Teiko Heinosaari and M{\'a}rio Ziman.
\newblock {\em The Mathematical Language of Quantum Theory}.
\newblock Cambridge University Press, 2011.

\bibitem[Jen18]{jencova2018incompatible}
Anna Jen{\v{c}}ov{\'a}.
\newblock Incompatible measurements in a class of general probabilistic
  theories.
\newblock {\em Physical Review A}, 98(1):012133, 2018.

\bibitem[Kay16]{kay2016optimal}
Alastair Kay.
\newblock Optimal universal quantum cloning: Asymmetries and fidelity measures.
\newblock {\em Quantum Information \& Computation}, 16(11 \& 12):0991--1028,
  2016.

\bibitem[Key02]{Keyl2002}
Michael Keyl.
\newblock Fundamentals of quantum information theory.
\newblock {\em Physics Reports}, 369(5):431--548, 2002.

\bibitem[KHF14]{kunjwal2014quantum}
Ravi Kunjwal, Chris Heunen, and Tobias Fritz.
\newblock Quantum realization of arbitrary joint measurability structures.
\newblock {\em Physical Review A}, 89(5):052126, 2014.

\bibitem[Pas18]{Passer2018a}
Benjamin Passer.
\newblock Shape, scale, and minimality of matrix ranges.
\newblock {\em arXiv preprint arXiv:1803:09212}, 2018.

\bibitem[Pau03]{Paulsen2002}
Vern Paulsen.
\newblock {\em Completely Bounded Maps and Operator Algebras}, volume~78 of
  {\em Cambridge Studies in Advanced Mathematics}.
\newblock Cambridge University Press, 2003.

\bibitem[PP19]{Passer2019}
Benjamin Passer and Vern~I. Paulsen.
\newblock Matrix range characterizations of operator system properties.
\newblock {\em arXiv preprint arXiv:1912.06279}, 2019.

\bibitem[PSS18]{passer2018minimal}
Benjamin Passer, Orr~Moshe Shalit, and Baruch Solel.
\newblock Minimal and maximal matrix convex sets.
\newblock {\em Journal of Functional Analysis}, 274:3197--3253, 2018.

\bibitem[RS80]{Reed1980}
Michael Reed and Barry Simon.
\newblock {\em Methods of modern mathematical physics: Functional Analysis}.
\newblock Academic Press, revised and enlarged edition edition, 1980.

\bibitem[S{\'C}HM14]{studzinski2014group}
Micha{\l} Studzi{\'n}ski, Piotr {\'C}wikli{\'n}ski, Micha{\l} Horodecki, and
  Marek Mozrzymas.
\newblock {Group-representation approach to $1 \to N$ universal quantum cloning
  machines}.
\newblock {\em Physical Review A}, 89(5):052322, 2014.

\bibitem[ULMH16]{uola2016adaptive}
Roope Uola, Kimmo Luoma, Tobias Moroder, and Teiko Heinosaari.
\newblock Adaptive strategy for joint measurements.
\newblock {\em Physical Review A}, 94(2):022109, 2016.

\bibitem[UMG14]{Uola2014}
Roope Uola, Tobias Moroder, and Otfried G\"uhne.
\newblock Joint measurability of generalized measurements implies classicality.
\newblock {\em Physical Review Letters}, 113:160403, 2014.

\bibitem[Wat18]{watrous2018theory}
John Watrous.
\newblock {\em The Theory of Quantum Information}.
\newblock Cambridge University Press, 2018.

\bibitem[Wer98]{Werner1998}
Reinhard~F. Werner.
\newblock Optimal cloning of pure states.
\newblock {\em Physical Review A}, 58(3):1827--1832, 1998.

\bibitem[WF89]{wootters1989optimal}
William~K. Wootters and Brian~D. Fields.
\newblock Optimal state-determination by mutually unbiased measurements.
\newblock {\em Annals of Physics}, 191(2):363--381, 1989.

\bibitem[Wit84]{Wittstock1984}
Gerd Wittstock.
\newblock On matrix order and convexity.
\newblock In Klaus-Dieter Bierstedt and Benno Fuchssteiner, editors, {\em
  Functional Analysis: Surveys and Recent Results III}, volume~90 of {\em
  North-Holland Mathematics Studies}, pages 175 -- 188. North-Holland, 1984.

\bibitem[ZHC16]{zhu2016universal}
Huangjun Zhu, Masahito Hayashi, and Lin Chen.
\newblock Universal steering criteria.
\newblock {\em Physical Review Letters}, 116(7):070403, 2016.

\bibitem[Zhu15]{zhu2015information}
Huangjun Zhu.
\newblock Information complementarity: A new paradigm for decoding quantum
  incompatibility.
\newblock {\em Scientific reports}, 5:14317, 2015.

\end{thebibliography}
\end{document}